\newif\ifDRAFT
\newif\ifSUB
\newif\ifCONF
\newif\ifJOURNAL
\newcommand{\normF}[1]{{\| #1 \|}_F}
\newcommand{\norm}[1]{\lVert #1 \rVert}
\DeclareMathOperator{\nnz}{\mathtt{nnz}}
\DeclareMathOperator{\diag}{\mathtt{diag}}
\DeclareMathOperator{\colspace}{\mathtt{colspace}}
\DeclareMathOperator{\range}{\mathtt{range}}
\DeclareMathOperator{\rowspan}{\mathtt{rowspan}}
\DeclareMathOperator{\tr}{\mathtt{tr}}
\DeclareMathOperator{\rank}{\mathtt{rank}}
\DeclareMathOperator*{\argmin}{argmin}
\DeclareMathOperator{\sr}{\mathtt{sr}}
\DeclareMathOperator{\sd}{\mathtt{sd}}
\newcommand\piloi{{\tt piloi}}
\newcommand\loi{{\tt loi}}
\newcommand\roi{{\tt roi}}
\newcommand\piroi{{\tt piroi}}
\newcommand\tO{\tilde{O}}
\newcommand\tA{\tilde{A}}
\newcommand\ty{\tilde{y}}
\newcommand\hA{\hat{A}}
\newcommand\hb{\hat{b}}
\newcommand\hB{\hat{B}}
\newcommand\tx{\tilde{x}}
\newcommand\tX{\tilde{X}}
\newcommand\tY{\tilde{Y}}
\newcommand\tW{\tilde{W}}
\newcommand\tH{\tilde{H}}
\newcommand\tZ{\tilde{Z}}
\newcommand\hS{\hat{S}}
\newcommand\hR{\hat{R}}
\newcommand\Iden{{I}}
\newcommand\twomat[2]{\left[\begin{smallmatrix} #1 \\ #2 \end{smallmatrix} \right] }
\newcommand\twomatr[2]{\left[\begin{smallmatrix} #1 & #2 \end{smallmatrix} \right] }
\newcommand\nm[1]{\norm{\cdot}_{#1} }
\newcommand{\myvertiii}[1]{{\vert\kern-0.25ex\vert\kern-0.25ex\vert #1
    \vert\kern-0.25ex\vert\kern-0.25ex\vert}}
\newcommand{\marrow}{\marginpar[\hfill$\longrightarrow$]{$\longleftarrow$}}
\newcommand{\niceremark}[3]
   {\textcolor{red}{\textsc{#1 #2:} \marrow\textsf{#3}}}
\newcommand{\niceremarkblue}[3]
   {\textcolor{blue}{\textsc{#1 #2:} \marrow\textsf{#3}}}
\newcommand{\Ken}[2][says]{\niceremark{Ken}{#1}{#2}}
\newcommand{\David}[2][says]{\niceremark{David}{#1}{#2}}
\newcommand{\Haim}[2][says]{\niceremarkblue{Haim}{#1}{#2}}
\newcommand{\Ken}[1]{}
\newcommand{\David}[1]{}
\newcommand{\Haim}[1]{}
\newcommand{\poly}{{\mathrm{poly}}}
\newcommand{\eps}{\varepsilon}
\newcommand{\R}{{\mathbb R}}
\newtheorem{fact}[theorem]{Fact}
\newtheorem{theorem}{Theorem}
\newtheorem{definition}[theorem]{Definition}
\newtheorem{lemma}[theorem]{Lemma}
\newtheorem{corollary}[theorem]{Corollary}
\newtheorem{fact}[theorem]{Fact}
\newtheorem{remk}[theorem]{Remark}
\newenvironment{remark}{\begin{remk}
\begin{normalfont}}{\end{normalfont}
\end{remk}}
\def\FullBox{\hbox{\vrule width 8pt height 8pt depth 0pt}}
\def\qed{\ifmmode\qquad\FullBox\else{\unskip\nobreak\hfil
\penalty50\hskip1em\null\nobreak\hfil\FullBox
\parfillskip=0pt\finalhyphendemerits=0\endgraf}\fi}
\def\qedsketch{\ifmmode\Box\else{\unskip\nobreak\hfil
\penalty50\hskip1em\null\nobreak\hfil$\Box$
\parfillskip=0pt\finalhyphendemerits=0\endgraf}\fi}
\newenvironment{proof}{\begin{trivlist} \item {\bf Proof:~~}}
  {\qed\end{trivlist}}
\begin{document}

\title{Sharper Bounds for Regularized Data Fitting}

\author{Haim Avron\thanks{Tel Aviv University} \\
\and
Kenneth L. Clarkson\thanks{IBM Research - Almaden} \\
\and
David P. Woodruff\thanks{IBM Research - Almaden}}
\date{}

\maketitle

\begin{abstract}

We study matrix sketching methods for regularized variants of linear regression,
low rank approximation, and canonical correlation analysis. Our main focus
is on sketching techniques which preserve the objective function value
for regularized problems, which is an area
that has remained largely unexplored. 
We study regularization
both in a fairly broad setting, and in the specific context of the popular and widely
used technique of ridge regularization; for the latter, as applied to each of these
problems, we show algorithmic resource bounds in which the
{\em statistical dimension} appears in places where in previous bounds
the rank would appear.
The statistical dimension is always smaller than the rank, and decreases as
the amount of regularization increases. In particular, for the ridge low-rank
approximation problem
$\min_{Y,X} \norm{YX - A}_F^2 + \lambda\norm{Y}_F^2 + \lambda\norm{X}_F^2$,
where $Y\in\R^{n\times k}$ and $X\in\R^{k\times d}$,
we give an approximation algorithm needing 
$O(\nnz(A)) + \tO((n+d)\eps^{-1}k \min\{k, \eps^{-1}\sd_\lambda(Y^*)\})+ \poly(\sd_\lambda(Y^*) \epsilon^{-1})$ 
time, where $s_{\lambda}(Y^*)\le k$ is the statistical dimension of $Y^*$, $Y^*$ is an optimal $Y$,
$\eps$ is an error parameter, and $\nnz(A)$ is the number of nonzero  entries of $A$. This is faster than prior work, even when $\lambda=0$.
We also study regularization in a much more general setting. For example, we obtain
sketching-based algorithms for the low-rank approximation problem
$\min_{X,Y} \norm{YX - A}_F^2 + f(Y,X)$
where $f(\cdot,\cdot)$ is a regularizing function satisfying some very general
conditions (chiefly, invariance under orthogonal transformations).

\end{abstract}

\thispagestyle{empty}
\newpage
\setcounter{page}{1}

\section{Introduction}

The technique of matrix sketching, such as the use of random projections,
has been shown in recent years to be a powerful tool for accelerating many important
statistical learning techniques. Indeed, recent work has proposed highly efficient algorithms
for, among other problems, linear regression, low-rank approximation~\cite{Mbook, Wbook} and
canonical correlation analysis~\cite{ABTZ14}. 
In addition to being a powerful theoretical tool, sketching is also an 
applied one; see~\cite{YMM16} for a discussion of state-of-the-art performance for important 
techniques in statistical learning.

Many statistical learning techniques can benefit substantially, in their quality of results, by using some form of regularization.
Regularization can also help by reducing the computing resources needed for these techniques.
While there has been some
prior exploration in this area, as discussed in \S\ref{sub sec results}, commonly it has featured sampling-based techniques, often
focused on regression, and often with analyses using distributional assumptions about the input (though
such assumptions are not always necessary). Our study
considers fast (linear-time) sketching methods, a breadth of problems, and makes no distributional assumptions.
Also, where most prior work studied the distance of an approximate solution to the optimum, our guarantees
are concerning approximation with respect to a relevant loss function - see below for more discussion. 

%
%
%
%
%
It is a long-standing theme in the study of randomized algorithms that structures that aid statistical inference can
also aid algorithm design, so that for example, VC dimension and sample compression have been applied
in both areas, and more recently, in cluster analysis the algorithmic advantages of natural statistical assumptions
have been explored. \Ken{not bothering with cites here...}This work is another contribution to this theme. Our
high-level goal in this work 
is to study generic conditions on sketching matrices that can be applied to a wide array of regularized problems
in linear algebra, preserving their objective function values, and exploiting the power of regularization.

\subsection{Results}
\label{sub sec results}
We study regularization both in a fairly broad setting, and in the specific context of the
popular and widely used technique of ridge regularization.
We discuss the latter in sections~\ref{sec ridge regression},~\ref{sec glra} and~\ref{sec:cca};
our main results for ridge regularization,
Theorem~\ref{thm reg stacked}, on linear regression,
Theorem~\ref{thm lowr ridge}, on low-rank approximation,
and
Theorem~\ref{thm approx reg cca}, on canonical correlation analysis,
show that for ridge regularization,
the sketch size need only be a function of the {\em statistical dimension} of the input matrix,
as opposed to its rank, as is common in the analysis of sketching-based methods.
Thus, ridge regularization improves the performance of sketching-based methods.

Next, we consider regularizers under rather general assumptions involving
invariance under left and/or right multiplication by orthogonal matrices,
and show that sketching-based methods can be applied, to regularized
multiple-response regression in \S\ref{sec reg mr gen} and to regularized low-rank approximation,
in \S\ref{sec low-rank gen}. Here we obtain running times in terms of the statistical dimension. 
%
Along the way, in \S\ref{subsec low-rank gen svd},
we give a ``base case'' algorithm for reducing low-rank approximation,
via singular value decomposition, to the special case of diagonal matrices.

Throughout we rely on sketching matrix constructions
involving \emph{sparse embeddings} \cite{cw13,nn13,mm13,bdn15,c16},
and on \emph{Sampled Randomized Hadamard Transforms} (SRHT)
\cite{AC06,S06,DMM06a,DMMS07,Tro11,bg12,mdmw12,ldfu13}.
Here for matrix $A$, its sketch is $SA$, where $S$ is a sketching matrix.
The sketching constructions mentioned can be combined to yield a sketching matrix $S$
such that the sketch of matrix $A$, which is simply $SA$, can be computed
in time $O(\nnz(A))$, which is proportional to the number of nonzero entries of $A$.
Moreover, the number of rows of $S$ is small.
Corollary~\ref{cor size of S} summarizes our use of these constructions as applied
to ridge regression.

A key property of a sketching matrix $S$ is that it be a \emph{subspace embedding},
so that $\norm{SAx}_2 \approx\norm{Ax}_2$ for all $x$.
Definition~\ref{def subs embed} gives the technical definition,
and Definition~\ref{def aff emb} gives the definition of the related property
of an \emph{affine embedding} that we
also use. Lemma~\ref{lem AE} summarizes the use of sparse embeddings and
SRHT for subspace and affine embeddings.

%
\Haim{where do we mention this: Some applications for the problem of \S\ref{sec glra} are given in \cite{cabral2013unifying}.}

In the following we give our main results in more detail.
However, before doing so, we need the formal definition of the statistical dimension.

\begin{definition}[Statistical Dimension]
For real value $\lambda\ge 0$ and rank-$k$
matrix $A$ with singular values $\sigma_i, i\in [k]$, the quantity
$\sd_\lambda(A) \equiv \sum_{i\in [k]}1/(1+\lambda/\sigma_i^2)$
is the \emph{statistical dimension} (or \emph{effective dimension}, or ``hat matrix trace'') of the ridge regression problem with
regularizing weight~$\lambda$.
\end{definition}
Note that $\sd_\lambda(A)$ is decreasing in
$\lambda$, with maximum $\sd_0(A)$ equal to the rank of $A$. Thus a dependence of resources
on $\sd_\lambda(A)$ instead of the rank is never worse, and will be much better for large~$\lambda$.

In \S\ref{sec sd est}, we give an algorithm for estimating $\sd_\lambda(A)$ to within a constant factor,
in $O(\nnz(A))$ time, for $\sd_\lambda(A)\le (n+d)^{1/3}$. Lnowing $\sd_\lambda(A)$
to within a constant factor allows us to set various parameters of our algorithms.

\subsubsection{Ridge Regression}

In \S\ref{sec ridge regression} we apply sketching to reduce from one ridge regression problem to another one with fewer rows.

\begin{theorem}[Less detailed version of Thm.~\ref{thm reg stacked}]
Given $\eps\in (0,1]$ and $A\in\R^{n\times d}$,
there is a sketching distribution over $S\in\R^{m\times n}$,
where $m = \tO(\eps^{-1}\sd_\lambda(A))$,
such that $SA$ can be computed in
\[
O(\nnz(A)) + d \cdot \poly(\sd_\lambda(A)/\eps)
\]
time, and with constant probability
$
\tx\equiv \argmin_{x\in\R^d} \norm{S(Ax-b)}^2 + \lambda\norm{x}^2
$
satisfies
\[
\norm{A\tx-b}^2 + \lambda\norm{\tx}^2 \le (1+\eps)\min_{x\in\R^d} \norm{Ax-b}^2 + \lambda\norm{x}^2.
\]
Here $\poly(\kappa)$ denotes some polynomial function of the value $\kappa$.
\end{theorem}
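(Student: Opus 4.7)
The plan is to reformulate the regularized problem as an ordinary least squares problem via the standard stacking trick: setting $\bar A \equiv \twomat{A}{\sqrt{\lambda}\, I_d}$ and $\bar b \equiv \twomat{b}{0}$, one has $\|Ax-b\|^2 + \lambda\|x\|^2 = \|\bar A x - \bar b\|^2$, and the sketched objective $\|S(Ax-b)\|^2 + \lambda\|x\|^2$ equals $\|\bar S(\bar A x - \bar b)\|^2$ where $\bar S \equiv \diag(S, I_d)$. Then by the standard normal-equations argument, a minimizer of the sketched system is a $(1+\eps)$-approximate minimizer of the original whenever $\bar S$ is an $\eps$-affine embedding for $(\bar A, \bar b)$, and by Lemma~\ref{lem AE} it suffices for $\bar S$ to be a subspace embedding for the column span of $[\bar A, \bar b]$.

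The key step is to show that the subspace embedding requirement on $\bar S$ translates to a condition on $S$ that depends on $\sd_\lambda(A)$ rather than on $\rank(A)$. Since $\|\bar S \bar A x\|^2 = \|SAx\|^2 + \lambda\|x\|^2$ and $\|\bar A x\|^2 = \|Ax\|^2 + \lambda\|x\|^2$, the requirement becomes $\bigl|\|SAx\|^2 - \|Ax\|^2\bigr| \le \eps(\|Ax\|^2 + \lambda\|x\|^2)$ for all $x$ in the relevant span (the extra column $\bar b$ enlarges the span by one and increases $m$ by a constant factor). Taking the full SVD $A = U\Sigma V^T$ with $V\in\R^{d\times d}$ orthogonal, and substituting $z = (\Sigma^2+\lambda I)^{1/2}V^T x$, this becomes $\bigl|\|SWz\|^2 - \|Wz\|^2\bigr| \le \eps\|z\|^2$ for all $z$, where $W \equiv U\Sigma(\Sigma^2+\lambda I)^{-1/2}$. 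Since $\|W\|_2 \le 1$ and $\|W\|_F^2 = \sum_i \sigma_i^2/(\sigma_i^2+\lambda) = \sd_\lambda(A)$, the matrix $W$ has stable rank at most $\sd_\lambda(A)$, and one can invoke a subspace-embedding/approximate-matrix-product bound (as summarized in Corollary~\ref{cor size of S}) to conclude that $m = \tO(\eps^{-1}\sd_\lambda(A))$ rows suffice.

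The running time follows from composing the sketches: a sparse embedding first reduces the row count of $A$ from $n$ to $\poly(\sd_\lambda(A)/\eps)$ in $O(\nnz(A))$ time, after which an SRHT applied to a $\poly(\sd_\lambda(A)/\eps)\times d$ matrix yields the final $\tO(\eps^{-1}\sd_\lambda(A))\times d$ sketch in $d\cdot\poly(\sd_\lambda(A)/\eps)$ time, giving the claimed total.

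The main obstacle is the dimension-reduction argument: cleanly showing that the $\lambda\|x\|^2$ slack in the embedding condition effectively replaces $\rank(A)$ by $\sd_\lambda(A)$ in the sketch-size requirement is the conceptual heart of why ridge regularization accelerates sketching, and the reduction to $W$ above is the cleanest way I see to expose it. A secondary subtlety is obtaining the $\eps^{-1}$ rather than $\eps^{-2}$ exponent: this requires exploiting the quadratic structure of the regression objective by splitting the analysis into a constant-distortion subspace embedding piece plus an approximate matrix product bound applied to the residual, rather than demanding a full spectral-norm guarantee of magnitude $\eps$.
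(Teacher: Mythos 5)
Your proposal follows essentially the same route as the paper: your $W = U\Sigma(\Sigma^2+\lambda I)^{-1/2}$ is exactly the paper's $U_1$ (the top $n$ rows of an orthonormal basis for the stacked matrix, Lemma~\ref{lem U1 size}), and the partly-exact sketch $\diag(S,I_d)$, the split into a constant-distortion condition on $U_1$ plus an approximate-matrix-product condition on the residual $b-Ax^*$ (Lemma~\ref{lem reg}), and the sparse-embedding-then-SRHT composition (Theorem~\ref{thm reg stacked}) all match the paper's argument. One small slip: since $\norm{W}_2\le 1$, the stable rank $\norm{W}_F^2/\norm{W}_2^2$ is at \emph{least} $\sd_\lambda(A)$, not at most; the paper instead upper-bounds it by $2\sd_\lambda(A)/\eps$ using the lower bound $\norm{U_1}_2^2\ge\eps/2$ that holds when $\lambda$ is not ``large'' (Lemmas~\ref{lem U1 size} and~\ref{lem lam large}), handling the large-$\lambda$ case separately by comparing against the trivial solution $x=0$.
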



In our analysis (Lemma~\ref{lem reg}), we map ridge regression to ordinary least squares (by using a matrix with $\sqrt{\lambda}\Iden$
adjoined), and then apply prior analysis of sketching algorithms, but with the novel use of a sketching
matrix that is ``partly exact''; this latter step is important to obtain our overall bounds.
We also show that sketching matrices can be usefully composed 
in our regularized setting; this is straightforward in the non-regularized case, but requires some work here.

As noted, the statistical dimension of a data matrix in the context of ridge regression is also referred
to as the {\em effective degrees of freedom} of the regression problem in the statistics literature,
and the statistical dimension features, as the name suggests, in the statistical analysis of the method.
Our results show that the statistical dimension affects not only the statistical capacity of ridge regression,
but also its computational complexity.

The reduction of the above theorem is mainly of interest when $n\gg \sd_\lambda(A)$,
which holds in particular when $n\gg d$, since $d\ge\rank(A) \ge \sd_\lambda(A)$. We also give a reduction using sketching
when $d$ is large, discussed in \S\ref{subsec large d}. Here algorithmic resources depend on a power of
$\sigma_1^2/\lambda$, where $\sigma_1$ is the leading singular value of $A$. This result falls within our theme of
improved efficiency as $\lambda$ increases, but in contrast to our other results, performance does not degrade gracefully
as $\lambda\rightarrow 0$. The difficulty is that we use the product of sketches $A S^\top S A^\top$ to estimate the product
$AA^\top$ in the expression $\norm{AA^\top y - b}$. Since that expression can be zero, and since we
seek a strong notion of relative error, the error of our overall estimate
is harder to control, and impossible when $\lambda=0$.

As for related work on ridge regression, Lu \emph{et al.} \cite{LDFU} apply the SRHT to ridge regression,
analyzing the statistical risk under the distributional assumption on the input data that
$b$ is a random variable, and not giving bounds in terms of $\sd_\lambda$.
El Alaoui \emph{et al.} \cite{AM14} apply sampling techniques based on the \emph{leverage scores} of a
matrix derived from the input,
with a different error measure than ours, namely, the statistical risk; here for their error analysis
they consider the case when the noise in their ridge regression
problem is i.i.d. Gaussian. They give results in terms of $\sd_\lambda(A)$, which arises 
naturally for them as the sum of the leverage scores. Here we show that
this quantity arises also in the context of oblivious subspace embeddings, and with the goal being to
obtain a worst-case relative-error guarantee in objective function value rather than for minimizing
statistical risk. 
Chen \emph{et al.} \cite{CLLKZ} apply sparse embeddings to ridge regression, obtaining
solutions $\tx$ with $\norm{\tx-x^*}_2$ small, where $x^*$ is optimal,
and do this in $O(\nnz(A) + d^3/\eps^2)$ time. They also analyze the statistical risk of their
output. Yang \emph{et al.} \cite{YPW} consider slower sketching methods than those here, and analyze their error
under distributional assumptions using an incomparable notion of statistical dimension.  Frostig \emph{et al.} \cite{FGKS}
make distributional assumptions, in particular a kurtosis property. Frostig \emph{et al.} \cite{FGKS_ICML}
give bounds in terms of a convex condition number that can be much larger than $\sd_\lambda(A)$.
In~\cite{ACW16} we analyze using random features to form preconditioners for use
in kernel ridge regression. We show that the number of random features required for an high quality preconditioner
is a function of the statistical dimensions, much like the results in this paper.
Another related work is that of Pilanci \emph{et al.} \cite{PW14} which we dicuss below.

\subsubsection{Ridge Low-rank Approximation}

In  \S\ref{sec glra} we consider the following problem:
for given $A\in\R^{n\times d}$, integer $k$, and weight $\lambda\ge 0$,
find:
\begin{equation}\label{eq lowr first}
\min_{\substack{Y\in\R^{n\times k}\\ X\in\R^{k\times d}}}
		\norm{YX - A}_F^2 + \lambda\norm{Y}_F^2 + \lambda\norm{X}_F^2,
\end{equation}
where, as is well known (and discussed in detail later), this regularization term is equivalent to $2\lambda\norm{YX}_*$, 
where $\nm{*}$ is the trace (nuclear) norm, the Schatten 1-norm. We show the following.

\begin{theorem}[Less detailed Thm.~\ref{thm lowr ridge}]
Given input $A\in\R^{n\times d}$, there is a  sketching-based algorithm
returning $\tY\in\R^{n\times k}, \tX\in\R^{k\times d}$ such that
with constant probability,
$\tY$ and $\tX$ form a $(1+\eps)$-approximate minimizer to \eqref{eq lowr first}, that is,
\begin{align}
\norm{\tY & \tX - A}_F^2 +  \lambda\norm{\tY}_F^2 + \lambda\norm{\tX}_F^2
\\ & \le (1+\eps) \min_{\substack{Y\in\R^{n\times k}\\ X\in\R^{k\times d}}}
		\norm{YX - A}_F^2 + \lambda\norm{Y}_F^2 + \lambda\norm{X}_F^2.
\end{align}
The matrices $\tY$ and $\tX$ can be found in
$
O(\nnz(A)) + \tO((n+d)\eps^{-1}k \min\{k, \eps^{-1}\sd_\lambda(Y^*)\})+ \poly(\eps^{-1} \sd_\lambda(Y^*))
$
time, where $Y^*$  is an optimum $Y$ in \eqref{eq lowr first} such that
$\sd_\lambda(X^*)=\sd_\lambda(Y^*)\le \rank(Y^*)\le k$.
\end{theorem}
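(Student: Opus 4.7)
The plan is to reduce the bilinear ridge low-rank approximation problem to a sequence of ridge regression problems, leveraging Theorem~\ref{thm reg stacked} to replace $k$ by the statistical dimension $\sd_\lambda(Y^*)$ in the sketch sizes. Strategically, I would use a right-side sketch to find a good column-space candidate for $Y^*$, and then a left-side sketch (plus the ridge-regression sketching result) to fit $\tX$ efficiently; the interplay of these two sketches is what produces the $\min\{k, \eps^{-1}\sd_\lambda(Y^*)\}$ factor.

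First, apply a sparse embedding $R \in \R^{m_R \times d}$ with $m_R = \tO(k/\eps)$ and form $AR^\top \in \R^{n\times m_R}$ in $O(\nnz(A))$ time. Using the affine embedding property of Lemma~\ref{lem AE}, extended as in Theorem~\ref{thm reg stacked} to handle the ridge term (by viewing the $\lambda\norm{X}_F^2$ term as regression against a stacked $\sqrt\lambda I$ block, and leaving those extra rows exact in the ``partly exact'' sketch), I would show that there is a near-optimal solution of the form $\tY = (AR^\top) Z$ for some $Z \in \R^{m_R \times k}$; this reduces the outer optimization to one over $Z$ and $\tX$, where $Z$ has only $\tO(k^2/\eps)$ entries. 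Second, fix $\tY$ and solve the multi-response ridge regression $\min_{\tX}\norm{\tY \tX - A}_F^2 + \lambda\norm{\tX}_F^2$ by applying a left sketch $S$ of size $\tO(\sd_\lambda(\tY)/\eps) = \tO(\sd_\lambda(Y^*)/\eps)$ using Theorem~\ref{thm reg stacked}; this is where the statistical dimension enters, replacing the ``rank $k$'' count that would appear in an unregularized analysis. Third, the resulting small problem (of size $\poly(\sd_\lambda(Y^*)/\eps)$) is solved exactly via SVD-based soft thresholding of the ridge objective; iterating the fit between $Z$ and $\tX$ (or performing a direct joint solve after both sketches are in place) yields the final $\tY,\tX$. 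The $(n+d)\eps^{-1}k\min\{k,\eps^{-1}\sd_\lambda(Y^*)\}$ term comes from multiplying the $n\times m_R$ and $m_R\times d$ intermediate matrices with the ridge-regression sketch, while $O(\nnz(A))$ covers the two sparse embeddings and $\poly(\sd_\lambda(Y^*)/\eps)$ covers the innermost solve.

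The main obstacle I expect is Step~1: proving that the $\tO(k/\eps)$-column right sketch $R$ preserves the \emph{regularized} objective, not just the fitting term. The standard LRA argument bounds $\norm{YX-A}_F^2$ only; here one must also argue that $\lambda\norm{Y}_F^2$ is faithfully captured when $Y$ is constrained to $\colspace(AR^\top)$, which is delicate because the ``best $Y$'' in this column span could in principle have a much larger Frobenius norm than $Y^*$. The resolution is almost certainly to apply the stacked-matrix / partly-exact-sketch trick from Lemma~\ref{lem reg} and Theorem~\ref{thm reg stacked}, so that the $\sqrt\lambda I$ block is left untouched by $R$ (i.e., $R$ acts as the identity on coordinates corresponding to the regularizer), forcing the sketched objective to match the true one on the regularization axis. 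A secondary subtlety is that we want the sketch \emph{size} in the second step to scale with $\sd_\lambda(Y^*)$ rather than $\sd_\lambda(\tY)$; this requires a stability argument showing $\sd_\lambda(\tY)\le O(\sd_\lambda(Y^*))$ whenever $\tY$ is a near-optimum of the ridge objective.
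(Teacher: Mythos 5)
Your overall architecture --- constrain $Y$ to a sketched column space and $X$ to a sketched row space, reduce to a small problem, and finish with an SVD/shrinkage solve --- is the paper's (Lemma~\ref{lem ZZ}, Theorem~\ref{thm ZZ}, Lemma~\ref{lem boyd}), and your ``partly exact sketch'' resolution for preserving the $\lambda\norm{Y}_F^2$ term under the subspace restriction is exactly how Lemma~\ref{thm reg mr} and Remark~\ref{rem rowsp} handle it. But the order in which you apply the two sketches creates the gap you yourself flag and do not close. You apply the $\tO(k/\eps)$ right sketch first, and then need the left sketch to have $\tO(\eps^{-1}\sd_\lambda(\tY))$ rows, which you equate with $\tO(\eps^{-1}\sd_\lambda(Y^*))$ via an unproved stability claim. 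No such stability is established, and it is not obviously true: near-optimality only gives $\lambda\norm{\tY}_F^2\le(1+\eps)\Delta_*$, and $\Delta_*$ can be dominated by the residual $\norm{Y^*X^*-A}_F^2$, so this does not bound $\sd_\lambda(\tY)$ by $O(\sd_\lambda(Y^*))$. The paper avoids the issue by reversing your order: the first (left) sketch $S$ is sized by $\sd_\lambda(Y^*)$ of the \emph{true} optimum, since it is applied to the regression $\min_H\norm{Y^*H-A}_F^2+\lambda\norm{H}_F^2$ where $Y^*$ is the design matrix; the second (right) sketch $R$ is then sized by $\eps^{-1}\sd_\lambda(\tH)\le\eps^{-1}\rank(\tH)\le\eps^{-1}\min\{m,k\}$ --- a rank bound on the first-stage solution, not a statistical-dimension stability bound --- and this is precisely where the factor $\min\{k,\eps^{-1}\sd_\lambda(Y^*)\}$ comes from. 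With your order, the honest fallback is $\sd_\lambda(\tY)\le\rank(\tY)\le k$, which loses the statistical dimension from the bound; moreover your first stage already forces an output cost of $\tO(nk\cdot k/\eps)$, which exceeds the claimed $\tO(n\eps^{-1}k\min\{k,\eps^{-1}\sd_\lambda(Y^*)\})$ whenever $\eps^{-1}\sd_\lambda(Y^*)\ll k$.

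A second, smaller gap is the ``fix $\tY$, then regress for $\tX$, iterating'' step: alternating minimization between the factors carries no $(1+\eps)$ guarantee for the joint objective, and the $\tY$ you fix is not actually determined by Step~1 (only its column space is). The paper instead defines the \emph{joint} constrained optimum $(Z^*_R,Z^*_S)$ over both sketched subspaces, proves by a two-step existence argument that its cost is within $(1+\eps)^2$ of $\Delta_*$, and then solves that joint problem exactly by orthogonalizing the two small factors and invoking the closed-form solution \eqref{eq shrink} on $U_C^\top G U_D$ (Lemma~\ref{lem boyd}). Your parenthetical ``direct joint solve after both sketches are in place'' is the correct move, but then Step~2 as written (a single multiple-response regression with $\tY$ fixed, sized by $\sd_\lambda(\tY)$) is not the step that is actually performed, and the sketch-size accounting must be redone along the paper's lines.
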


This algorithm follows other algorithms for $\lambda=0$
with running times of the form
$
O(\nnz(A)) + (n+d)\poly(k/\eps)$ 
(e.g. \cite{cw13}), and has the best known dependence on
$k$ and $\eps$ for algorithms of this type, even when $\lambda=0$.

\Ken{Methods}
Our approach is to first extend our ridge regression results
to the multiple-response case $\min_Z \norm{AZ-B}_F^2 + \lambda\norm{Z}_F^2$,
and then reduce the multiple-response problem to a smaller one by
showing that up to a cost in solution quality, we can assume that each row of $Z$ lies
in the rowspace of $SA$, for $S$ a suitable sketching matrix. We apply this observation
twice to the low-rank approximation problem, so that $Y$ can be assumed to
be of the form $AR\tY$, and $X$ of the form $\tX SA$, for sketching matrix $S$ and (right) sketching
matrix $R$. Another round of sketching
then reduces to a low-rank approximation problem of size independent of $n$ and $d$,
and finally an SVD-based method is applied to that small problem.

Regarding related work: the regularization ``encourages'' the rank of $YX$ to be small, even when there
is no rank constraint ($k$ is large),
and this unconstrained problem has been extensively studied;
even so, the rank constraint can reduce the computational cost
and improve the output quality, as discussed by \cite{cabral2013unifying},
who also give further background, and who give experimental results
on an iterative algorithm.
Pilanci \emph{et al.} \cite{PW14} consider only algorithms where the sketching time is at least $\Omega(nd)$,
which can be much slower than our $\nnz(A)$ for sparse matrices, and it is not clear if their techniques can be extended.
In the case of low-rank approximation with a nuclear norm constraint (the closest to our work), as the authors note,
their paper gives no improvement in running time. While their framework might imply analyses for ridge regression,
they did not consider it specifically, and such an analysis may not follow directly.

\subsubsection{Regularized Canonical Correlation Analysis}
Canonical correlation analysis (CCA) is an important statistical technique whose input
is a pair of matrices, and whose solution depends on the Gram matrices $A^\top A$ and $B^\top B$.
If these Gram matrices are ill-conditioned it is useful to regularize them by instead
using $A^\top A+\lambda_1\Iden_d$
and $B^\top B+ \lambda_2\Iden_{d'}$, for weights $\lambda_1, \lambda_2 \ge 0$.
Thus, in this paper we consider a regularized version of CCA, defined as follows 
(our definition is in the same spirit as the one used by~\cite{ABTZ14}).
\begin{definition}
Let $A\in\R^{n\times d}$ and $B\in\R^{n\times {d'}}$, and let
$$q=\min (\rank(A^\top A + \lambda_1 \Iden_d), \rank(B^\top B + \lambda_2 \Iden_{d'})).$$
Let $\lambda_1 \geq 0$ and $\lambda_2 \geq 0$. The {\em $(\lambda_1, \lambda_2)$ canonical correlations }
$\sigma^{(\lambda_1, \lambda_2)}_1 \geq \dots \geq \sigma^{(\lambda_1, \lambda_2)}_q$ and
{\em $(\lambda_1, \lambda_2)$ canonical weights } $u_1,\dots,u_q\in \R^d$ and $v_1,\dots,v_q\in \R^{d'}$
are ones that maximize
$$
\tr(U^\top A^\top B V)
$$
subject to
\begin{eqnarray*}
  U^\top (A^\top A + \lambda_1 \Iden_d) U &=& I_q \\
  V^\top (B^\top B + \lambda_2 \Iden_{d'}) V &=& I_q \\
  U^\top A^\top B V &=& \diag(\sigma^{(\lambda_1, \lambda_2)}_1,\dots ,\sigma^{(\lambda_1, \lambda_2)}_q)
\end{eqnarray*}
where $U=\left[u_1,\dots,u_q \right]\in\R^{n\times q}$ and $V=\left[v_1,\dots,v_q \right]\in\R^{d'\times q}$.
\end{definition}

One classical way to solve non-regularized CCA ($\lambda_1 = \lambda_2 = 0$) is the Bj\"{o}rck-Golub algorithm~\cite{BG73}. In \S\ref{sec:cca} we show that regularized CCA can be solved using a variant of the Bj\"{o}rck-Golub algorithm.

Avron et al.~\cite{ABTZ14} showed how to use sketching to compute an approximate CCA.
In \S\ref{sec:cca} we show how to use sketching to compute an approximate regularized CCA.
\begin{theorem}[Loose version of Thm.~\ref{thm approx reg cca}]
  \Haim{The sketch size can be improved using composition, but probably too late
    in the game to do it}
There is a distribution over matrices $S\in\R^{m\times n}$ with
$m=O(\max(\sd_{\lambda_1}(A), \sd_{\lambda_2}(B))^2 / \epsilon^2)$
such that with constant probability,
the regularized CCA of $(SA, SB)$ is an $\epsilon$-approximate CCA of $(A,B)$.
The matrices $SA$ and $SB$ can be computed in $O(\nnz(A)+\nnz(B))$ time.
\end{theorem}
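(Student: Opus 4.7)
The plan is to reduce regularized CCA of $(A,B)$ to an (approximately preserved) ordinary CCA on the stacked matrices $\bA = \twomat{A}{\sqrt{\lambda_1}\Iden_d}$ and $\bB = \twomat{B}{\sqrt{\lambda_2}\Iden_{d'}}$, using the identities $\bA^\top\bA = A^\top A + \lambda_1\Iden_d$, $\bB^\top\bB = B^\top B + \lambda_2\Iden_{d'}$, and $\bA^\top\bB = A^\top B$. Sketching only the top block gives $\hA = \twomat{SA}{\sqrt{\lambda_1}\Iden_d}$ and $\hB = \twomat{SB}{\sqrt{\lambda_2}\Iden_{d'}}$, so that $\hA^\top\hA = A^\top S^\top S A + \lambda_1\Iden_d$, $\hB^\top\hB = B^\top S^\top S B + \lambda_2\Iden_{d'}$, and $\hA^\top\hB = A^\top S^\top S B$. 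Thus the regularized CCA of $(SA, SB)$ coincides with the ordinary CCA of $(\hA, \hB)$, and it suffices to show that the singular values and vectors of $(\hA^\top\hA)^{-1/2}\hA^\top\hB(\hB^\top\hB)^{-1/2}$ approximate those of $(\bA^\top\bA)^{-1/2}\bA^\top\bB(\bB^\top\bB)^{-1/2}$.

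I would then impose three properties on $S$. First, $S$ should be a \emph{partly-exact} subspace embedding for $\bA$ in the sense used by the ridge-regression section, so that $\hA^\top\hA \approx_{1\pm\epsilon}\bA^\top\bA$ in the L\"owner order; this is the condition that forces $m$ to scale with $\sd_{\lambda_1}(A)$ rather than $\rank(A)$. Second, the analogous bound for $\bB$. Third, an approximate-matrix-product bound on the cross term, normalized using $U_{\lambda_1}=(\bA^\top\bA)^{-1/2}$ and $V_{\lambda_2}=(\bB^\top\bB)^{-1/2}$: since $\norm{AU_{\lambda_1}}_F^2=\sd_{\lambda_1}(A)$ and $\norm{BV_{\lambda_2}}_F^2=\sd_{\lambda_2}(B)$, the standard AMM bound applied to $(AU_{\lambda_1})^\top S^\top S(BV_{\lambda_2})$ gives the desired inequality with constant probability once $m = O(\sd_{\lambda_1}(A)\sd_{\lambda_2}(B)/\epsilon^2)$, matching the $\max(\cdot,\cdot)^2/\epsilon^2$ in the statement.

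With these three properties in place, I would combine them by writing the sketched middle factor as $U_{\lambda_1}^{-1}(\hA^\top\hA)^{-1/2}\cdot\hA^\top\hB\cdot(\hB^\top\hB)^{-1/2}V_{\lambda_2}^{-1}$ times appropriate near-identity factors, and then invoking Weyl-type inequalities for singular values under multiplicative perturbation to conclude that each canonical correlation $\sigma^{(\lambda_1,\lambda_2)}_i$ is preserved up to $1\pm O(\epsilon)$, while the corresponding canonical weights are preserved by the same $\sin\Theta$-style argument used by~\cite{ABTZ14} in the unregularized case. The time and sketch-size claims then follow by realizing $S$ as a composition of a sparse embedding with an SRHT as in Corollary~\ref{cor size of S}, with the crucial point that the \emph{same} $S$ is applied to both $A$ and $B$ so that the cross-product analysis goes through.

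The main obstacle will be the cross-term condition. Because $A^\top B$ can be arbitrarily small relative to $\norm{A}_F\norm{B}_F$, any useful error bound has to be expressed in the regularized scale, which is exactly why the analysis must be conjugated by $U_{\lambda_1}$ and $V_{\lambda_2}$ and why the bound pays a factor of $\sd_{\lambda_1}(A)\sd_{\lambda_2}(B)$ together with the $1/\epsilon^2$ of approximate matrix product; this contrasts with the linear-in-$\sd_\lambda$ dependence from the subspace-embedding pieces and explains the $\max(\sd_{\lambda_1}(A),\sd_{\lambda_2}(B))^2/\epsilon^2$ appearing in the theorem rather than the $\sd_\lambda/\epsilon$ we saw for ridge regression.
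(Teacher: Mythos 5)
Your proposal is, in substance, the paper's own proof. Your normalization by $(A^\top A+\lambda_1\Iden_d)^{-1/2}$ and $(B^\top B+\lambda_2\Iden_{d'})^{-1/2}$ is, up to an orthogonal factor, exactly the $\lambda$-QR factorization $A=Q_AR_A$, $B=Q_BR_B$ used in \S\ref{sec:cca}; your three conditions are precisely the paper's three approximate-matrix-product requirements on $Q_A^\top S^\top SQ_A$, $Q_B^\top S^\top SQ_B$, and $Q_A^\top S^\top SQ_B$; the identity $\norm{Q_A}_F^2=\sd_{\lambda_1}(A)$ plays the same role; and the combination step is the same Weyl-plus-multiplicative-perturbation argument (the paper invokes \cite{EI95}).

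Two imprecisions are worth flagging, though neither is fatal to the stated theorem. First, the guarantee this method delivers is additive, $|\hat{\sigma}_i-\sigma_i^{(\lambda_1,\lambda_2)}|\le\epsilon$, not multiplicative $1\pm O(\epsilon)$: the cross term is controlled only by $\norm{Q_A^\top S^\top SQ_B-Q_A^\top Q_B}_F\le\epsilon/2$, which perturbs each singular value additively, and $\sigma_i(Q_A^\top Q_B)$ can be arbitrarily small; the definition of $\epsilon$-approximate CCA asks only for the additive bound, so your plan still suffices, but your stated conclusion is stronger than what it proves. Second, your attribution of the $\max(\cdot,\cdot)^2/\epsilon^2$ dependence solely to the cross term is off: the diagonal conditions must hold to accuracy $\epsilon/4$ rather than the constant $1/4$ that sufficed for ridge regression, because the multiplicative perturbation factor in the \cite{EI95} step is exactly $\norm{Q_A^\top S^\top SQ_A-Q_A^\top Q_A}_2$; with the Frobenius-norm AMM bound this alone already forces $m=O(\sd_{\lambda_1}(A)^2/\epsilon^2)$. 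Finally, for the canonical weights no $\sin\Theta$ argument is needed (or wanted, since it would require gap assumptions); the paper instead directly verifies conditions (b) and (c) of the approximate-CCA definition, i.e., approximate orthonormality of $\hat{U},\hat{V}$ under $A^\top A+\lambda_1\Iden_d$ and $B^\top B+\lambda_2\Iden_{d'}$ and approximate attainment of the objective, which follow from the same three matrix-product bounds.
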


Our generalization of the classical Bj\"{o}rck-Golub algorithm shows that
regularized canonical correlation analysis can be computed via the product of two
matrices whose columns are non-orthogonal regularized bases of $A$ and $B$.
We then show that these two matrices are easier to sketch than the orthogonal
bases that arise in non-regularized CCA. This in turn can be tied to approximation
bounds of sketched regularized CCA versus exact CCA.

\subsubsection{General Regularization}

A key property of the Frobenius norm $\nm{F}$ is that it is invariant under
rotations; for example, it satisfies the \emph{right orthogonal
invariance} condition $\norm{AQ}_F = \norm{A}_F$, for any orthogonal matrix $Q$
(assuming, of course, that $A$ and $Q$ having dimensions so that $AQ$ is defined).
In  \S\ref{sec reg mr gen} and \S\ref{sec low-rank gen},
we study conditions under which such an invariance property, and little else, is enough
to allow fast sketching-based approximation algorithms.

For regularized multiple-response regression, we have the following.
\begin{theorem}[Implied by Thm.~\ref{thm gen regul regr}]
Let $f(\cdot)$ be a real-valued function on matrices that is right orthogonally invariant, subadditive, and
invariant under padding the input matrix by rows or columns of zeros.
Let $A\in\R^{n\times d}, B\in\R^{n\times {d'}}$. 
Suppose that for $r\equiv\rank A$,
there is an algorithm that for general $n,d,{d'},r$ and $\eps > 0$,
in time $\tau(d,n,{d'},r,\eps)$
finds $\tX$ with
\[
\norm{A\tX-B}_F^2 + f(\tX) \le (1+\eps)\min_{X\in\R^{d\times d'}} \norm{AX-B}_F^2 + f(X).
\]
Then there is another algorithm that with constant
probability finds such an $\tX$, taking time
\[
O(\nnz(A) + \nnz(B) + (n+d+{d'})\poly(r/\eps)) + \tau(d, \poly(r/\eps),\poly(r/\eps), r,\eps).
\]
\end{theorem}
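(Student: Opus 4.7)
The plan is a two-stage sketching reduction that parallels Lemma~\ref{lem reg} for the row dimension and then exploits right orthogonal invariance of $f$ to shrink the response dimension. First, I would apply an affine embedding $S\in\R^{m\times n}$ with $m = \poly(r/\eps)$, built as in Lemma~\ref{lem AE} from a sparse embedding composed with an SRHT, so that $SA$ and $SB$ are computable in $O(\nnz(A)+\nnz(B))$ time. The affine embedding property gives $\norm{SAX-SB}_F^2 = \norm{AX-B}_F^2 \pm \eps\cdot\min_X\norm{AX-B}_F^2$ for every $X$, and since the unregularized minimum is bounded above by the regularized optimum $\mathrm{OPT}$, the standard optimality-bracketing argument (exactly as in Lemma~\ref{lem reg}) shows that any minimizer $\tX$ of $\norm{SAX-SB}_F^2 + f(X)$ is already a $(1+O(\eps))$-approximate solution to the original regularized problem. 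The regularizer is untouched here, so only the subspace/affine embedding properties of $S$ are used.

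For the response dimension, the key observation is that $SB$ has only $m$ rows, hence $r_B\equiv\rank(SB)\le m = \poly(r/\eps)$. I would compute a thin factorization $SB = CV^T$ with $V\in\R^{d'\times r_B}$ having orthonormal columns (e.g., by a thin SVD of $SB$ in $O(m^2 d')$ time), and consider the restricted ansatz $X = YV^T$ for $Y\in\R^{d\times r_B}$. Extending $V$ to an orthogonal $Q = [V\mid V^\perp]\in\R^{d'\times d'}$, right orthogonal invariance combined with zero-padding invariance yields $f(YV^T) = f(YV^T Q) = f([Y\mid 0]) = f(Y)$, while $\norm{SA\,YV^T - SB}_F^2 = \norm{(SAY - C)V^T}_F^2 = \norm{SAY - C}_F^2$. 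So the restricted problem is exactly the multiple-response regression $\min_Y \norm{SAY - C}_F^2 + f(Y)$ on $m = \poly(r/\eps)$ rows, $d$ features, and $r_B = \poly(r/\eps)$ responses, which I would feed to the assumed algorithm in time $\tau(d, m, r_B, r, \eps)$, recovering $\tX$ implicitly in factored form $\tX = \tY V^T$.

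The remaining step is to argue the restriction $X = YV^T$ costs only a $(1+\eps)$ factor. Decomposing a sketched optimum $X^* = Y^*V^T + Z'V^{\perp T}$ with $Y^* = X^*V$ and $Z' = X^*V^\perp$, Pythagoras (the cross term vanishes because $V^{\perp T}V = 0$) gives $\norm{SAX^*-SB}_F^2 = \norm{SAY^*-C}_F^2 + \norm{SAZ'}_F^2$, and right orthogonal invariance gives $f(X^*) = f(X^*Q) = f([Y^*\mid Z'])$. Thus the excess of the restricted objective at $Y=Y^*$ over the sketched optimum equals $f(Y^*) - f([Y^*\mid Z']) - \norm{SAZ'}_F^2$. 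Combining subadditivity, zero-padding, and the natural property $f(-M) = f(M)$ (present in any seminorm-type regularizer) gives
\[
f(Y^*) = f([Y^*\mid 0]) = f([Y^*\mid Z'] + [0\mid -Z']) \le f([Y^*\mid Z']) + f(Z'),
\]
so the excess is at most $f(Z') - \norm{SAZ'}_F^2$. The hard part will be absorbing this residual into $\eps\cdot\mathrm{OPT}$: for unitarily invariant $f$ this follows from operator-norm consistency $\norm{MP}\le\norm{M}\norm{P}_{\mathrm{op}}$ applied to the block restriction, but in the fully general setting one may need either a weak monotonicity-under-column-submatrix property of $f$ or a slightly more clever ansatz (e.g., solving for $Y$ together with a small orthogonal-complement correction chosen to balance the regularizer against the Frobenius savings). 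The remaining runtime is routine bookkeeping: $O(\nnz(A)+\nnz(B))$ for the sketches, $(n+d+d')\poly(r/\eps)$ for the SVD of $SB$, manipulation of $V$, and producing the factored output, and $\tau(d,\poly(r/\eps),\poly(r/\eps),r,\eps)$ for the reduced call.
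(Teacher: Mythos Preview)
Your restriction step has a genuine gap. After decomposing $X^* = Y^*V^\top + Z'(V^\perp)^\top$, you correctly reduce the question to bounding $f(Y^*) - f([Y^*\mid Z'])$, but the subadditivity route only gives $f(Y^*) \le f([Y^*\mid Z']) + f(Z')$, leaving a residual $f(Z')$ that you cannot absorb into $\eps\cdot\mathrm{OPT}$ in general (there is no relation between $f(Z')$ and $\norm{SAZ'}_F^2$). The missing idea is Lemma~\ref{lem norm proj}: right orthogonal invariance together with subadditivity already implies that $f$ is \emph{right reduced by contractions}, i.e.\ $f(MP)\le f(M)$ whenever $\norm{P}_2\le 1$. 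Since $VV^\top$ is an orthogonal projection, you get directly
\[
f(Y^*) = f(Y^*V^\top) = f(X^* VV^\top) \le f(X^*) = f([Y^*\mid Z']),
\]
so the excess is at most $-\norm{SAZ'}_F^2 \le 0$: the restriction to $\rowspan(SB)$ costs \emph{nothing}, not merely $O(\eps)\cdot\mathrm{OPT}$. This is exactly how the paper handles the step, packaged as Lemma~\ref{lem rowspace SB}. No extra ``monotonicity-under-column-submatrix'' hypothesis or complement correction is needed.

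Once this gap is closed, your overall route is actually somewhat more direct than the paper's. The paper, after the first affine embedding $S$, writes the solution as $ZSB$ and then introduces two further sketches: a right sketch $\hR$ (subspace embedding for $SB$ and affine embedding for $(SB)^\top,B^\top$) and a second left affine embedding $\hS$ for $A,B\hR$, before rotating into an orthonormal basis of $\rowspan(SB\hR)$. You instead observe that $SB$ already has rank at most $m=\poly(r/\eps)$, take its thin factorization $SB=CV^\top$ in $O(m^2 d')$ time, and reduce immediately to an $m\times r_B$ instance. Both yield the same final running time; your version trades two oblivious sketches for one deterministic SVD of the small matrix $SB$.
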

(Note that Thm.~\ref{thm gen regul regr} seemingly requires an additional property called {\em sketching inheritance}.
However this condition is implied by the conditions of the last theorem.)

That is, sketching can be used to reduce to a problem in which the only remaining large matrix dimension
is $d$, the number of columns of $A$.

This reduction is a building block for our results for regularized low-rank approximation.
Here the regularizer is a real-valued function $f(Y,X)$ on matrices $Y\in\R^{n\times k}, X\in\R^{k\times d}$.
We show that under broad conditions on $f(\cdot, \cdot)$, sketching can be applied to
\begin{equation}\label{eq lowr gen first}
\min_{\substack{Y\in\R^{n\times k}\\ X\in\R^{k\times d}}}
		\norm{YX - A}_F^2 + f(Y,X).
\end{equation}
Our conditions imply fast algorithms when, for example, $f(Y,X)=\norm{YX}_{(p)}$,
where $\nm{(p)}$ is a Schatten $p$-norm,
or when $f(Y,X) = \min\{\lambda_1 \norm{YX}_{(1)}, \lambda_2 \norm{YX}_{(2)} \}$,
for weights $\lambda_1, \lambda_2$, and more. Of course, there are norms, such as the entriwise $\ell_1$ norm,
that do not satisfy these orthogonal invariance conditions.

\begin{theorem}[Implied by Thm.~\ref{thm lowr gen sk}]
Let $f(Y,X)$ be a real-valued function on matrices that in each argument is subadditive and
invariant under padding by rows or columns of zeros, and also
right orthogonally invariant in its right argument and left orthogonally invariant in its left argument.

Suppose there is a procedure that solves \eqref{eq lowr gen first} when $A$, $Y$, and $X$ are $k\times k$ matrices,
and $A$ is diagonal, and $YX$ is constrained to be diagonal, taking time
$\tau(k)$ for a function $\tau(\cdot)$.


Then for general $A$, there is an algorithm that finds a $(1+\eps)$-approximate solution $(\tY, \tX)$ in time
$
O(\nnz(A)) + \tO(n+d)\poly(k/\eps)  + \tau(k).$ 
\end{theorem}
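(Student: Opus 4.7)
The plan is to mirror the architecture of the ridge case (Thm~\ref{thm lowr ridge}), replacing each use of a ridge-specific identity with an appeal to the left/right orthogonal invariance, subadditivity, and zero-padding invariance assumed for $f$. Begin by picking sparse-embedding/SRHT matrices $S\in\R^{m_S\times n}$ and $R\in\R^{d\times m_R}$ of size $\poly(k/\eps)$, with $SA$ and $AR$ computable in $O(\nnz(A))$ time. Extending the structural observation underlying Thm~\ref{thm gen regul regr}, namely that there is a near-optimal solution to a regularized multiple-response regression whose rows lie in the rowspace of a left-sketched coefficient matrix, and applying it once with $X$ held fixed (using that $f(\cdot,X)$ inherits left orthogonal invariance, subadditivity, and padding invariance from $f$) and once with $Y$ held fixed (using right orthogonal invariance of $f(Y,\cdot)$), shows that there is a $(1+\eps)$-approximate solution of the form $Y = AR\,\hY$, $X = \hX\,SA$ with $\hY\in\R^{m_R\times k}$, $\hX\in\R^{k\times m_S}$.

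To collapse the regularizer into small matrices, take thin QR factorizations $AR = Q_L L_1$ and $SA = L_2 Q_R^\top$ and reparameterize $\tY\equiv L_1\hY$, $\tX\equiv\hX L_2$, so that $AR\hY = Q_L\tY$ and $\hX SA = \tX Q_R^\top$. Completing $Q_L, Q_R$ to full orthogonal matrices and then applying the one-sided orthogonal invariances of $f$, together with the zero-padding invariance (padding $\tY$ and $\tX$ by zero rows/columns to make the dimensions match), identifies $f(AR\hY,\hX SA) = f(\tY,\tX)$. Next, sketch the fit term by further affine embeddings $T$ (left) and $W$ (right), so that $\norm{AR\hY\hX SA - A}_F^2$ is preserved up to $(1\pm\eps)$ by $\norm{TAR\,\hY\hX\,SAW - TAW}_F^2$, giving a problem in matrices of size $\poly(k/\eps)$. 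A further SVD-based change of basis (the construction of \S\ref{subsec low-rank gen svd}), absorbed for free by the orthogonal invariance of $f$, puts the small problem in the form $\min \norm{Y'X' - \Sigma}_F^2 + f(Y',X')$ with $\Sigma$ a $k\times k$ diagonal matrix. For any feasible $(Y',X')$, writing the SVD $Y'X' = U_M\Sigma_M V_M^\top$ and rotating to $(U_M^\top Y',\,X'V_M)$ leaves $f$ unchanged (orthogonal invariance on each argument) and, by von~Neumann's trace inequality, cannot increase $\norm{Y'X'-\Sigma}_F^2$ while making the product diagonal; hence the optimum is attained on solutions with $Y'X'$ diagonal, exactly matching the subroutine's input format. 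Invoking it takes $\tau(k)$ time, and backsubstitution through the QR and SVD yields the final $(\tY,\tX)$, with all matrix operations outside the subroutine costing $O(\nnz(A)) + \tO(n+d)\poly(k/\eps)$.

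The main technical hurdle is the collapsing step: after substituting $Y = AR\hY$ and $X = \hX SA$, the regularizer nominally depends on the full-size products $AR\hY$ and $\hX SA$, and only the combination of one-sided orthogonal invariance with zero-padding invariance allows the large factors $AR$ and $SA$ to be absorbed into the small $\tY,\tX$; verifying that these hypotheses deliver exactly the cancellation we need is the heart of the argument. A closely related difficulty is confirming that the regression-style reduction of Thm~\ref{thm gen regul regr} composes cleanly when applied once to each of $Y$ and $X$, so that both structural forms $Y=AR\hY$ and $X=\hX SA$ hold simultaneously with a joint $(1+\eps)$ guarantee; this composition mirrors, and re-uses, the one already carried out in the ridge setting.
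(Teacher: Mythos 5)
Your architecture matches the paper's proof of Theorem~\ref{thm lowr gen sk} in its main lines: two applications of the rowspace/column-space reduction (Lemma~\ref{lem rowspace SB}) to force $Y=AR\hat{Y}$ and $X=\hat{X}SA$, a second round of affine embeddings to shrink the fit term, an orthogonal change of basis to express the regularizer in small matrices, and the SVD/diagonal reduction of \S\ref{subsec low-rank gen svd} to reach the $\tau(k)$ subroutine. The composition worry you flag at the end is resolved exactly as in the paper (chain the $(1+\eps)$ losses, then re-optimize jointly), and replacing the Mirsky-type Lemma~\ref{lem Wie} by von Neumann's trace inequality is fine for a Frobenius fit term.

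There is, however, a genuine gap in your order of operations at the collapsing step. You first take exact thin QR factorizations $AR=Q_LL_1$ and $SA=L_2Q_R^\top$ and use one-sided orthogonal invariance to rewrite the regularizer as $f(\tY,\tX)$ with $\tY=L_1\hat{Y}$, $\tX=\hat{X}L_2$; only afterwards do you sketch the fit term with $T$ and $W$. The problem you end up with is
\[
\min_{\tY,\tX}\ \norm{(TQ_L)\,\tY\tX\,(Q_R^\top W)-TAW}_F^2+f(\tY,\tX),
\]
and the small matrices $TQ_L$ and $Q_R^\top W$ are \emph{not} orthogonal. The construction of \S\ref{subsec low-rank gen svd} (Theorem~\ref{thm reduce kxk}) only applies to problems of the form \eqref{eq lowr gen first}, in which the same factors appear in the fit term and in $f$; to absorb $TQ_L$ into $\tY$ you would have to push a non-orthogonal (triangular) factor through $f$, which the stated invariances do not allow, and which for a general subadditive, non-homogeneous $f$ can change its value uncontrollably. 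So the problem you hand to the subroutine is not in its required input format. The paper's proof takes the opposite order precisely to avoid this: it applies the regularizer to the \emph{sketched} factors $\hS ARW$ and $ZSA\hR$ (see \eqref{eq lowr reduc}), paying a $(1\pm O(\eps))$ factor that is justified by the fact that $f$ inherits subspace embeddings from the Euclidean norm (Lemma~\ref{lem UI->embed}, available under your hypotheses), and then takes a single orthogonal basis for $\colspace(\hS AR)$ and for $\rowspan(SA\hR)$ that serves the fit term and the regularizer simultaneously. Your writeup never invokes this approximate invariance of $f$ under sketching, and without it the final reduction does not go through.
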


The proof involves a reduction to small matrices, followed by
a reduction, discussed in \S\ref{subsec low-rank gen svd}, that uses the SVD to reduce to the diagonal case.
This result, Corollary~\ref{cor simple sol}, generalizes results of  \cite{UHZB}, who gave such a reduction for
$f(Y,X)=\norm{X}_F^2 + \norm{Y}_F^2$; also, we give a very different proof.

As for related work, \cite{UHZB} survey and extend work in this setting, and propose iterative algorithms for
this problem. The regularizers $f(Y,X)$ they consider, and evaluate experimentally,
are more general than we can analyze.

The conditions on $f(Y,X)$ are quite general; it may be that for some instances, the resulting problem is NP-hard.
Here our reduction would be especially interesting, because the size of the reduced NP-hard problem depends only
on $k$.


\subsection{Basic Definitions and Notation}

We denote scalars using Greek letters. Vectors
are denoted by $x,y,\dots$ and matrices by $A,B,\dots$.
We use the convention that vectors are column-vectors. We use $\nnz(\cdot)$ to
denote the number of nonzeros in a vector or matrix. We denote by $[n]$ the
set ${1,\dots,n}$. The notation $\alpha = (1 \pm \gamma)\beta$ means
that $(1- \gamma)\beta \leq \alpha \leq (1 + \gamma)\beta$.

Throughout the paper, $A$ denotes an $n \times d$ matrix, and $\sigma_1 \geq \sigma_2 \ge \cdots \ge \sigma_{\min(n,d)}$
its singular values.

\begin{definition}[Schatten $p$-norm]
The \emph{Schatten $p$-norm}  of $A$ is $\norm{A}_{(p)}\equiv \left[\sum_i \sigma_i^p\right]^{1/p}$.
Note that the trace (nuclear) norm $\norm{A}_*=\norm{A}_{(1)}$,
the Frobenius norm $\norm{A}_F = \norm{A}_{(2)}$,
and the spectral norm $\norm{A}_2 = \norm{A}_{(\infty)}$.
\end{definition}

The notation $\norm{\cdot}$
without a subscript denotes the $\ell_2$ norm for vectors, and the spectral norm
for matrices. We use a subscript for other norms.
We use $\range(A)$ to denote the subspace spanned by the columns of $A$, i.e.
$\range(A) \equiv \{Ax\mid x\in\R^d\}$. $\Iden_d$ denotes the $d\times d$ identity matrix,
$0_d$ denotes the column vector comprising $d$ entries of zero, and $0_{a\times b}\in\R^{a\times b}$
denotes a zero matrix.

The rank $\rank(A)$ of a matrix $A$ is the dimension of the subspace $\range(A)$
spanned by its columns (equivalently,
the number of its non-zero singular values). Bounds on sketch sizes are often written in terms
of the rank of the matrices involved.
\Haim{I think a sentence that explains why we give the definitions below, will not be bad...}

\begin{definition}[Stable Rank]
The \emph{stable rank} $\sr(A) \equiv \norm{A}_F^2/\norm{A}_2^2$. The stable rank satisfies $\sr(A)\le \rank(A)$.
\end{definition}

%
%
%
%
%
%



\section{Ridge Regression}
\label{sec ridge regression}

Let $A\in\R^{n\times d}$, $b\in\R^n$, and $\lambda>0$. 
In this section we consider the \emph{ridge regression} problem:
\begin{equation}\label{eq ridge1}
\min_{x\in\R^d} \norm{Ax-b}^2 + \lambda\norm{x}^2,
\end{equation}
Let
\begin{align*}
x^* & \equiv \argmin_{x\in\R^d} \norm{Ax-b}^2 + \lambda\norm{x}^2\text{\ and}
\\ \Delta_* & \equiv  \norm{Ax^*-b}^2 + \lambda\norm{x^*}^2.
\end{align*}
In general $x^* = (A^\top A + \lambda I_d)^{-1} A^\top b = A^\top (A A^\top + \lambda I_n)^{-1}b$,
so $x^\star$ can be found in $O(\nnz(A) \min(n, d))$ time using an iterative method (e.g., LSQR). 
Our goal in this section is to design faster algorithms that find an approximate
$\tx$ in the following sense:
\begin{equation}
\label{eq:goal ridge regression}
\norm{A\tx-b}^2 + \lambda\norm{\tx}^2\le (1+\eps)\Delta_*\,.
\end{equation}
In our analysis, we distinguish between two cases: $n \gg d$ and $d \gg n$.

\begin{remark}
In this paper we consider only approximations of the form~\eqref{eq:goal ridge regression}.
Although we do not explore it in this paper, our techniques can also be used to derive
preconditioned methods. Analysis of preconditioned kernel ridge regression, which is related
to the $d \gg n$ case, is explored in~\cite{ACW16}.
 
\end{remark}
\subsection{Large $n$}
\label{sec large n}
In this subsection we design an algorithm that is aimed at the case when 
$n \gg d$. However, the results themselves are correct even when $n < d$. The general strategy
is to design a distribution on matrices of size $m$-by-$n$ ($m$ is a parameter), 
sample an $S$ from that distribution, and solve 
$\tx \equiv \argmin_{x\in\R^d} \norm{S(Ax-b)}^2 +  \lambda\norm{x}^2\,.$

The following lemma defines conditions on the distribution that guarantee 
that~\eqref{eq:goal ridge regression} holds with constant probability
(which can be boosted to high probability by repetition and taking the solution with
minimum objective value). \Haim{might be obvious to COLT readers, but still not bad to mention it, no?}
\begin{lemma}\label{lem reg}
Let $x^*\in\R^d$, $A$ and $b$ as above.
Let $U_1\in\R^{n\times d}$ comprise the first $n$ rows of an orthogonal basis
for $\twomat{A}{\sqrt{\lambda} \Iden_d}$.
Let sketching matrix $S\in\R^{m\times n}$ have a distribution such that with constant probability
\begin{equation}\label{eq prod U1}
\norm{U_1^\top S^\top S U_1 - U_1^\top U_1}_2\le 1/4,
\end{equation}
and
\begin{equation}\label{eq prod}
\norm{U_1^\top S^\top S(b-Ax^*) - U_1^\top (b-Ax^*)} \le \sqrt{\eps \Delta_*/2}.
\end{equation}
Then
with constant probability,
$\tx \equiv \argmin_{x\in\R^d} \norm{S(Ax-b)}^2 +  \lambda\norm{x}^2$
has
$$\norm{A\tx-b}^2 + \lambda\norm{\tx}^2\le (1+\eps)\Delta_*.$$
\end{lemma}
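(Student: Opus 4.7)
The plan is to reduce ridge regression to ordinary least squares by stacking, then apply the standard Sarl\'os-style analysis of sketched least squares to a ``partly exact'' sketch. Define
\[
\bar A \equiv \twomat{A}{\sqrt{\lambda}\Iden_d}\in\R^{(n+d)\times d},\qquad \bar b \equiv \twomat{b}{0_d}\in\R^{n+d},
\]
so that $\norm{\bar A x-\bar b}^2 = \norm{Ax-b}^2+\lambda\norm{x}^2$, and in particular $\bar A x^*-\bar b$ is orthogonal to $\range(\bar A)$. Let $U=\twomat{U_1}{U_2}\in\R^{(n+d)\times d}$ be an orthonormal basis for $\range(\bar A)$, matching the partition of $\bar A$, so that $U_1^\top U_1+U_2^\top U_2=\Iden_d$. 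The key device is the block-diagonal sketch
\[
\tilde S \equiv \twomat{[S\ \ 0]}{[0\ \ \Iden_d]}\in\R^{(m+d)\times(n+d)},
\]
which is exact on the last $d$ coordinates. A direct computation gives $\norm{\tilde S(\bar A x-\bar b)}^2 = \norm{S(Ax-b)}^2+\lambda\norm{x}^2$, so $\tx=\argmin_x\norm{\tilde S(\bar A x-\bar b)}^2$.

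The first step is to verify that $\tilde S$ satisfies the two conditions needed for sketched least squares on $\bar A$. Because the identity block cancels,
\[
U^\top\tilde S^\top\tilde S U - \Iden_d = U_1^\top S^\top S U_1 + U_2^\top U_2 - (U_1^\top U_1+U_2^\top U_2) = U_1^\top S^\top S U_1-U_1^\top U_1,
\]
which by \eqref{eq prod U1} has spectral norm at most $1/4$; thus $\tilde S$ is a $(1\pm1/4)$ subspace embedding for $\bar A$, and $U^\top\tilde S^\top\tilde S U$ is invertible with $\norm{(U^\top\tilde S^\top\tilde S U)^{-1}}\le 4/3$. Similarly, writing $\bar b-\bar A x^*=\twomat{b-Ax^*}{-\sqrt{\lambda}x^*}$, the last $d$ coordinates cancel in $U^\top\tilde S^\top\tilde S(\bar b-\bar A x^*) - U^\top(\bar b-\bar A x^*)$, leaving exactly $U_1^\top S^\top S(b-Ax^*)-U_1^\top(b-Ax^*)$, whose norm is at most $\sqrt{\eps\Delta_*/2}$ by \eqref{eq prod}.

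The second step is the standard argument. By orthogonality of $\bar A x^*-\bar b$ to $\range(\bar A)$,
\[
\norm{\bar A\tx-\bar b}^2 = \Delta_* + \norm{\bar A(\tx-x^*)}^2.
\]
Writing the normal equations of the sketched problem and the thin SVD $\bar A=U\Sigma V^\top$, the vector $\beta\equiv\Sigma V^\top(\tx-x^*)$ (which satisfies $\norm{\bar A(\tx-x^*)}=\norm{\beta}$) solves
\[
(U^\top\tilde S^\top\tilde S U)\beta = -U^\top\tilde S^\top\tilde S(\bar A x^*-\bar b) = U^\top\tilde S^\top\tilde S(\bar b - \bar A x^*) - U^\top(\bar b-\bar A x^*),
\]
where the last equality uses $U^\top(\bar b-\bar A x^*)=0$. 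Combining the two bounds above gives $\norm{\beta}\le(4/3)\sqrt{\eps\Delta_*/2}$, hence $\norm{\bar A(\tx-x^*)}^2\le\eps\Delta_*$, so $\norm{A\tx-b}^2+\lambda\norm{\tx}^2=\norm{\bar A\tx-\bar b}^2\le(1+\eps)\Delta_*$.

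The only subtlety, and the main conceptual point to get right, is the ``partly exact'' sketch: the conditions \eqref{eq prod U1} and \eqref{eq prod} are phrased only in terms of $U_1$ (the top $n$ rows of the basis), yet the standard subspace-embedding/approximate-matrix-product template asks for bounds involving the full basis $U$. The reduction above shows that the identity block of $\tilde S$ exactly supplies the missing $U_2$ contribution, so $U_1$-only hypotheses suffice; this is what lets the sketch dimension $m$ depend on $\sd_\lambda(A)$ rather than $\rank(\bar A)$, since, as later sections exploit, $\norm{U_1}_F^2=\sd_\lambda(A)$.
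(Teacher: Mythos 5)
Your proposal is correct and follows essentially the same route as the paper's proof: the same stacked reduction to least squares, the same block-diagonal ``partly exact'' sketch whose identity block cancels the $U_2$ contribution, and the same normal-equations argument bounding $\norm{\bar A(\tx-x^*)}^2\le\eps\Delta_*$ via the $4/3$ bound on the inverse. The only cosmetic difference is that you parametrize by $x$ and the thin SVD of $\bar A$ where the paper changes variables to $y$ with an orthonormal-column matrix $\hat A$.
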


\def\lemregproof{
\begin{proof}
Let $\hat{A}\in\R^{(n+d)\times d}$ have orthonormal columns
with $\range(\hat{A} )=\range( \twomat{A}{\sqrt{\lambda} \Iden_d} )$.
(An explicit expression for one such $\hat{A}$ is given below.)
Let $\hat{b} \equiv \twomat{b}{0_d}$.
We have
\begin{equation}\label{eq ridge2}
\min_{y\in\R^d} \norm{\hat{A}y - \hat{b}}_2
\end{equation}
equivalent to \eqref{eq ridge1}, in the sense that for any $\hat{A}y\in\range(\hat{A})$, there is $x\in\R^d$ with
$\hat{A}y = \twomat{A}{\sqrt{\lambda} \Iden_d} x$,
so that
$\norm{\hA y - \hat{b}}^2 = \norm{\twomat{A}{\sqrt{\lambda} \Iden_d} x - \hat{b}}^2
=\norm{b-Ax}^2 +  \lambda\norm{x}^2$.
Let $y^* = \argmin_{y\in\R^d} \norm{\hat{A}y - \hat{b}}_2$,
so that $\hA y^* = \twomat{Ax^*}{\sqrt{\lambda} x^*}$.

Let $\hat{A} =  \twomat{U_1}{U_2}$, where $U_1\in\R^{n\times d}$ and
$U_2\in\R^{d\times d}$, so that $U_1$ is as in the lemma statement.

Let $\hS \equiv \left[\begin{smallmatrix} S & 0_{m\times d}\\ 0_{d\times n} & \Iden_d \end{smallmatrix}\right]$.

Using \eqref{eq prod U1}, with constant probability
\begin{equation}\label{eq embed2}
\norm{\hA^\top \hS^\top \hS \hA - \Iden_d}_2
	= \norm{U_1^\top S^\top S U_1 + U_2^\top U_2 - \Iden_d}_2
	= \norm{U_1^\top S^\top S U_1 - U_1^\top U_1}_2
	\le 1/4.
\end{equation}
Using the normal equations for \eqref{eq ridge2}, we have
\begin{equation*}
0
	= \hA^\top (\hat{b} - \hA y^*)
	= U_1^\top (b-Ax^*) - \sqrt{\lambda} U_2^\top x^*,
\end{equation*}
and so
\[
\hA^\top \hS^\top \hS (\hat{b} - \hA y^*)
	= U_1^\top S^\top S (b-Ax^*) - \sqrt{\lambda} U_2^\top x^*
	= U_1^\top S^\top S (b-Ax^*)  - U_1^\top (b-Ax^*).
\]
Using \eqref{eq prod}, with constant probability
\begin{align}
\norm{\hA^\top \hS^\top \hS (\hat{b} - \hA y^*)}
	   & = \norm{U_1^\top S^\top S (b-Ax^*)  - U_1^\top (b-Ax^*)} \nonumber
	\\ & \le \sqrt{\eps \Delta_*/2} = \sqrt{\eps/2} \norm{\hat{b} - \hA y^*}. \label{eq prod2}
\end{align}
It follows by a standard result \Ken{who exactly?} from \eqref{eq embed2} and \eqref{eq prod2} that the solution
$\tilde{y} \equiv \argmin_{y\in\R^d} \norm{\hS(\hA y - \hat{b})}$ has
$\norm{\hA\tilde{y} - \hat{b}}\le (1+\eps)\min_{y\in\R^d} \norm{\hat{A}y - \hat{b}}$,
and therefore that $\tilde{x}$ satisfies the claim of the theorem.

For convenience we give the proof of the standard result: \eqref{eq embed2} implies that
$\hA^\top \hS^\top \hS\hA$ has smallest singular value at least $3/4$. The normal equations for the unsketched
and sketched problems are
$$\hA^\top(\hat{b} - \hA y^*) = 0 = \hA^\top \hS^\top \hS (\hat{b} - \hA \tilde{y}).$$
The normal equations for the unsketched case imply
$\norm{\hA \ty - \hb}^2 = \norm{\hA(\ty - y^*)}^2 + \norm{\hb - \hA y^*}^2$,
so it is enough to show that $\norm{\hA(\ty - y^*)}^2 = \norm{\ty - y^*}^2 \le \eps\Delta_*$.
We have
\begin{align*}
(3/4) \norm{\ty - y^*}
	   & \le \norm{\hA^\top \hS^\top \hS \hA(\ty - y^*)} & \text{using \eqref{eq embed2}}
	\\ & = \norm{\hA^\top \hS^\top \hS \hA(\ty - y^*) - \hA^\top \hS^\top \hS (\hat{b} - \hA \tilde{y})} &  \text{normal eqs}
	\\ & = \norm{\hA^\top \hS^\top \hS(\hb - \hA y^*)} &
	\\ & \le \sqrt{\eps\Delta_*/2}  & \text{from \eqref{eq prod2}},
\end{align*}
so that $\norm{\ty - y^*}^2 \le (4/3)^2\eps\Delta_*/2\le \eps\Delta_*$. The theorem follows.
\end{proof}
}
\ifSUB
\begin{proof}
Please see \S\ref{subsec lemregproof}.
\end{proof}
\else
\lemregproof
\fi

\begin{lemma}\label{lem U1 size}
For $U_1$ as in Lemma~\ref{lem reg}, $\norm{U_1}_F^2 = \sd_\lambda(A)  =  \sum_i 1/(1+ \lambda/\sigma_i^2)$, where
$A$ has singular values $\sigma_i$. Also $\norm{U_1}_2 = 1/\sqrt{1+ \lambda/\sigma_1^2}$.
\end{lemma}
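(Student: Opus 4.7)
The plan is to compute $U_1$ explicitly using the SVD of $A$, and then read off the Frobenius and spectral norms directly.

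Start with the thin SVD $A = U\Sigma V^\top$, where $r \equiv \rank(A)$, $U \in \R^{n\times r}$, $V \in \R^{d\times r}$ have orthonormal columns, and $\Sigma = \diag(\sigma_1,\dots,\sigma_r)$ with $\sigma_i > 0$. Complete $V$ to a full orthogonal matrix $[V, V_\perp]$ with $V_\perp \in \R^{d\times (d-r)}$. Let $M \equiv \twomat{A}{\sqrt{\lambda}\Iden_d}$, so that any orthogonal basis $\hat A$ for $\range(M)$ satisfies $M = \hat A R$ for some invertible $R \in \R^{d\times d}$ (since $M$ has rank $d$ when $\lambda > 0$). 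A convenient choice is obtained from
\[
M^\top M = A^\top A + \lambda \Iden_d = [V, V_\perp]\begin{bmatrix}\Sigma^2 + \lambda \Iden_r & 0 \\ 0 & \lambda \Iden_{d-r}\end{bmatrix}[V, V_\perp]^\top,
\]
by setting $D \equiv \diag((\Sigma^2+\lambda \Iden_r)^{1/2}, \sqrt{\lambda}\, \Iden_{d-r})$ and $R \equiv D [V, V_\perp]^\top$; then $R^\top R = M^\top M$ as required, and $\hat A = M R^{-1} = M[V, V_\perp] D^{-1}$.

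Computing this product block-wise, the top $n\times d$ block (which is $U_1$) is
\[
U_1 = \bigl[\, U\Sigma(\Sigma^2 + \lambda \Iden_r)^{-1/2},\ 0_{n\times (d-r)}\,\bigr].
\]
From here both identities fall out. For the Frobenius norm, cyclicity of the trace gives
\[
\norm{U_1}_F^2 = \tr\bigl(\Sigma^2(\Sigma^2 + \lambda \Iden_r)^{-1}\bigr) = \sum_{i=1}^r \frac{\sigma_i^2}{\sigma_i^2 + \lambda} = \sum_i \frac{1}{1+\lambda/\sigma_i^2} = \sd_\lambda(A).
\]
For the spectral norm, since $U$ has orthonormal columns, $\norm{U_1}_2 = \norm{\Sigma(\Sigma^2+\lambda \Iden_r)^{-1/2}}_2 = \max_i \sigma_i/\sqrt{\sigma_i^2+\lambda}$, and this maximum is attained at $i=1$, giving $1/\sqrt{1+\lambda/\sigma_1^2}$.

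There is no real obstacle: the only subtle point is that $\hat A$, and hence $U_1$, is only defined up to a right multiplication by a $d\times d$ orthogonal matrix $Q$. Both the Frobenius and spectral norms are invariant under such $Q$, so the values $\norm{U_1}_F$ and $\norm{U_1}_2$ claimed in the lemma are well-defined regardless of the choice of basis. The only mild care needed is in handling the rank-deficient case $r < d$ (so that $M$ still has rank $d$ thanks to $\sqrt{\lambda}\Iden_d$), which the block decomposition above handles cleanly by absorbing the $V_\perp$ directions into the zero columns of $U_1$.
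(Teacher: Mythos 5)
Your proof is correct and follows essentially the same route as the paper's: construct an explicit orthonormal basis for $\range(\twomat{A}{\sqrt{\lambda}\Iden_d})$ from the SVD of $A$, identify $U_1$ as $U\Sigma(\Sigma^2+\lambda\Iden)^{-1/2}$ (padded with zero columns in the rank-deficient case), and read off both norms. The paper uses the full SVD so that $(\Sigma^\top\Sigma+\lambda\Iden_d)^{-1/2}$ absorbs the rank-deficient directions automatically; your explicit remark that the norms are invariant under the choice of orthogonal basis is a point the paper leaves implicit.
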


This follows from (3.47) of \cite{HTF}; for completeness, a proof is given here.


\def\lemUonesizeproof{
\begin{proof}
Suppose $A=U\Sigma V^\top$, the full SVD, so that $U\in\R^{n\times n}$, $\Sigma\in\R^{n\times d}$, and $V\in\R^{d\times d}$.
Let $D\equiv (\Sigma^\top \Sigma + \lambda\Iden_d)^{-1/2}$. Then
$\hA = \twomat{U\Sigma D}{V \sqrt{\lambda} D}$
has $\hA^\top \hA = \Iden_d$, and for given $x$, there is $y=D^{-1} V^\top x$ with
$\hA y = \twomat{A}{\sqrt{\lambda} \Iden_d} x$.
We have $\norm{U_1}_F^2 = \norm{U\Sigma D}_F^2 = \norm{\Sigma D}_F^2 =  \sum_i 1/(1+ \lambda/\sigma_i^2)$
as claimed. Also $\norm{U_1}_2 = \norm{U \Sigma D}_2 = \norm{\Sigma D}_2 = 1/\sqrt{1+ \lambda/\sigma_1^2}$,
and the lemma follows.
\end{proof}
} 
\ifSUB
\begin{proof}
Please see \S\ref{subsec lemUonesizeproof}.
\end{proof}
\else
\lemUonesizeproof
\fi

\begin{definition}[large $\lambda$]\label{def large lam}
Say that $\lambda$ is \emph{large} for $A$ with largest singular value
$\sigma_1$, and error parameter $\eps$, if $\lambda/\sigma_1^2\ge 1/\eps$.
\end{definition}

The following lemma implies that if $\lambda$ is large, then
$x=0$ is a good approximate solution, and so long as we include a check that a proposed
solution is no worse than $x=0$, we can assume that $\lambda$ is not large.

\begin{lemma}\label{lem lam large}
For $\eps\in (0,1]$, large $\lambda$, and all $x$,
$\norm{Ax-b}^2 + \lambda\norm{x}^2 \ge \norm{b}^2/(1+\eps)$.
If $\lambda$ is not large then $\norm{U_1}_2^2 \ge \eps/2$.
\end{lemma}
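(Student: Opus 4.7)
The lemma has two independent parts; I would handle them separately.

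For the first inequality, my plan is to lower-bound $\norm{x}^2$ in terms of $\norm{Ax}^2$ and then reduce to a one-variable optimization. The key observation is that for any $x$, writing the thin SVD $A = U\Sigma V^\top$ with top singular value $\sigma_1$, we have $\norm{Ax}^2 \le \sigma_1^2 \norm{x}^2$, so $\lambda\norm{x}^2 \ge (\lambda/\sigma_1^2)\norm{Ax}^2 \ge \norm{Ax}^2/\eps$ when $\lambda$ is large in the sense of Definition~\ref{def large lam}. Combined with the reverse triangle inequality $\norm{Ax-b}\ge |\,\norm{b}-\norm{Ax}\,|$, the objective is bounded below by
\[
(\norm{b}-t)^2 + t^2/\eps
\]
where $t \equiv \norm{Ax}$. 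Minimizing the right-hand side over $t \ge 0$ (elementary calculus, minimizer at $t=\eps\norm{b}/(1+\eps)$) gives exactly $\norm{b}^2/(1+\eps)$, which is the claim.

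For the second inequality, I just invoke Lemma~\ref{lem U1 size}, which gives the exact identity $\norm{U_1}_2^2 = 1/(1+\lambda/\sigma_1^2)$. The hypothesis that $\lambda$ is \emph{not} large means $\lambda/\sigma_1^2 < 1/\eps$, so $\norm{U_1}_2^2 > 1/(1+1/\eps) = \eps/(1+\eps) \ge \eps/2$ using $\eps\in (0,1]$.

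The only mildly nonobvious step is recognizing that one can decouple the two terms of the objective via the single scalar $t=\norm{Ax}$ and then minimize in closed form; both the inequality $\norm{x}\ge \norm{Ax}/\sigma_1$ and the choice of $t$ are routine once one sets up the problem this way. Neither step seems to be a real obstacle; the lemma is essentially a bookkeeping exercise, but the exact constant $\norm{b}^2/(1+\eps)$ is worth deriving cleanly so that the subsequent ``check against $x=0$'' reduction (promised in the prose preceding the lemma) goes through with no loss in the approximation factor.
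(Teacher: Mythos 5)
Your proof is correct and follows essentially the same route as the paper's: a reverse-triangle-inequality lower bound followed by a one-variable quadratic minimization, with the second claim read off from Lemma~\ref{lem U1 size}. The only (cosmetic) difference is that you parametrize by $t=\norm{Ax}$ and invoke largeness of $\lambda$ up front, which lets you avoid the case split on whether $\sigma_1\norm{x}\ge\norm{b}$ that the paper uses when it parametrizes by $\norm{x}$ instead.
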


\def\lemlamlargeproof{
\begin{proof}
If $\sigma_1\norm{x}\ge\norm{b}$, then $\lambda\norm{x}^2\ge \sigma_1^2\norm{x}^2\ge\norm{b}^2$.
Suppose $\sigma_1\norm{x}\le\norm{b}$. Then:
\begin{align*}
\norm{Ax-b}^2 + \lambda\norm{x}^2
	   &  = \norm{Ax}^2 + \norm{b}^2 - 2b^\top Ax + \lambda\norm{x}^2
	\\ & \ge (\norm{b}- \norm{Ax} )^2 + \lambda\norm{x}^2 & \text{Cauchy-Schwartz}
	\\ & \ge (\norm{b} - \sigma_1\norm{x})^2 + \lambda\norm{x}^2 & \text{assumption}
	\\ & \ge \norm{b}^2 /(1+\sigma_1^2/\lambda) & \text{calculus}
	\\ & \ge \norm{b}^2/(1+\eps), &\text{large $\lambda$}
\end{align*}
as claimed. The last statement follows from Lemma~\ref{lem U1 size}.
\end{proof}
}

\ifSUB
\begin{proof}
Please see \S\ref{subsec lemlamlargeproof}.
\end{proof}
\else
\lemlamlargeproof
\fi
Below we discuss possibilities for choosing the sketching matrix $S$. 
We want to emphasize
that the first condition in Lemma \ref{lem reg} is {\it not} a subspace embedding guarantee,
despite having superficial similarity. 
Indeed, notice that the columns of $U_1$ are not orthonormal, since we only take the
first $n$ rows of an orthogonal basis of $\twomat{A}{\sqrt{\lambda} \Iden_d}$. Rather,
the first condition is an instance of approximate matrix product with a spectral norm
guarantee with constant error,
for which optimal bounds in terms of the stable rank $\sr(U_1)$
were recently obtained \cite{CNW}. As we discuss in the proof of part (i)
of Corollary \ref{cor size of S} below, $\sr(U_1)$ is upper bounded by 
$\sd_{\lambda}(A)/\epsilon$. 

We only mention a few possibilities of sketching matrix $S$ below, though others
are possible with different tradeoffs and compositions. 
\begin{corollary}\label{cor size of S}
Suppose $\lambda$ is not large (Def.~\ref{def large lam}).
 There is a constant $K>0$ such that for
\begin{enumerate}[i.]
\item $m \ge K(\eps^{-1} \sd_\lambda(A) + \sd_\lambda(A)^2)$ and $S\in\R^{m\times n}$ a sparse embedding matrix (see \cite{cw13,mm13,nn13}) with $SA$
  computable in $O(\nnz(A))$ time, or one can choose $m \ge K (\eps^{-1} \sd_\lambda(A) + \min((\sd_\lambda(A)/\epsilon)^{1+\gamma}, \sd_\lambda(A)^2))$
  an OSNAP (see \cite{nn13,bdn15,c16}) with $SA$ computable in $O(\nnz(A))$ time,
  where $\gamma > 0$ is an arbitrarily small constant, or 
  \item $m\ge K\eps^{-1} (\sd_\lambda(A) + \log(1/\eps))\log(\sd_\lambda(A)/\eps)$ and $S\in\R^{m\times n}$ a Subsampled Randomized Hadamard Transform (SRHT) embedding matrix (see, e.g., \cite{bg12}), with $SA$ computable in $O(nd\log n)$ time, or
\item $m\ge K\eps^{-1}\sd_\lambda(A)$ and $S\in\R^{m\times n}$ a matrix of i.i.d. subgaussian values
with $SA$ computable in $O(ndm)$ time,
\end{enumerate}
the conditions \eqref{eq prod U1} and \eqref{eq prod} of Lemma~\ref{lem reg} apply, and with constant probability
the corresponding $\tilde{x} = \argmin_{x\in\R^d} \norm{S(Ax-b)} + \lambda\norm{x}^2$
is an $\eps$-approximate solution to $\min_{x\in\R^d} \norm{b-Ax}^2 + \lambda\norm{x}^2$.
\end{corollary}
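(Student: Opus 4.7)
The plan is to verify, for each of the three families of sketching matrices, that conditions \eqref{eq prod U1} and \eqref{eq prod} of Lemma~\ref{lem reg} hold with constant probability when $m$ is as stated; the corollary then follows immediately from that lemma. The strategy is to reduce both conditions to standard approximate-matrix-product (AMP) guarantees for $U_1$ and use Lemmas~\ref{lem U1 size} and~\ref{lem lam large} to convert stable-rank and Frobenius-norm quantities into expressions in $\sd_\lambda(A)$ and $\eps$.

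First I would establish the key parameters of $U_1$. By Lemma~\ref{lem U1 size}, $\norm{U_1}_F^2 = \sd_\lambda(A)$ and $\norm{U_1}_2^2 = 1/(1+\lambda/\sigma_1^2)$. Since by hypothesis $\lambda$ is not large, Lemma~\ref{lem lam large} gives $\norm{U_1}_2^2 \ge \eps/2$, so the stable rank satisfies
\[
\sr(U_1) \;=\; \frac{\norm{U_1}_F^2}{\norm{U_1}_2^2} \;\le\; \frac{2\sd_\lambda(A)}{\eps}.
\]
This is the quantity that governs both conditions.

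Next I would handle condition \eqref{eq prod}. Writing $r \equiv b - Ax^*$ and observing $\norm{r}^2 \le \Delta_*$, the condition is the Frobenius-norm AMP inequality
\[
\norm{U_1^\top S^\top S r - U_1^\top r}_2 \;\le\; \eps_0 \norm{U_1}_F \norm{r}_2,
\]
with $\eps_0 = \sqrt{\eps/(2\sd_\lambda(A))}$. Each of the three sketching families satisfies the $(\eps_0,\delta,\ell)$-JL moment property for $m$ within a $\polylog$ factor of $1/\eps_0^2$, so $m = \Theta(\sd_\lambda(A)/\eps)$ (up to $\log$ factors for SRHT) suffices. For condition \eqref{eq prod U1}, I would invoke the spectral-norm AMP bounds: by the Cohen--Nelson--Woodruff analysis \cite{CNW} (and its predecessors for SRHT and subgaussian matrices), a constant-error spectral AMP bound $\norm{U_1^\top S^\top S U_1 - U_1^\top U_1}_2 \le 1/4$ holds with constant probability once $m$ is a suitable function of $\sr(U_1)$. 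For subgaussian matrices this is $m = O(\sr(U_1)) = O(\sd_\lambda(A)/\eps)$; for SRHT one gets an additional $\log(\sr(U_1))$ factor, yielding the bound in (ii); for sparse embeddings the relevant bound degrades to $m = O(\sr(U_1)^2 + \sr(U_1))$ (giving the quadratic term in (i)), and for OSNAP one can interpolate to $m = O(\sr(U_1)^{1+\gamma})$. Adding the bounds required by the two conditions and taking a union bound yields the stated $m$ in each case.

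The main obstacle will be marshaling the correct spectral-norm AMP bounds for the sparse-embedding and OSNAP families, since those references state their guarantees in slightly different forms (oblivious subspace embedding, subspace approximate matrix product, etc.) that must be massaged into the exact inequality \eqref{eq prod U1}. The remaining steps---keeping track of constants so the two events both occur with constant probability via a union bound, and verifying that $\tx$ is checked against the trivial solution $x=0$ (Lemma~\ref{lem lam large}) when $\lambda$ is large so that the ``not large'' assumption is without loss of generality---are straightforward. Once both conditions are verified, Lemma~\ref{lem reg} immediately produces the $(1+\eps)$-approximate solution claimed in the corollary.
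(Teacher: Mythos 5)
Your overall strategy is the paper's: reduce both conditions of Lemma~\ref{lem reg} to approximate matrix product guarantees for $U_1$, and use Lemmas~\ref{lem U1 size} and~\ref{lem lam large} to express $\norm{U_1}_F^2=\sd_\lambda(A)$ and $\sr(U_1)\le 2\sd_\lambda(A)/\eps$. Your treatment of \eqref{eq prod} (Frobenius/JL-moment AMP with error $\eps_0=\sqrt{\eps/(2\sd_\lambda(A))}$, hence $m=\Theta(\sd_\lambda(A)/\eps)$ up to logs) matches the paper, as do your OSNAP, SRHT, and subgaussian accounts of \eqref{eq prod U1}.

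There is, however, a genuine gap in the sparse-embedding case of (i). You route condition \eqref{eq prod U1} through a \emph{constant-error} spectral AMP bound in terms of stable rank and claim $m=O(\sr(U_1)^2+\sr(U_1))$ "gives the quadratic term in (i)." But $\sr(U_1)^2$ can be as large as $(2\sd_\lambda(A)/\eps)^2$, i.e.\ a factor $\eps^{-2}$ worse than the $\sd_\lambda(A)^2$ claimed in the corollary, so as written your argument does not establish the stated bound. The paper avoids the stable rank entirely here: it uses the Frobenius-norm AMP guarantee
$\norm{U_1^\top S^\top S U_1 - U_1^\top U_1}_F \le C\norm{U_1}_F^2/\sqrt{m}$
together with $\norm{\cdot}_2\le\norm{\cdot}_F$, and since the target in \eqref{eq prod U1} is the absolute constant $1/4$, this needs only $m=O(\norm{U_1}_F^4)=O(\sd_\lambda(A)^2)$, with no $\eps$ dependence. (Alternatively, if you insist on the stable-rank formulation, you must not fix the AMP error $\gamma$ at a constant: choosing $\gamma=\min\{1,\,1/(4\norm{U_1}_2^2)\}$ makes the $\norm{U_1}_2$ factors cancel, e.g.\ $\sr(U_1)^2/\gamma^2 = O(\norm{U_1}_F^4)$, recovering the right bound. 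Your phrase "constant-error spectral AMP" forecloses this cancellation.) The rest of the proposal is sound and, with this one repair, would reproduce the corollary.
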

 
\def\corsizeofSproof{
\begin{proof}
Recall that $\sd_\lambda(A) = \norm{U_1}_F^2$.
For (i): sparse embedding distributions satisfy the bound for matrix multiplication
$$\norm{W^\top S^\top S H - W^\top H}_F \le C\norm{W}_F\norm{H}_F/\sqrt{m},$$
for a constant $C$ \cite{cw13,mm13,nn13}; this is also true of OSNAP matrices.
We set $W=H=U_1$ and use $\norm{X}_2 \le \norm{X}_F$ for all $X$ and
$m\ge K \norm{U_1}_F^4$ to obtain \eqref{eq prod U1}, and set $W=U_1$, $H=b - Ax^*$
and use $m\ge K \norm{U_1}_F^2/\eps$ to obtain \eqref{eq prod}. (Here the bound is slightly stronger
than \eqref{eq prod}, holding for $\lambda=0$.) With \eqref{eq prod U1} and \eqref{eq prod}, the
claim for $\tilde{x}$ from a sparse embedding follows using Lemma~\ref{lem reg}.

For OSNAP, Theorem 1 in \cite{CNW} together with \cite{nn13}
imply that for $m = O(\sr(U_1)^{1+\gamma})$, condition
\eqref{eq prod U1} holds. Here $\sr(U_1) = \frac{\|U_1\|_F^2}{\|U_1\|_2^2}$, and by 
Lemma \ref{lem U1 size} and Lemma \ref{lem lam large}, $\sr(U_1) \leq \sd_{\lambda}(A)/\epsilon$.
We note that \eqref{eq prod} continues to hold as in the previous paragraph. Thus, $m$ is at most the min of $O((\sd_{\lambda}(A)/\epsilon)^{1+\gamma})$ and $O(\sd_{\lambda}(A)/\epsilon + \sd_{\lambda}(A)^2)$. 

For (ii): Theorems~1 and~9 of \cite{CNW} imply that for $\gamma\le 1$, with constant probability
\begin{equation}\label{eq cnw}
\norm{W^\top S^\top S H - W^\top H}_2 \le \gamma\norm{W}_2\norm{H}_2
\end{equation}
for SRHT $S$, when $$m\ge C(\sr(W) + \sr(H) + \log(1/\gamma))\log(\sr(W) + \sr(H) )/\gamma^2$$ for a constant $C$.
We let $W=H=U_1$ and $\gamma=\min\{1, 1/4\norm{U_1}^2\}$. We have
\[
\norm{U_1^\top S^\top SU_1 - U_1^\top U_1}_2
	\le \min\{1, 1/4\norm{U_1}^2\} \norm{U_1}_2^2 = \min\{\norm{U_1}_2^2, 1/4\} \le 1/4,
\]
and
\[
\sr(U_1)/\gamma^2
	= \frac{\norm{U_1}_F^2}{\norm{U_1}_2^2}\max\{1, 4\norm{U_1}_2^2\}
	= \norm{U_1}_F^2 \max\{1/\norm{U_1}_2^2, 4\} \le 2\norm{U_1}_F^2/\eps
\]
using Lemma~\ref{lem lam large} and the assumption that $\lambda$ is large.
(And assuming $\eps\le 1/2$.)
Noting that $\log(1/\gamma) = O(\log(1/\eps))$ 
and $\log(\sr(U_1))=O(\log \norm{U_1}_F/\eps)$ using Lemma~\ref{lem lam large},
we have that $m$ as claimed suffices for \eqref{eq prod U1}. 

For \eqref{eq prod}, we use \eqref{eq cnw} with $W=U_1$, $H=Ax^*-b$,
and $\gamma= \sqrt{\eps/2}/\norm{U_1}_2$; note that using Lemma~\ref{lem lam large}
and by the assumption that $\lambda$ is large, $\gamma\le 1$ and so \eqref{eq cnw}
can be applied. We have
\[
\norm{U_1^\top S^\top S(Ax^*-b)}
	\le ( \sqrt{\eps/2}/\norm{U_1}_2) \norm{U_1}_2 \norm{Ax^*-b}
	\le \sqrt{\eps\Delta_*/2},
\]
and 
\[
\sr(U_1)\log(\sr(U_1))/\gamma^2
	\le \frac{\norm{U_1}_F^2}{\norm{U_1}_2^2} [ 2 \log(\norm{U_1}_F/\eps)] [ 2\norm{U_1}_2^2/\eps]
	= 4\norm{U_1}_F^2\log(\norm{U_1}_F/\eps)/\eps.
\]

Noting that since $Ax^*-b$ is a vector, its stable rank is one, we have that $m$ as claimed suffices for \eqref{eq prod}.
With \eqref{eq prod U1} and \eqref{eq prod}, the
claim for $\tilde{x}$ from an SRHT follows using Lemma~\ref{lem reg}.

The claim for (iii) follows as (ii), with a slightly simpler expression for $m$.
\end{proof}
} 

\ifSUB
\begin{proof}
Please see \S\ref{subsec corsizeofSproof}.
\end{proof}
\else
\corsizeofSproof
\fi

%
%
Here we mention the specific case of composing a sparse embedding matrix with
an SRHT.
 \begin{theorem}\label{thm reg stacked}
Given $A\in\R^{n\times d}$,
there are dimensions within constant factors of those given in Cor.~\ref{cor size of S}
such that for $S_1$ a sparse embedding and $S_2$ an SRHT with those dimensions,
\[
\tx\equiv \argmin_{x\in\R^d} \norm{S_2 S_1(Ax-b)}^2 + \lambda\norm{x}^2,
\]
satisfies
\[
\norm{A\tx-b}^2 + \lambda\norm{\tx}^2 \le (1+\eps) \min_{x\in\R^d} \norm{Ax-b}^2 + \lambda\norm{x}^2
\]
with constant probability.

Therefore in
\[
O(\nnz(A)) + \tO(d \sd_\lambda(A)/\eps + \sd_\lambda(A)^2)
\]
time, a ridge regression problem with $n$ rows can be reduced to one with
\[
O(\eps^{-1}(\sd_\lambda(A)+\log(1/\eps))\log(\sd_\lambda(A)/\eps))
\]
rows, whose solution is a $(1+\eps)$-approximate solution.
\end{theorem}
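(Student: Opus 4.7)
The plan is to apply Lemma~\ref{lem reg} to the composed sketching matrix $S \equiv S_2 S_1$; once conditions~\eqref{eq prod U1} and~\eqref{eq prod} are verified for $S$, the $(1+\eps)$-approximation guarantee for $\tx$ follows immediately from that lemma. By Lemma~\ref{lem U1 size}, $\norm{U_1}_F^2 = \sd_\lambda(A)$ and $\norm{U_1}_2 \le 1$; by Lemma~\ref{lem lam large} we may assume $\lambda$ is not large (otherwise $x=0$ already gives a $(1+\eps)$-approximation and one can test against it), so that $\sr(U_1) = O(\sd_\lambda(A)/\eps)$. Conceptually, $S_1$ serves as a coarse row-reducer providing constant-distortion subspace embedding and approximate matrix product for $U_1$, while $S_2$ then supplies the fine $(1\pm\eps)$-level spectral control on a much smaller sketched matrix.

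Both conditions of Lemma~\ref{lem reg} are established by triangle-inequality decompositions. Setting $V \equiv S_1 U_1$,
\[
U_1^\top S^\top S U_1 - U_1^\top U_1 = \bigl(V^\top S_2^\top S_2 V - V^\top V\bigr) + \bigl(V^\top V - U_1^\top U_1\bigr);
\]
the second summand is controlled by the sparse-embedding matrix product bound (Cor.~\ref{cor size of S}(i) with $W = H = U_1$), and the first by the SRHT spectral matrix product bound from~\cite{CNW} (used in Cor.~\ref{cor size of S}(ii)) applied with $V$ in place of $U_1$. The key preliminary here is that $S_1$ being a subspace embedding for $\range(U_1)$ forces $\norm{V}_F = O(\norm{U_1}_F)$ and $\norm{V}_2 = O(\norm{U_1}_2)$, so $\sr(V) = O(\sd_\lambda(A)/\eps)$, which is what lets the SRHT analysis go through at $m_2 = O(\eps^{-1}(\sd_\lambda(A) + \log(1/\eps))\log(\sd_\lambda(A)/\eps))$. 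Condition~\eqref{eq prod} is handled by the analogous decomposition with $r \equiv S_1 (b-Ax^*)$ in place of the right copy of $V$; here $\norm{r} = (1 \pm O(1))\sqrt{\Delta_*}$ follows from $S_1$'s subspace embedding applied to $\twomat{A}{\sqrt{\lambda}\Iden_d}$ augmented by the column $\hb$, so the SRHT matrix product bound applied to $V$ and $r$ yields the desired $\sqrt{\eps\Delta_*/2}$ after rescaling $\eps$ by a constant.

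The running time splits into $O(\nnz(A))$ for forming $S_1 A$ and $S_1 b$, $\tO(m_1 d)$ for the SRHT step $S_2 (S_1 A)$, and the cost of solving the residual $m_2 \times d$ ridge problem (a smaller iterative or direct solve, polynomial in $m_2$). The main obstacle is controlling $m_1$: a direct appeal to Cor.~\ref{cor size of S}(i) would force $m_1 = \tO(\sd_\lambda(A)^2)$ through the $\norm{X}_2 \le \norm{X}_F$ slack used there, which would inflate the $d$-dependent cost to $\tO(d\,\sd_\lambda(A)^2)$. The composition sidesteps this, because $S_1$ in isolation needs only to be a constant-distortion OSE for $U_1$ together with an approximate matrix product bound of relative error $\sqrt{\eps}$, which is strictly weaker than what Lemma~\ref{lem reg} asks for on its own; this lets $m_1$ be chosen so that the $d$-dependent work is $\tO(d\,\sd_\lambda(A)/\eps)$, with the leftover $\sd_\lambda(A)^2$ cost absorbed into the $d$-independent overhead of forming and manipulating the small sketched matrices, matching the bound in the theorem.
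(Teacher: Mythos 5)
Your argument for the approximation guarantee is correct and is essentially the paper's own proof: compose $S=S_2S_1$, use the fact (from Appendix A.3 of \cite{CNW}) that $\norm{S_1U_1}_F=\Theta(\norm{U_1}_F)$ and $\norm{S_1U_1}_2=\Theta(\norm{U_1}_2)$ so that $\sr(S_1U_1)=O(\sr(U_1))=O(\sd_\lambda(A)/\eps)$, and verify \eqref{eq prod U1} and \eqref{eq prod} by exactly the triangle-inequality decomposition you write (the paper splits $U_1^\top S_1^\top S_2^\top S_2 S_1 U_1-U_1^\top U_1$ into the same two summands, bounding each by $1/8$). One caveat on your final running-time paragraph: the claim that $S_1$ ``needs only'' a constant-distortion embedding plus an AMM bound of relative error $\sqrt{\eps}$, and that this is \emph{strictly weaker} than what Lemma~\ref{lem reg} asks, is not right --- those are precisely the conditions \eqref{eq prod U1} and \eqref{eq prod} of Lemma~\ref{lem reg} (with slightly tighter constants), so $m_1$ remains $\Theta(\eps^{-1}\sd_\lambda(A)+\sd_\lambda(A)^2)$ as in Corollary~\ref{cor size of S}(i), and the SRHT step acts on the full $m_1\times d$ matrix $S_1A$, so the $\sd_\lambda(A)^2$ term is not ``absorbed into $d$-independent overhead'' by your mechanism. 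The paper does not supply a detailed time analysis either (and its companion Theorem~\ref{thm reg stacked mr} writes the bound with $d$ multiplying both terms), so this affects only the secondary time claim, not the approximation result.
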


\def\thmregstackedproof{
\begin{proof}
This follows from Corollary~\ref{cor size of S} and the general comments of Appendix A.3 of \cite{CNW}; 
the results there imply that $\norm{S_i U_1}_F = \Theta(\norm{U_1}_F)$
and $\norm{S_i U_1}_2 = \Theta(\norm{U_1}_2)$ for $i\in[3]$ with constant probability,
which implies that $\sr(S_1U_1)$ and $\sr(S_2S_1U_1)$ are $O(\sr(U_1))$.
Moreover, the approximate multiplication bounds of \eqref{eq prod U1} and \eqref{eq prod}
have versions when using $S_2S_1U_1$ and ${S_2S_1(Ax^*-b)}$ to estimate products involving $S_1U_1$
and $S_1(Ax^*-b)$, so that for example, using the triangle inequality,
\begin{align*}
\norm{U_1^\top S_1^\top S_2^\top S_2 S_1 U_1 - U_1^\top U_1}_2
	  & \le \norm{U_1^\top S_1^\top S_2^\top S_2 S_1 U_1 - U_1^\top S_1^\top S_1 U_1}_2
		+ \norm{U_1^\top S_1^\top S_1 U_1 - U_1^\top U_1}_2
	\\ & \le 1/8 + 1/8 = 1/4.
\end{align*}
We have that $S = S_2 S_1$
satisfies \eqref{eq prod U1} and \eqref{eq prod}, as desired.
\end{proof}
}

\ifSUB
\begin{proof}
Please see \S\ref{subsec thmregstacked}.
\end{proof}
\else
\thmregstackedproof
\fi

Similar arguments imply that a reduction also using a sketching matrix $S_3$ with
subgaussian entries could be used, to reduce to a ridge regression problem with $O(\eps^{-1}\sd_\lambda(A))$
rows.

%
%
%

\def\Kr{K}
\Ken{attempt at convergence test, "in progress":}
\Haim{If we can get it ``working'' we may want to demote the previous theorem from ``main result'' and have another
theorem here}

%
%
%

%


\subsection{Large d}\label{subsec large d}

If the number of columns is larger than the number of rows, it is more attractive to 
sketch the rows, i.e., to use $A S^\top$. In general, we can express \eqref{eq ridge1} as
\[
\min_{x\in\R^d}  \norm{Ax}^2 - 2b^\top Ax + \norm{b}^2 + \lambda\norm{x}^2.
\]
We can assume $x$ has the form $x=A^\top y$, yielding the equivalent problem
\begin{equation}\label{eq no sketch}
\min_{y\in\R^n} \norm{AA^\top y}^2 - 2b^\top A A^\top y + \norm{b}^2 + \lambda\norm{A^\top y}^2.
\end{equation}
Sketching $A^\top$ with $S$ in the first two terms yields 
\begin{equation}\label{eq with  sketch}
\ty \equiv \argmin_{y\in\R^n} \lambda\norm{SA^\top y}^2 + \norm{AS^\top S A^\top y}^2 - 2b^\top A A^\top y + \norm{b}^2
\end{equation}
Now let $c^\top \equiv b^\top A A^\top$. Note that we can compute $c$ in $O(\nnz(A))$ time.
The solution to \eqref{eq with sketch} is, for $B\equiv SA^\top$ with $B^\top B$ invertible,
$\ty = (\lambda B^\top B + B^\top B B^\top B)^+ c/2$.

In the main result of this subsection, we show that provided $\lambda > 0$ then a sufficiently
tight subspace embedding to $\range(A^\top)$ suffices. 
\begin{theorem}\label{thm:twoProp}
Suppose $A$ has rank $k$,
 and its SVD is $A=U\Sigma V^\top$, with
$U\in\R^{n\times k}$, $\Sigma\in\R^{k\times k}$ and $V\in\R^{d\times k}$.
If $S\in\R^{m\times d}$ has
\begin{enumerate}
\item (Subspace Embedding) 
$E\equiv V^\top S^\top S V - \Iden_k$ with $\norm{E}_2\le \eps/2$
\item (Spectral Norm Approximate Matrix Product) 
for any fixed matrices $C, D$, each with $d$ rows, 
 $$\|C^T S^T S D - C^TD\|_2 \leq \eps' \|C\|_2 \|D\|_2,$$
where $\eps' \equiv (\eps/2)/(1+3\sigma_1^2/\lambda)$. 
\end{enumerate}
%
Then \eqref{eq with sketch}
has $\tx \equiv A^\top \ty$ approximately solving \eqref{eq ridge1}, that is,
\[
 \norm{A\tx - b}^2 +\lambda\norm{\tx}^2\le (1 + \eps) \Delta_*.
\]
\end{theorem}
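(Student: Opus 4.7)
The plan is to reduce both the unsketched and sketched problems to a $k$-dimensional quadratic form via two successive changes of variable, and then bound the approximation error by the induced perturbation of that quadratic. Since the optimal $x^*$ lies in $\range(A^\top) = \range(V)$ and since the component of $y$ orthogonal to $\range(U)$ cancels from every term of either objective, I take $y = Ut$ and parametrize $x = A^\top y = Vz$ with $z \equiv \Sigma t$. Writing $K \equiv \Sigma^2 + \lambda I$ and $\beta \equiv \Sigma U^\top b$, the unsketched objective becomes $f(z) = z^\top K z - 2\beta^\top z + \|b\|^2$ with minimizer $z^* = K^{-1}\beta$ and $\Delta_* = \|b\|^2 - \beta^\top K^{-1}\beta$, and the sketched objective becomes $g(z) = z^\top(K+Q)z - 2\beta^\top z + \|b\|^2$ with $Q \equiv E\Sigma^2 + \Sigma^2 E + E\Sigma^2 E + \lambda E$. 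The second substitution $w = K^{1/2}z$ diagonalizes $f$: one obtains $f(z) = \|w - w^*\|^2 + \Delta_*$ with $w^* = K^{-1/2}\beta$, and $g(z) = w^\top(I+F)w - 2(w^*)^\top w + \|b\|^2$ with the symmetric perturbation $F \equiv K^{-1/2}QK^{-1/2}$. The sketched minimizer is then $\tilde w = (I+F)^{-1}w^*$, and the approximation error equals $f(\tilde z) - \Delta_* = \|(I+F)^{-1}Fw^*\|^2$.

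I then show $\|F\|_2 = O(\eps)$, which ensures $I+F$ is invertible with $\|(I+F)^{-1}\|_2 \le 2$. Each of the four summands of $F$ is bounded by submultiplicativity combined with (i) the AMP condition applied to pairs such as $(V, V\Sigma)$ and $(V\Sigma, V\Sigma)$, yielding $\|E\Sigma\|_2 \le \eps'\sigma_1$ and $\|\Sigma E\Sigma\|_2 \le \eps'\sigma_1^2$; (ii) the subspace embedding bound $\|E\|_2 \le \eps/2$ for the $\lambda E$ term; and (iii) the diagonal estimates $\|K^{-1/2}\|_2 \le 1/\sqrt{\lambda}$ and $\|\Sigma K^{-1/2}\|_2 \le 1$. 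With $R \equiv \sigma_1^2/\lambda$, the specific choice $\eps' = (\eps/2)/(1+3R)$ is tuned precisely so that the combinations $\eps'R$ and $\eps'\sqrt{R(1+R)}$ that arise all remain $O(\eps)$.

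The hard part is bounding $\|Fw^*\|$: the naive bound $\|F\|_2\|w^*\|$ is too weak, since $\|w^*\|^2 = \beta^\top K^{-1}\beta = \|b\|^2 - \Delta_*$ can be as large as $O(R\Delta_*)$ by the a priori estimate $\|b\|^2 \le 2(1+R)\Delta_*$ (obtained from $\|b\| \le \|Ax^* - b\| + \sigma_1\|x^*\|$ together with $\|Ax^*-b\|^2 \le \Delta_*$ and $\lambda\|x^*\|^2 \le \Delta_*$). Instead, I use the identity $Fw^* = K^{-1/2}Qz^*$ and expand $Qz^*$ into its four summands, bounding each using $\|\Sigma z^*\|^2 + \lambda\|z^*\|^2 = \|b\|^2 - \Delta_* \le 2(1+R)\Delta_*$. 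For example, $\|K^{-1/2}E\Sigma^2 z^*\| \le \|K^{-1/2}E\Sigma\|_2\|\Sigma z^*\| \le \eps'\sqrt{R}\cdot\sqrt{2(1+R)\Delta_*} = O(\eps)\sqrt{\Delta_*}$, while the $\lambda E$ summand contributes $\lambda\|K^{-1/2}Ez^*\| \le (\eps/2)\sqrt{\lambda\|z^*\|^2} \le (\eps/2)\sqrt{\Delta_*}$. Summing yields $\|Fw^*\| = O(\eps)\sqrt{\Delta_*}$, so that $f(\tilde z) - \Delta_* \le O(\eps^2)\Delta_* \le \eps\Delta_*$ after absorbing absolute constants into $\eps$. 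Since $\tilde x = V\tilde z$ satisfies $f(\tilde z) = \|A\tilde x - b\|^2 + \lambda\|\tilde x\|^2$, this establishes the desired bound.
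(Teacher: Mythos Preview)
Your proof is correct, but it takes a considerably more involved route than the paper's. After the same change of variable $w=\Sigma U^\top y$, the paper never computes the sketched minimizer explicitly. Instead it shows that the sketched objective $g$ and the unsketched objective $h$ satisfy $|g(w)-h(w)|\le \eps\, h(w)$ \emph{for every} $w$: the perturbations of the two sketched terms are bounded by $\lambda(\eps/2)\|w\|^2$ (from the subspace-embedding condition) and $3\eps'\sigma_1^2\|w\|^2$ (from the AMP condition applied once, with $C^\top=\Sigma V^\top$ and $D=V$), while the unsketched objective always satisfies $h(w)=\|Ax-b\|^2+\lambda\|x\|^2\ge \lambda\|w\|^2$ with $x=Vw$. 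The choice of $\eps'$ makes the ratio at most $\eps$, and then the standard sandwich $h(\tilde w)\le g(\tilde w)/(1-\eps)\le g(w^*)/(1-\eps)\le (1+\eps)\Delta_*/(1-\eps)$ finishes the argument in a few lines.

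Your approach---solving for $\tilde w=(I+F)^{-1}w^*$ and bounding $\|(I+F)^{-1}Fw^*\|^2$---must work around exactly the obstacle you identify: the naive bound $\|F\|_2\|w^*\|$ loses a factor of $R=\sigma_1^2/\lambda$, forcing the a~priori estimate $\|b\|^2\le 2(1+R)\Delta_*$ and the term-by-term analysis of $Qz^*$. The paper sidesteps this entirely via the pointwise lower bound $h(w)\ge\lambda\|w\|^2$, which absorbs the $\|w\|^2$-sized additive error uniformly. What your extra work does buy is a genuinely sharper conclusion, $f(\tilde z)-\Delta_*=O(\eps^2)\Delta_*$ rather than $O(\eps)\Delta_*$; it just is not needed for the theorem as stated.
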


\def\thmtwoPropproof{
\begin{proof}
To compare the sketched with the unsketched formulations, let $A$ have full SVD $A=U\Sigma V^\top$,
and let $w=\Sigma U^\top y$.
Using $\norm{Uz}=\norm{z}$ and $\norm{Vw}=\norm{w}$
yields the unsketched problem
\begin{equation}\label{eq no sk w}
\min_{w\in\R^k} \norm{\Sigma w}^2 -  2b^\top A V w + \norm{b}^2 + \lambda\norm{w}^2,
\end{equation}
equivalent to \eqref{eq no sketch}. The corresponding sketched version is
\[
\min_{w\in\R^k} \norm{\Sigma V^\top S^\top S V w}^2 -  2b^\top A V w + \norm{b}^2 +  \lambda\norm{S V w}^2.
\]

Now suppose $S$ has $E$ satisfying the first property in the theorem statement.
This implies $S$ is an $\eps/2$-embedding for $V$:
\[
| \norm{SVw}^2 - \norm{w}^2 |
	= | w^\top(V^\top S^\top S V - \Iden_k) w |
	\le (\eps/2)\norm{w}^2,
\]
and, using the second property in the theorem statement with $C^T = \Sigma V^T$ and $D = V$
(which do not depend on $w$),  
$$\norm{\Sigma V^\top S^\top S V - \Sigma}_2 = f,$$
where $f$ satisfies $|f| \leq \eps' \sigma_1$. It follows by the triangle inequality
for any $w$ that 
$$\norm{\Sigma V^\top S^\top SVw} \in [\norm{\Sigma w} - f\norm{w}, \norm{\Sigma w} + f\norm{w}].$$
Hence,
\begin{align*}
| \norm{\Sigma V^\top S^\top S V w}^2 - \norm{\Sigma w}^2 |
           & \in |(\norm{\Sigma w} \pm f \norm{w})^2 - \norm{\Sigma w}^2|
        \\ & \leq 2f \norm{\Sigma w} \norm{w} + f^2 \norm{w}^2\\
        \\ & \leq 3 \eps' \sigma_1^2 \norm{w}^2
\end{align*}
The value of \eqref{eq no sk w} is at least $\lambda\norm{w}^2$, so the relative error
of the sketch is at most
\[
\frac{\lambda (\eps/2)\norm{w}^2 + 3 \eps'\sigma_1^2 \norm{w}^2}{\lambda\norm{w}^2} \leq \eps.
\]
The statement of the theorem follows. 
\end{proof}
} 

\ifSUB
\begin{proof}
Please see \S\ref{subsec thmtwoPropproof}.
\end{proof}
\else
\thmtwoPropproof
\fi

We now discuss which matrices $S$ can be used in Theorem \ref{thm:twoProp}. Note that
the first property is just the oblivious subspace embedding property, and we can 
use CountSketch, Subsampled Randomized Hadamard Transform, 
or Gaussian matrices to achieve this. One can also use OSNAP matrices \cite{nn13}; note
that here, unlike for Corollary \ref{cor size of S},
the running time will be $O(\nnz(A)/\epsilon)$ 
(see, e.g., \cite{Wbook} for a survey).  
For the second property, we use the
recent work of \cite{CNW}, where tight bounds for a number of oblivious
subspace embeddings $S$ were shown. 

In particular, applying the result in Appendix A.3 of \cite{CNW}, 
it is shown that the {\it composition} of matrices each satisfying the second property,
results in a matrix also satisfying the second property. It follows that we can let
$S$ be of the form $\Pi \cdot \Pi'$, where $\Pi'$ is an $r \times d$
CountSketch matrix, where $r = O(n^2/(\epsilon')^2)$, and $\Pi$ is an 
$\tilde{O}(n/(\epsilon')^2) \times r$ Subsampled Randomized Hadamard Transform. 
By standard results on oblivious subspace embeddings,
the first property of Theorem \ref{thm:twoProp} holds provided $r = \Theta(n^2/\epsilon^2)$
and $\Pi$ has $\tilde{O}(n/\epsilon^2)$ rows. Note that $\epsilon' \leq \epsilon$, so in
total we have $O(n/(\epsilon')^2)$ rows. 

Thus, we can compute $B = \Pi \cdot \Pi' A^T$ in 
$O(\nnz(A)) + \tilde{O}(n^3/(\epsilon')^2)$ time, 
and $B$ has $\tilde{O}(n/(\epsilon')^2)$ rows and $n$
columns. We can thus compute $\ty$ as above
in $\tilde{O}(n^3/(\epsilon')^2)$ additional time. 
Therefore in $O(\nnz(A)) + \tilde{O}(n^3/(\epsilon')^2)$
time, we can solve the problem of \eqref{eq ridge1}. 

We note that, using our results in Section \ref{sec large n}, 
in particular Theorem \ref{thm reg stacked}, 
we can first replace $n$ in the above time complexities 
with a function of 
$\sd_\lambda(A)$ and $\eps$, which can further reduce the overall
time complexity.


\subsection{Multiple-response Ridge Regression}

In multiple-response ridge regression one is interested in finding
\[
X^*\equiv \argmin_{X\in\R^{d\times {d'}}} \norm{AX-B}_F^2 + \lambda\norm{X}_F^2,
\]
where $B\in\R^{n\times d'}$.

It is straightforward to extend the results and algorithms for large $n$
to multiple regression. Since we use these results when we consider
regularized low-rank approximation, we state them next. The proofs
are omitted as they are entirely analogous to the proofs in subsection~\ref{sec large n}.
\begin{lemma}\label{thm reg mr}
Let $A$, $U_1$, $U_2$ as in Lemma~\ref{lem reg}, $B\in\R^{n\times {d'}}$,
\[
X^*\equiv \argmin_{X\in\R^{d\times {d'}}} \norm{AX-B}_F^2 + \lambda\norm{X}_F^2,
\]
and $\Delta_* \equiv \norm{AX^*-B}_F^2 + \lambda\norm{X^*}_F^2$.
Let sketching matrix $S\in\R^{m\times n}$ have a distribution such that with constant probability,
\begin{equation}\label{eq embed mr}
\norm{U_1^\top S^\top S U_1 - U_1^\top U_1}_2\le 1/4,
\end{equation}
and
\begin{equation}\label{eq prod mr}
\norm{U_1^\top S^\top S(B-AX^*) - U_1^\top (B-AX^*)}_F \le \sqrt{\eps \Delta_*}.
\end{equation}
Then
with constant probability,
\begin{equation}\label{eq mr sk}
\tX \equiv \argmin_{X\in\R^{d\times {d'}}} \norm{S(AX-B)}_F^2 + \lambda\norm{X}_F^2
\end{equation}
has $ \norm{A\tX-B}^2 + \lambda\norm{\tX}_F^2 \le (1+\eps)\Delta_*.$
\end{lemma}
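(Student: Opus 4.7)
The plan is to mirror the proof of Lemma~\ref{lem reg} (the single-response case), with $b$ replaced by $B$, the $\ell_2$ norm replaced by Frobenius norm, and the vector unknown $x$ replaced by the matrix unknown $X$. The fact that Frobenius norm squared splits across columns means that all the manipulations used in the vector case go through column-by-column, but it will be cleaner to just run the argument at the matrix level.

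First, I would reduce to an ordinary (multiple-response) least squares problem by stacking. Let $\hA \equiv \twomat{A}{\sqrt{\lambda}\Iden_d}\in\R^{(n+d)\times d}$ and $\hB \equiv \twomat{B}{0_{d\times d'}}\in\R^{(n+d)\times d'}$. Then
\[
\norm{\hA X - \hB}_F^2 = \norm{AX-B}_F^2 + \lambda\norm{X}_F^2,
\]
so the original problem is equivalent to $\min_X \norm{\hA X - \hB}_F^2$, with minimizer $X^*$ and optimal value $\Delta_*$. Write $\hA = \twomat{U_1}{U_2}$ where $U_1\in\R^{n\times d}$ is as in the statement and $U_2\in\R^{d\times d}$ (obtained by orthonormalizing the columns of $\hA$; the explicit form from Lemma~\ref{lem U1 size} can be used if needed). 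Then $\hA$ has orthonormal columns, i.e., $U_1^\top U_1 + U_2^\top U_2 = \Iden_d$.

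Next, define the block-diagonal sketch $\hS \equiv \left[\begin{smallmatrix} S & 0 \\ 0 & \Iden_d\end{smallmatrix}\right]$, which preserves the regularization rows exactly. As in the proof of Lemma~\ref{lem reg}, the sketched problem $\min_X \norm{\hS(\hA X - \hB)}_F^2$ equals $\min_X \norm{S(AX-B)}_F^2 + \lambda\norm{X}_F^2$, so its minimizer is exactly the $\tX$ in \eqref{eq mr sk}. Hypothesis \eqref{eq embed mr} gives, as in \eqref{eq embed2},
\[
\norm{\hA^\top \hS^\top \hS \hA - \Iden_d}_2 = \norm{U_1^\top S^\top S U_1 - U_1^\top U_1}_2 \le 1/4,
\]
so $\hA^\top \hS^\top \hS \hA$ has smallest singular value at least $3/4$. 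The matrix normal equations $\hA^\top(\hB - \hA X^*) = 0$ give $U_1^\top(B-AX^*) = \sqrt{\lambda} U_2^\top X^*$, so
\[
\hA^\top \hS^\top \hS (\hB - \hA X^*) = U_1^\top S^\top S(B-AX^*) - U_1^\top(B-AX^*),
\]
which by hypothesis \eqref{eq prod mr} has Frobenius norm at most $\sqrt{\eps\Delta_*} = \sqrt{\eps}\norm{\hB - \hA X^*}_F$.

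Finally, apply the standard sketched-least-squares argument in Frobenius norm. From the unsketched matrix normal equations, $\norm{\hA \tX - \hB}_F^2 = \norm{\hA(\tX - X^*)}_F^2 + \norm{\hB - \hA X^*}_F^2$, so it suffices to show $\norm{\hA(\tX - X^*)}_F^2 = \norm{\tX - X^*}_F^2 \le \eps \Delta_*$. Using the sketched normal equations $\hA^\top \hS^\top \hS(\hB - \hA \tX) = 0$,
\[
(3/4)\norm{\tX - X^*}_F \le \norm{\hA^\top \hS^\top \hS \hA (\tX - X^*)}_F = \norm{\hA^\top \hS^\top \hS(\hB - \hA X^*)}_F \le \sqrt{\eps \Delta_*},
\]
and squaring gives $\norm{\tX - X^*}_F^2 \le (16/9)\eps\Delta_* $; tightening the constant in \eqref{eq prod mr} by a factor of $\sqrt{2}$ (as in Lemma~\ref{lem reg}) yields exactly $\eps\Delta_*$. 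The main obstacle is cosmetic rather than substantive: one must be careful that the approximate matrix product hypothesis \eqref{eq prod mr} is stated in Frobenius norm rather than spectral norm, since $\hB - \hA X^*$ has many columns; this is why the right-hand side of \eqref{eq prod mr} drops the factor of $1/\sqrt{2}$ present in \eqref{eq prod}, but the argument is otherwise identical.
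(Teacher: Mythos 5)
Your proof is correct and is precisely the argument the paper intends: it states that the proof of this lemma is omitted because it is ``entirely analogous'' to that of Lemma~\ref{lem reg}, and you carry out exactly that analogy, with the block-diagonal sketch $\hS$, the matrix normal equations, and matrix Pythagoras replacing their vector counterparts. The only blemishes are cosmetic: you first set $\hA=\twomat{A}{\sqrt{\lambda}\Iden_d}$ and then silently replace it by the orthonormalized basis $\twomat{U_1}{U_2}$ (one should keep the change of variables $X\mapsto D^{-1}V^\top X$ explicit, as the paper does with $x$ versus $y$), and you correctly observe that the constant $\sqrt{\eps\Delta_*}$ in \eqref{eq prod mr} yields $(16/9)\eps\Delta_*$ rather than $\eps\Delta_*$, a constant-factor slack the paper itself leaves implicit.
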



%

\begin{theorem}\label{thm reg stacked mr}
There are dimensions  within a constant factor of those given in Thm.~\ref{thm reg stacked},
such that for $S_1$ a sparse embedding and $S_2$ SRHT with those dimensions,
$S=S_2S_1$ satisfies the conditions of Lemma~\ref{thm reg mr}, therefore the
corresponding $\tX$ does as well. That is, in time
\[
O(\nnz(A) + \nnz(B)) + \tO((d+{d'})(\sd_\lambda(A)/\eps + \sd_\lambda(A)^2)
\]
time, a multiple-response ridge regression problem with $n$ rows can be reduced to one with
$\tO(\eps^{-1}\sd_\lambda(A))$
rows, whose solution is a $(1+\eps)$-approximate solution.
\end{theorem}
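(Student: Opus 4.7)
The plan is to mirror the proof of Theorem~\ref{thm reg stacked}, but now check the two hypotheses of Lemma~\ref{thm reg mr}. Condition \eqref{eq embed mr} is literally the same as condition \eqref{eq prod U1} from Lemma~\ref{lem reg}, so nothing new is required: the argument from Theorem~\ref{thm reg stacked} that the composition $S = S_2 S_1$ of a sparse embedding and an SRHT gives $\norm{U_1^\top S^\top S U_1 - U_1^\top U_1}_2 \le 1/4$ with constant probability transfers verbatim. In particular, Appendix A.3 of \cite{CNW} guarantees that $\norm{S_1 U_1}_F = \Theta(\norm{U_1}_F)$ and $\norm{S_1 U_1}_2 = \Theta(\norm{U_1}_2)$, so $\sr(S_1 U_1) = O(\sr(U_1))$, and then the triangle-inequality splitting used in Theorem~\ref{thm reg stacked} combines the SRHT applied to $S_1 U_1$ with the sparse embedding applied to $U_1$.

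For condition \eqref{eq prod mr}, the only change relative to \eqref{eq prod} is that the target vector $b - A x^*$ is replaced by the matrix $B - A X^*$, and the vector $\ell_2$ norm by the Frobenius norm. This is a standard Frobenius-norm approximate matrix product bound. Sparse embeddings and SRHTs both satisfy
\[
\norm{C^\top S^\top S D - C^\top D}_F \le \gamma \norm{C}_F \norm{D}_F
\]
with constant probability when the number of rows of $S$ is $\Omega(1/\gamma^2)$ (up to logarithmic factors in the SRHT case), as used in the proof of Corollary~\ref{cor size of S}. Setting $C = U_1$, $D = B - A X^*$, and observing that $\norm{U_1}_F^2 = \sd_\lambda(A)$ by Lemma~\ref{lem U1 size} and $\norm{B - A X^*}_F^2 \le \Delta_*$ by definition of $\Delta_*$, we obtain \eqref{eq prod mr} provided $\gamma^2 \le \eps / \sd_\lambda(A)$; this matches the dimension bounds of Corollary~\ref{cor size of S} to within constants.

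To glue the two sketches together for \eqref{eq prod mr}, I again invoke Appendix A.3 of \cite{CNW}: the Frobenius-norm approximate matrix product property is preserved under composition, up to constant factors in~$\gamma$. Concretely, after applying $S_1$ we still have $\norm{S_1 U_1}_F = \Theta(\norm{U_1}_F)$ and $\norm{S_1 (B - A X^*)}_F = \Theta(\norm{B - A X^*}_F)$ with constant probability, and then an SRHT $S_2$ with the parameters of Corollary~\ref{cor size of S} supplies the second approximate product bound. A triangle inequality of the form
\[
\norm{U_1^\top S_1^\top S_2^\top S_2 S_1 D - U_1^\top D}_F
\le \norm{U_1^\top S_1^\top S_2^\top S_2 S_1 D - U_1^\top S_1^\top S_1 D}_F
 + \norm{U_1^\top S_1^\top S_1 D - U_1^\top D}_F,
\]
with $D = B - A X^*$, combines the two estimates into \eqref{eq prod mr} with the right-hand side $\sqrt{\eps \Delta_*}$.

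Finally, the runtime bound follows by accounting: $S_1 A$ and $S_1 B$ can be computed in $O(\nnz(A) + \nnz(B))$ time, $S_2$ is then applied to dense matrices of size $\tilde O(\sd_\lambda(A)/\eps + \sd_\lambda(A)^2)$ rows and $d$ or $d'$ columns respectively via the fast Hadamard transform, contributing $\tilde O((d + d')(\sd_\lambda(A)/\eps + \sd_\lambda(A)^2))$, and the resulting sketched multiple-response ridge regression has $\tilde O(\sd_\lambda(A)/\eps)$ rows. The one nuance worth flagging — the ``main obstacle'' in a sense — is that the Frobenius-norm matrix product bound for SRHT, unlike the spectral-norm one, has a looser dependence on the dimension of the second factor; but since we only need Frobenius-norm accuracy $\sqrt{\eps/\sd_\lambda(A)}$, the parameter choice of Corollary~\ref{cor size of S} for \eqref{eq prod} still suffices without inflating the sketch size. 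The remainder of the proof is purely routine bookkeeping and can be omitted.
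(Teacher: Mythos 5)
Your proof is correct and follows exactly the route the paper intends: the paper omits the proofs in this subsection as ``entirely analogous'' to those of \S\ref{sec large n}, and your argument is precisely that analogue --- condition \eqref{eq embed mr} is handled verbatim as \eqref{eq prod U1} in Theorem~\ref{thm reg stacked}, condition \eqref{eq prod mr} by replacing the vector approximate-product bound with its Frobenius-norm counterpart for sparse embeddings and SRHTs (with the correct parameter $m=\tO(\sd_\lambda(A)/\eps)$ from $\norm{U_1}_F^2=\sd_\lambda(A)$ and $\norm{B-AX^*}_F^2\le\Delta_*$), and the composition and runtime accounting as in Theorem~\ref{thm reg stacked}. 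Your flag about the Frobenius versus spectral-norm product bound for the SRHT is a fair and correctly resolved point; no gaps.
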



\begin{remark}\label{rem rowsp}
Note that the solution to \eqref{eq mr sk},
that is, the solution to $\min_X\norm{\hS(\hA X - \hB)}_F^2$,
where $\hS$ and $\hA$ are as defined in the proof of Lemma~\ref{lem reg}, and
$\hB\equiv \twomat{B}{0_{d\times {d'}}}$,
is $\tX = (\hS \hA)^+\hS\hB$; that is, the matrix $\hA \tX =  \hA(\hS \hA)^+\hS\hB$ whose
distance to $\hB$ is within $1+\eps$ of optimal
has rows in the rowspace of $\hB$, which is the rowspace of $B$. This property will
be helpful building low-rank approximations.
\end{remark}



\section{Ridge Low-Rank Approximation}\label{sec glra}

For an integer $k$ we consider the problem
\begin{equation}\label{eq lowr}
\min_{\substack{Y\in\R^{n\times k}\\ X\in\R^{k\times d}}}
		\norm{YX - A}_F^2 + \lambda\norm{Y}_F^2 + \lambda\norm{X}_F^2.
\end{equation}
From \cite{UHZB} (see also Corollary~\ref{cor simple sol} below),
this has the solution
\begin{equation}\begin{split}\label{eq shrink}
Y^*  &= U_k(\Sigma_k-\lambda\Iden_k)^{1/2}_+\\
X^*  &= (\Sigma_k-\lambda\Iden_k)^{1/2}_+ V_k^\top\\
\implies & \sd_\lambda(Y^*)  = \sd_\lambda(X^*)=\sum_{\substack{i\in [k]\\\sigma_i > \lambda}} (1-\lambda/\sigma_i)
\end{split}\end{equation}
where $U_k\Sigma_k V_k^\top$ is the best rank-$k$ approximation
to $A$, and for a matrix $W$, $W_+$ has entries that are equal to the
corresponding entries of $W$ that are nonnegative, and zero otherwise.

While \cite{UHZB} gives a general argument, it was also known (see for example \cite{srebro2005rank})
that when
the rank $k$ is large enough not to be an active constraint (say, $k=\rank(A)$),
then $Y^*X^*$ for $Y^*, X^*$ from \eqref{eq shrink} solves
\begin{equation*}
\min_{Z\in\R^{n\times d}} \norm{Z-A}_F^2 + 2\lambda\norm{Z}_*,
\end{equation*}
where $\norm{Z}_*$ is the nuclear norm of $X$ (also called the trace norm).

It is also well-known that
\[
\norm{Z}_* = \frac12(\min_{YX=Z} \norm{Y}_F^2 + \norm{X}_F^2),
\]
so that the optimality of \eqref{eq shrink} follows for large $k$.


\begin{lemma}\label{lem ZZ}
Given integer $k\ge 1$ and $\eps > 0$, $Y^*$ and $X^*$
as in \eqref{eq shrink},
there
are
\[ m = \tO(\eps^{-1}\sd_\lambda(Y^*)) = \tO(\eps^{-1}k)\text{ and } m'=\tO(\eps^{-1} \min\{k, \eps^{-1}\sd_\lambda(Y^*)\}),
\]
such that there is
a distribution on $S\in\R^{m\times n}$ and $R\in\R^{d\times m'}$
so that for
\[
Z^*_S, Z^*_R
	\equiv \argmin_{\substack{Z_S\in\R^{k\times m} \\Z_R\in\R^{m'\times k}}}
		\norm{ARZ_R Z_S SA - A}_F^2 + \lambda \norm{ARZ_R}_F^2 + \lambda\norm{Z_SSA}_F^2,
\]
with constant probability
$\tY \equiv ARZ^*_R$ and $\tX\equiv Z^*_S SA$ satisfy
\[
\norm{\tY\tX - A}_F^2 + \lambda\norm{\tY}_F^2 + \lambda\norm{\tX}_F^2
	\le (1+\eps)(\norm{Y^*X^* - A}_F^2 + \lambda\norm{Y^*}_F^2 + \lambda\norm{X^*}_F^2).
\]
The products $SA$ and $AR$ take altogether $O(\nnz(A)) + \tO((n+d)(\eps^{-2}\sd_\lambda(Y^*) + \eps^{-1}\sd_\lambda(Y^*)^2)$ to compute.
\end{lemma}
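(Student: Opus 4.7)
The plan is to show that there exist $(Z_R', Z_S')$ such that $\tY = ARZ_R'$ and $\tX = Z_S' SA$ yield objective value within $(1+\eps)$ of the true optimum attained by $(Y^*, X^*)$; the joint minimizer $(Z_R^*, Z_S^*)$ can only do as well or better, so the lemma will follow. The existence proof applies the multiple-response ridge regression reduction, Lemma~\ref{thm reg mr} together with Theorem~\ref{thm reg stacked mr}, twice in sequence: once to place $X$ in the rowspace of $SA$ and once to place $Y$ in the columnspace of $AR$, each application costing a $(1+\eps/3)$ factor so that $(1+\eps/3)^2 \le 1+\eps$.

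For the first sketch, fix $Y = Y^*$ and view $\min_X \norm{Y^*X - A}_F^2 + \lambda\norm{X}_F^2$ as multiple-response ridge regression with design matrix $Y^*$. By Theorem~\ref{thm reg stacked mr} (using the composed sparse-embedding-plus-SRHT construction), a sketch $S\in\R^{m\times n}$ with $m = \tO(\eps^{-1}\sd_\lambda(Y^*))$ satisfies the hypotheses of Lemma~\ref{thm reg mr}. Remark~\ref{rem rowsp} then places the sketched minimizer in the rowspace of $SA$, yielding $\tX = Z_S' SA$ for some $Z_S' \in \R^{k\times m}$ with
$$\norm{Y^*\tX - A}_F^2 + \lambda\norm{\tX}_F^2 \le (1+\eps/3)\bigl(\norm{Y^*X^* - A}_F^2 + \lambda\norm{X^*}_F^2\bigr).$$
Adding $\lambda\norm{Y^*}_F^2$ to both sides preserves the ratio and bounds the full objective at $(Y^*, \tX)$.

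For the second sketch, fix $X = \tX$ and apply the same reduction to $\min_Y \norm{Y\tX - A}_F^2 + \lambda\norm{Y}_F^2$, which by transposition is multiple-response ridge regression with design matrix $\tX^\top$; a sketch $R\in\R^{d\times m'}$ applied on the right, with $m' = \tO(\eps^{-1}\sd_\lambda(\tX^\top))$, suffices. The column-analog of Remark~\ref{rem rowsp} then gives $\tY = AR Z_R'$, and combining the two approximation factors yields the objective bound. The only subtlety is bounding $\sd_\lambda(\tX^\top)$: since $\tX \in \R^{k\times d}$, we always have $\sd_\lambda(\tX^\top) \le \rank(\tX) \le k$, giving the $m' = \tO(\eps^{-1}k)$ option in the stated minimum.

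For the alternative bound $m' = \tO(\eps^{-2}\sd_\lambda(Y^*))$, I would use that $\tX$ is a $(1+\eps/3)$-approximate minimizer together with the shrinkage formula \eqref{eq shrink} for $X^*$: comparing singular values of $\tX$ to those $(\sigma_i-\lambda)_+^{1/2}$ of $X^*$ via a Weyl-type perturbation argument driven by the approximation guarantee, one should be able to show $\sd_\lambda(\tX^\top) = O(\sd_\lambda(Y^*)/\eps)$. Taking the minimum of the two bounds produces the stated $m'$. This perturbation argument is the step I expect to be the main obstacle, since $\sd_\lambda$ is not Lipschitz in the matrix in general and the control has to come from the specific optimality-gap hypothesis. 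Once both sketches are in place, the running-time bound for forming $SA$ and $AR$ follows from Theorem~\ref{thm reg stacked mr} applied to each side, giving the claimed $O(\nnz(A)) + \tO((n+d)(\eps^{-2}\sd_\lambda(Y^*) + \eps^{-1}\sd_\lambda(Y^*)^2))$ total.
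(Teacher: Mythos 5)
Your overall strategy --- two sequential applications of the multiple-response ridge regression reduction (Lemma~\ref{thm reg mr} with Theorem~\ref{thm reg stacked mr}), first to put $X$ in the rowspace of $SA$ and then to put $Y$ in the column space of $AR$, with the joint minimizer doing at least as well --- is exactly the paper's route, and the first sketch and the $\tO(\eps^{-1}k)$ option for $m'$ are handled correctly.

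The gap is in how you try to obtain the $\tO(\eps^{-2}\sd_\lambda(Y^*))$ term in $m'$. You propose a Weyl-type perturbation argument comparing the singular values of $\tX$ to the shrunk singular values of $X^*$ in \eqref{eq shrink}, in order to show $\sd_\lambda(\tX^\top) = O(\sd_\lambda(Y^*)/\eps)$, and you correctly flag this as the main obstacle: $\sd_\lambda$ is not Lipschitz, and a $(1+\eps)$ bound on the objective value does not control the singular value profile of $\tX$ in the way such an argument would need (e.g., $\tX$ could have many small but nonzero singular values contributing to $\sd_\lambda$ without materially changing the objective). That route should be abandoned. The intended argument is far simpler and purely structural: the first sketch forces $\tX$ to lie in the rowspace of $SA$, and $SA$ has only $m$ rows, so $\rank(\tX)\le\min\{m,k\}$; since the statistical dimension never exceeds the rank,
\[
\sd_\lambda(\tX^\top)\;\le\;\rank(\tX)\;\le\;\min\{m,k\}\;=\;\min\{\tO(\eps^{-1}\sd_\lambda(Y^*)),\,k\},
\]
and hence $m'=\tO(\eps^{-1}\sd_\lambda(\tX^\top))=\tO(\eps^{-1}\min\{m,k\})=\tO(\min\{\eps^{-2}\sd_\lambda(Y^*),\,\eps^{-1}k\})$, which is exactly the claimed bound. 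No comparison with $X^*$'s spectrum is needed anywhere. With that substitution your proof matches the paper's.
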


\begin{proof}
Let $Y^*$ and $X^*$ be an optimal solution pair for \eqref{eq lowr}.
Consider the problem
\begin{equation}\label{eq H}
\min_{H\in\R^{k\times d}} \norm{Y^*H - A}_F^2 + \lambda\norm{H}_F^2.
\end{equation}
Let $H^*$ be an optimal solution.
We can apply Lemma~\ref{thm reg mr}
mapping $A$ of the theorem to $Y^*$,
$B$ to $A$,
$Y^*$ to $H^*$,
and $\tY$ to $\tH\equiv \twomat{SY^*}{\sqrt{\lambda}\Iden_k}^+ \twomat{SA}{0_{k\times d}}$, so that for $S$ satisfying the condition
of Theorem~\ref{thm reg mr},
as noted in Remark~\ref{rem rowsp}, $\tH$ is within
$1+\eps$ of the cost of $H^*$, and in the rowspace of $SA$.
(That is, the rows of $\rowspan(\tH)\subset\rowspan(SA)$.)

Using Theorem~\ref{thm reg stacked mr}, we have $m =\tO(\eps^{-1}(\sd_\lambda(Y^*))=\tO(\eps^{-1}k)$.

Now consider the problem
\begin{equation}\label{eq toR}
\min_{W\in\R^{n\times k}} \norm{W\tH - A}_F^2 + \lambda\norm{W}_F^2.
\end{equation}
We again apply Lemma~\ref{thm reg mr},
mapping $A$ of the theorem to $\tH^\top$,
$B$ to $A^\top$,
$Y^*$ to the transpose of an optimal solution $W^*$ to \eqref{eq toR},
and $S^\top$ to a matrix $R$.
This results in $\tW \equiv \twomat{AR}{0_{k\times m'}}\twomat{\tH R}{\sqrt{\lambda}}^{{+}\top}$
whose cost is within $1+\eps$ of that of $W^*$.
(Here $Z^{{+}\top}$ denotes the transpose of the pseudo-inverse of $Z$.)
Moreover, the columns of $\tW$ are in the columspace of $AR$.

Since $\tH$ can be written in the form $Z_S SA$ for some $Z_S\in\R^{k\times m}$,
and $\tW$ in the form $ARZ_R$ for some $Z_R\in\R^{m'\times k}$,
the quality bound of the lemma follows, after adjusting $\eps$ by a constant factor.

Noting that $\rank(\tH)\le \min\{m,k\}$, there is big enough
\[
m'=\tO(\eps^{-1}\sd_\lambda(\tH)) = \tO(\eps^{-1}\min\{m,k\})
	=\tO(\min\{\eps^{-2}\sd_\lambda(Y^*), \eps^{-1}k\}).
\]
We apply Theorem~\ref{thm reg stacked mr} to obtain the time bounds for computing $SA$ and $AR$.
\end{proof}

We can reduce to an even yet smaller problem, using affine embeddings, which are
built using subspace embeddings. These are defined next.

\begin{definition}[subspace embedding]\label{def subs embed}
\mbox{}\\
Matrix $S\in\R^{m_S\times n}$ is a \emph{subspace $\eps$-embedding} for $A$ with respect to the Euclidean norm if
$\norm{SAx}_2 = (1\pm \eps)\norm{Ax}_2$ for all $x$.
\end{definition}

\begin{lemma}\label{lem subs embed}
There are \emph{sparse embedding} distributions on matrices $S\in\R^{m\times n}$
with $m=O(\eps^{-2}\rank(A)^2)$ so that $SA$ can
be computed in $\nnz(A)$ time, and with constant probability $S$ is a subspace $\eps$-embedding.
The SRHT (of Corollary~\ref{cor size of S}) is a distribution on $S\in\R^{m\times n}$ with
$m=\tO(\eps^{-2}\rank(A))$ such that $S$ is a subspace embedding with constant probability.
\end{lemma}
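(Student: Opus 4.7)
The plan is to reduce the subspace embedding condition to an approximate matrix product statement for an orthonormal basis of $\range(A)$, and then quote the known matrix product bounds for sparse embeddings and for the SRHT (the same tools already used in Corollary~\ref{cor size of S}).

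Let $r=\rank(A)$ and let $A=U\Sigma V^\top$ be a thin SVD, so $U\in\R^{n\times r}$ has orthonormal columns with $\range(U)=\range(A)$. Any $x$ gives $Ax=Uy$ with $y=\Sigma V^\top x$, so
\[
\norm{SAx}_2=(1\pm\eps)\norm{Ax}_2\ \text{for all }x
\ \iff\ \norm{SUy}_2^2=(1\pm\eps)\norm{y}_2^2\ \text{for all }y
\ \iff\ \norm{U^\top S^\top SU-\Iden_r}_2\le \eps.
\]
Thus it suffices to bound $\norm{U^\top S^\top SU-\Iden_r}_2$, and since this is an $r\times r$ matrix, I will use $\norm{\cdot}_2\le\norm{\cdot}_F$.

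For the sparse embedding part, I will invoke the approximate matrix product guarantee (as used in the proof of Corollary~\ref{cor size of S}, part (i)) that for any $W,H$,
\[
\norm{W^\top S^\top SH - W^\top H}_F \le C\norm{W}_F\norm{H}_F/\sqrt{m}
\]
with constant probability, for a constant $C$. Taking $W=H=U$ and noting $\norm{U}_F^2=r$, this gives $\norm{U^\top S^\top SU-\Iden_r}_F\le Cr/\sqrt{m}$, which is at most $\eps$ once $m=\Omega(\eps^{-2}r^2)$, as claimed. Since $SA$ is just $SU\Sigma V^\top$ in disguise, the runtime to form $SA$ from a sparse embedding $S$ is $O(\nnz(A))$ by the standard hashing construction of CountSketch.

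For the SRHT part, I will use the spectral-norm approximate matrix product bound for the SRHT from \cite{CNW} (Theorems~1 and~9, already cited in the proof of Corollary~\ref{cor size of S}): with constant probability
\[
\norm{W^\top S^\top SH-W^\top H}_2\le \gamma\,\norm{W}_2\norm{H}_2
\]
provided $m\ge C(\sr(W)+\sr(H)+\log(1/\gamma))\log(\sr(W)+\sr(H))/\gamma^2$. Take $W=H=U$, so $\norm{U}_2=1$ and $\sr(U)=r$, and set $\gamma=\eps$; then $\norm{U^\top S^\top SU-\Iden_r}_2\le\eps$ when $m=\tilde O(\eps^{-2}r)$, giving the claimed bound.

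The only mildly subtle point is the passage from the guarantee on $U$ (an orthonormal basis of $\range(A)$) to an arbitrary vector $Ax$, but this is immediate because $U$ and $A$ have the same column space. No additional obstacles arise; everything else is bookkeeping on the matrix product bounds already invoked earlier in the paper.
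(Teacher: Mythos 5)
Your proposal is correct, but it is worth noting that the paper does not actually prove this lemma at all: its ``proof'' is a one-line citation to \cite{cw13}, sharpened by \cite{nn13,mm13} for the sparse-embedding claim and to \cite{bg12} for the SRHT claim. You instead give a self-contained derivation, reducing the subspace-embedding property to an approximate matrix product statement for an orthonormal basis $U$ of $\range(A)$ and then invoking the Frobenius-norm AMM bound for sparse embeddings and the spectral-norm AMM bound of \cite{CNW} for the SRHT. This is a legitimate and standard alternative route, and it has the virtue of being consistent with the machinery the paper already deploys in Corollary~\ref{cor size of S}; it recovers exactly the claimed dimensions $O(\eps^{-2}r^2)$ and $\tO(\eps^{-2}r)$. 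Two small points of hygiene: first, your opening chain of ``iff''s silently conflates a $(1\pm\eps)$ guarantee on $\norm{SAx}_2$ with a $(1\pm\eps)$ guarantee on $\norm{SUy}_2^2$ — these differ by squaring, so the equivalence only holds after rescaling $\eps$ by a constant factor, which is harmless for the asymptotic statement but should be acknowledged; second, the remark that ``$SA$ is just $SU\Sigma V^\top$ in disguise'' is irrelevant to the runtime claim — the $O(\nnz(A))$ bound comes from applying the hashing construction directly to $A$, as you then correctly state. Neither issue is a gap.
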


\begin{proof}
The sparse embedding claim is from \cite{cw13}, sharpened by \cite{nn13, mm13};
the SRHT claim is from for example \cite{bg12}.
\end{proof}

\begin{definition}[Affine Embedding] \label{def aff emb}
For $A$ as usual and $B\in\R^{n\times {d'}}$,
matrix $S$ is an \emph{affine $\eps$-embedding} for $A,B$ if
$\norm{S(AX-B)}_F^2 = (1\pm\eps)\norm{AX-B}_F^2$
for all $X\in\R^{d\times {d'}}$. A distribution over $\R^{m_S\times n}$ is a \emph{poly-sized affine embedding} distribution
if there is $m_S=\poly(d/\eps)$ such that constant probability, $S$ from the distribution is an affine $\eps$-embedding.
\end{definition}

\begin{lemma}\label{lem AE}
For $A$ as usual, $B\in\R^{n\times {d'}}$, suppose there is a distribution
over $S\in\R^{m\times n}$ so that with constant probability,
$S$ is a subspace embedding for $A$ with parameter $\eps$,
and for $X^*\equiv\argmin_{X\in\R^{d\times {d'}}} \norm{AX-B}_F^2$ and $B^*\equiv AX^* - B$,
$\norm{SB*}_F^2 = (1\pm\eps)\norm{B^*}_F^2$
and $\norm{U^\top S^\top SB^* - U^\top B^*} \le \eps\norm{B^*}_F^2$.
Then $S$ is an affine embedding for $A,B$.
A sparse embedding with $m=O(\rank(A)^2/\eps^2)$ has the needed properties. By first applying
a sparse embedding $\Pi$, and then a Subsampled Randomized Hadamard Transform (SHRT) $T$,
there is an affine $\eps$-embedding $S=T\Pi$ with $m=\tO(\rank(A)/\eps^2)$ taking time
$O(\nnz(A) + \nnz(B)) + \tO((d+{d'})\rank(A)^{1+\kappa}/\eps^2)$
time to apply to $A$ and $B$, that is, to compute $SA=T\Pi A$ and $SB$.
Here $\kappa > 0$ is any fixed value.
\end{lemma}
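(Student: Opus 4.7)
The plan is to prove the sufficient conditions imply the affine embedding property via a Pythagorean-style expansion around the optimal $X^*$, and then check that the two listed constructions meet those conditions.

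First I would write $AX-B = A(X-X^*) + B^*$ with $B^*\equiv AX^*-B$. By the normal equations for ordinary least squares, $A^\top B^* = 0$, so if the columns of $U$ are an orthonormal basis for $\range(A)$ then $U^\top B^* = 0$ and by orthogonality
\[
\norm{AX-B}_F^2 = \norm{A(X-X^*)}_F^2 + \norm{B^*}_F^2 .
\]
Applying $S$ and expanding the inner product,
\[
\norm{S(AX-B)}_F^2 = \norm{SA(X-X^*)}_F^2 + 2\,\tr\bigl((X-X^*)^\top A^\top S^\top S B^*\bigr) + \norm{SB^*}_F^2.
\]
The first term is $(1\pm\eps)\norm{A(X-X^*)}_F^2$ by the subspace-embedding hypothesis, the third is $(1\pm\eps)\norm{B^*}_F^2$ by assumption. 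For the cross term, factor $A = U R$ with $U$ orthonormal and use $U^\top B^* = 0$ to rewrite it as $\tr((X-X^*)^\top R^\top (U^\top S^\top S B^* - U^\top B^*))$; Cauchy–Schwarz together with the approximate matrix product hypothesis bounds this by $\eps\norm{A(X-X^*)}_F\norm{B^*}_F$. Combining the three bounds via AM--GM yields $\bigl|\norm{S(AX-B)}_F^2 - \norm{AX-B}_F^2\bigr| \le 3\eps\norm{AX-B}_F^2$, and the affine $\eps$-embedding property then follows after adjusting $\eps$ by a constant.

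For the standalone sparse embedding claim with $m = O(\rank(A)^2/\eps^2)$, I would invoke Lemma~\ref{lem subs embed} for the subspace embedding property, and then use the fact from \cite{cw13,mm13,nn13} that the same construction satisfies the Frobenius-norm approximate matrix product bound $\norm{C^\top S^\top S D - C^\top D}_F \le O(1)\norm{C}_F\norm{D}_F/\sqrt{m}$. Choosing $C=U$ and $D=B^*$ gives the cross-term hypothesis (with $\norm{U}_F^2 = \rank(A)$ absorbed into $m$), and choosing $C=D=B^*$ (with $\norm{B^*}_F$ on both sides) gives $\norm{SB^*}_F^2 = (1\pm\eps)\norm{B^*}_F^2$. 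All three hypotheses of the first part then hold.

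For the composition $S = T\Pi$, the strategy is to first apply the sparse embedding $\Pi$ in input-sparsity time to an intermediate dimension $m_1 = \tilde O(\rank(A)^{1+\kappa}/\eps^2)$, and then apply the SRHT $T$ to the already-sketched matrices $\Pi A$ and $\Pi B$ to reduce to the final dimension $m = \tilde O(\rank(A)/\eps^2)$. The composed map inherits the subspace-embedding, norm-preservation, and approximate-matrix-product properties by the standard composition arguments of Appendix~A.3 of \cite{CNW} (the error parameters are only multiplied by small constants, absorbed into the $\tilde O(\cdot)$). The running time decomposes as $O(\nnz(A)+\nnz(B))$ for forming $\Pi A$ and $\Pi B$, plus $\tilde O((d+d')\rank(A)^{1+\kappa}/\eps^2)$ for the two fast Hadamard transforms acting on $m_1 \times d$ and $m_1\times d'$ matrices.

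The main obstacle I anticipate is bookkeeping: choosing the intermediate sparse-embedding dimension and the SRHT dimension so that (i) each of the three sufficient conditions survives composition with the right error, and (ii) the final $m$ is $\tilde O(\rank(A)/\eps^2)$ while the total running time stays $O(\nnz(A)+\nnz(B)) + \tilde O((d+d')\rank(A)^{1+\kappa}/\eps^2)$. In particular, the cross-term condition is the delicate one, since $B^*$ is only implicit and $U$ depends on $A$, so one must invoke a version of approximate matrix product for the \emph{composed} sketch rather than re-sketching an already sketched object from scratch.
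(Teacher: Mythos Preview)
Your proposal is correct and follows exactly the standard argument from \cite{cw13} (the Pythagorean decomposition around $X^*$, subspace embedding for the first term, norm preservation for the residual, and approximate matrix product for the cross term), which is precisely what the paper defers to: its entire proof is the one-line citation ``Shown in \cite{cw13}, sharpened with \cite{nn13, mm13}.'' So you have reconstructed, in more detail than the paper itself provides, the same approach it invokes.
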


\begin{proof}
Shown in \cite{cw13}, sharpened with \cite{nn13, mm13}.
\end{proof}

\begin{theorem}\label{thm ZZ}
With notation as in Lemma~\ref{lem ZZ}, there
are
\[ p' = \tO(\eps^{-2}m) = \tO(\eps^{-3}\sd_\lambda(Y^*)) = \tO(\eps^{-3}k)\text{ and } p = \tO(\eps^{-2}m')=\tO(\eps^{-3} \min\{k, \eps^{-1}\sd_\lambda(Y^*)\}),
\]
such that there is
a distribution on $S_2\in\R^{p\times n}$, $R_2\in\R^{d\times p'}$
so that for
\[
\tZ_S,\tZ_R
	\equiv \argmin_{\substack{Z_S\in\R^{k\times m} \\Z_R\in\R^{m'\times k}}}
		\norm{S_2ARZ_R Z_S SAR_2 - S_2AR_2}_F^2
			+ \lambda \norm{S_2ARZ_R}_F^2 + \lambda \norm{Z_SSAR_2}_F^2,
\]
with constant probability
$\tY \equiv AR\tZ_R$ and $\tX\equiv \tZ_S SA$ satisfy
\[
\norm{\tY\tX - A}_F^2 + \lambda\norm{\tY}_F^2 + \lambda\norm{\tX}_F^2
	\le (1+\eps)(\norm{Y^*X^* - A}_F^2 + \lambda\norm{Y^*}_F^2 + \lambda\norm{X^*}_F^2).
\]
The matrices $S_2 AR$, $SAR$, and $SAR_2$ can be computed in $O(\nnz(A)) + \poly(\sd_\lambda(Y^*)/\eps)$
time.
\end{theorem}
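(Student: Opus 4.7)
The plan is to further compress the problem from Lemma~\ref{lem ZZ} by applying one affine embedding on the left and one on the right, each constructed via Lemma~\ref{lem AE}. Denote the (unsketched) objective by
\[
g(Z_R, Z_S) \equiv \norm{ARZ_R Z_S SA - A}_F^2 + \lambda\norm{ARZ_R}_F^2 + \lambda\norm{Z_S SA}_F^2,
\]
so that Lemma~\ref{lem ZZ} already guarantees $(1+\eps)$-approximation relative to the original problem. It suffices to exhibit $S_2$ and $R_2$ making the sketched objective $\tilde g(Z_R, Z_S)$ in the theorem a uniform $(1\pm O(\eps))$-approximation of $g$ over all $(Z_R, Z_S)$; rescaling $\eps$ by a constant then yields the claim.

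First I would choose $S_2\in\R^{p\times n}$ so that it is an affine $\eps$-embedding for the pair $(AR, A)$. Since $\rank(AR)\le m' = \tO(\eps^{-1}\min\{k,\eps^{-1}\sd_\lambda(Y^*)\})$, Lemma~\ref{lem AE} yields such an $S_2$ (sparse embedding composed with SRHT) with $p = \tO(\rank(AR)/\eps^2)$, matching the stated bound. Viewing $W = Z_R Z_S SA$ as an arbitrary matrix, the affine embedding property gives $\norm{S_2(ARW - A)}_F^2 = (1\pm\eps)\norm{ARW - A}_F^2$ for every $W$, and the implied subspace embedding on $\range(AR)$ gives $\norm{S_2 AR Z_R}_F^2 = (1\pm\eps)\norm{AR Z_R}_F^2$ for every $Z_R$; the third term $\lambda\norm{Z_S SA}_F^2$ is untouched. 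Symmetrically I would then choose $R_2\in\R^{d\times p'}$ so that $R_2^\top$ is an affine $\eps$-embedding for $((SA)^\top, (S_2A)^\top)$. Since $\rank(SA)\le m = \tO(\eps^{-1}\sd_\lambda(Y^*))$, this requires $p' = \tO(m/\eps^2)$ rows, again matching the stated bound. Transposing and taking $W = S_2AR Z_R Z_S$ as arbitrary gives, uniformly in $Z_R, Z_S$,
\[
\norm{(S_2ARZ_RZ_S SA - S_2A)R_2}_F^2 = (1\pm\eps)\norm{S_2ARZ_RZ_S SA - S_2A}_F^2,\quad \norm{Z_S SA R_2}_F^2 = (1\pm\eps)\norm{Z_S SA}_F^2,
\]
and the second term $\lambda\norm{S_2ARZ_R}_F^2$ is untouched.

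Composing the two uniform $(1\pm\eps)$-estimates yields $\tilde g = (1\pm O(\eps))g$ over all $(Z_R, Z_S)$, so any minimizer $(\tZ_S, \tZ_R)$ of $\tilde g$ obeys $g(\tZ_R, \tZ_S)\le(1+O(\eps))\min_{Z_R, Z_S} g(Z_R, Z_S)$, which chained with Lemma~\ref{lem ZZ} gives the claimed $(1+\eps)$-approximation (after adjusting $\eps$ by a constant). The time bound follows because $SA$ and $AR$ were already produced in Lemma~\ref{lem ZZ}, while Lemma~\ref{lem AE} produces $S_2A$ (and hence $S_2AR$ and $S_2AR_2$) and $SAR_2$ in $O(\nnz(A))$ time plus a $\poly(\sd_\lambda(Y^*)/\eps)$ overhead for the SRHT step, the small matrix products, and processing the already-sketched factors.

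The main subtlety is that the two sketches must act \emph{simultaneously} on a bilinear object $ARZ_RZ_S SA$ whose ``inner'' factors $Z_R, Z_S$ depend on each other. What makes this go through cleanly is that each affine embedding is uniform in the generic multiplier $W$: after applying $S_2$, the expression $S_2ARZ_RZ_S$ may be absorbed into an arbitrary $W$ when invoking the $R_2$-embedding, so the two applications decouple. The regularization terms are not a separate obstacle because an affine embedding for $(AR, A)$ automatically implies a subspace embedding for $\range(AR)$ (preserving $\lambda\norm{ARZ_R}_F^2$), and likewise for $R_2$ on $\range((SA)^\top)$.
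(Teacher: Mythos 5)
Your proposal is correct and follows the paper's proof: both apply Lemma~\ref{lem AE} twice, taking $S_2$ as an affine embedding for $(AR,A)$ and $R_2^\top$ as one for $(SA)^\top$ and the target, with $p=\tO(\rank(AR)/\eps^2)$ and $p'=\tO(\rank(SA)/\eps^2)$ exactly as you argue, and you usefully spell out the uniform $(1\pm O(\eps))$ preservation of all three terms that the paper leaves implicit. The one point where you are slightly vaguer than you could be is the running time, where the key trick (also only sketched in the paper) is to apply the two sparse-embedding components to $A$ simultaneously on both sides in $O(\nnz(A))$ time and only then apply the SRHT components to the resulting small matrix.
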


\begin{proof}
Apply Lemma~\ref{lem AE}, with
$A$ of the lemma mapping to $AR$,
$B$ of the lemma mapping to $A$,
$U$ to the left singular matrix of $AR$,
$S$ to $S_2$, and
$d$ to $m'$.

Also apply Lemma~\ref{lem AE} in an analogous way, but in transpose, to $SA$.
For the last statement: to compute $SAR$, apply the sparse embedding of $S$ and
the sparse embedding of $R$ to $A$ on each side, and then the SRHT components
to the resulting small matrix; the claimed time bound follows. The other sketches
are computed similarly.
The theorem follows.
\end{proof}

\begin{lemma}\label{lem boyd}
For $C\in\R^{p\times m'}, D\in\R^{m\times p'}$, $G\in\R^{p\times p'}$, the problem of finding
\begin{equation}\label{eq lowr frob}
\min_{\substack{Z_S\in\R^{k\times m} \\Z_R\in\R^{m'\times k}}}
	\norm{CZ_R Z_S D - G}_F^2 +\lambda\norm{CZ_R}_F^2 + \lambda\norm{Z_SD}_F^2,
\end{equation}
and the minimizing $CZ_R$ and $Z_S D$, can be solved in
\[
O(pm' r_C + p'm r_D+ r_Dp(p' + r_C))
\]
time, where $r_C\equiv\rank(C)\le \min\{m',p\}$, and $r_D\equiv \rank(D) \le  \min\{m,p'\}$.
\end{lemma}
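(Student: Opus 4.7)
My plan is to reparametrize the problem via thin rank-revealing factorizations of $C$ and $D$ so that it reduces to a closed-form ridge low-rank approximation of a small $r_C \times r_D$ matrix, then read off $CZ_R$ and $Z_S D$ directly without materializing $Z_R, Z_S$.

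First I would compute $C = Q_C R_C$ with $Q_C \in \R^{p\times r_C}$ having orthonormal columns and $R_C \in \R^{r_C \times m'}$ having full row rank, and $D = L_D Q_D^\top$ with $Q_D \in \R^{p'\times r_D}$ having orthonormal columns and $L_D \in \R^{m\times r_D}$ having full column rank, both via rank-revealing QR in time $O(pm'r_C)$ and $O(p'mr_D)$ respectively (the latter by QR of $D^\top$). Then I substitute $\hat Y := R_C Z_R \in \R^{r_C \times k}$ and $\hat X := Z_S L_D \in \R^{k\times r_D}$; because $R_C$ has full row rank and $L_D$ has full column rank, as $Z_R, Z_S$ range freely so do $\hat Y, \hat X$. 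Moreover $CZ_R = Q_C \hat Y$ and $Z_S D = \hat X Q_D^\top$, so $CZ_R Z_S D = Q_C \hat Y \hat X Q_D^\top$.

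Next I would rewrite the objective. Setting $\hat G := Q_C^\top G Q_D \in \R^{r_C\times r_D}$, the decomposition $G = Q_C \hat G Q_D^\top + (G - Q_C \hat G Q_D^\top)$ is orthogonal in the Frobenius inner product against anything of the form $Q_C M Q_D^\top$ (routine, using $Q_C^\top Q_C = \Iden$, $Q_D^\top Q_D = \Iden$, and idempotence/self-adjointness of the projectors). Combining this with $\norm{Q_C A}_F = \norm{A}_F$ and $\norm{B Q_D^\top}_F = \norm{B}_F$, the original objective becomes
\[
\norm{\hat Y \hat X - \hat G}_F^2 + \lambda \norm{\hat Y}_F^2 + \lambda \norm{\hat X}_F^2 + \norm{G - Q_C \hat G Q_D^\top}_F^2,
\]
where the last term is a constant independent of $\hat Y, \hat X$. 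The first three terms form an instance of \eqref{eq lowr} with input matrix $\hat G$ and rank parameter $k$, so by \eqref{eq shrink} the minimizers are obtained from the SVD $\hat G = U\Sigma V^\top$ as $\hat Y = U_k(\Sigma_k - \lambda \Iden_k)_+^{1/2}$ and $\hat X = (\Sigma_k - \lambda \Iden_k)_+^{1/2} V_k^\top$.

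Finally I would output $CZ_R = Q_C \hat Y$ and $Z_S D = \hat X Q_D^\top$. For the running time: the two QR steps contribute $O(pm'r_C + p'mr_D)$; forming $\hat G$ by first computing $GQ_D$ in $O(pp'r_D)$ and then $Q_C^\top(GQ_D)$ in $O(r_C p r_D)$ contributes $O(r_D p(p'+r_C))$; the SVD of the $r_C \times r_D$ matrix $\hat G$ and the two final back-multiplications $Q_C\hat Y$ and $\hat X Q_D^\top$ are dominated by the preceding costs (we may assume $k \le \min(r_C, r_D)$ without loss of generality, since otherwise $\hat Y\hat X$ cannot use the extra components). Summing gives the claimed $O(pm'r_C + p'mr_D + r_Dp(p'+r_C))$. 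The only step requiring care is verifying the cross-term vanishes in the orthogonal decomposition and that the induced parametrization of $\hat Y, \hat X$ is actually onto — both follow from the full-rank properties built into the QR factorizations.
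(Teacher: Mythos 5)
Your proposal is correct and follows essentially the same route as the paper: reduce via orthogonal bases for $\colspace(C)$ and $\rowspan(D)$ (your $Q_C$, $Q_D$ versus the paper's $U_C$, $U_D$), use matrix Pythagoras to strip off the constant residual, solve the resulting small instance of \eqref{eq lowr} on $Q_C^\top G Q_D$ via \eqref{eq shrink}, and account for costs identically. The only cosmetic difference is that the paper additionally recovers $Z_R$ and $Z_S$ themselves by a triangular back-solve (which is needed downstream in Theorem~\ref{thm lowr ridge}), whereas you correctly observe that the lemma as stated only requires the products $CZ_R$ and $Z_S D$.
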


\begin{proof}
Let $U_C$ be an orthogonal basis for $\colspace(C)$,
so that every matrix of the form $CZ_R$ is equal to
$U_C Z'_R$ for some $Z'_R$. Similarly let $U_D^\top$ be
an orthogonal basis for $\rowspan(D)$, so that
every matrix of the form $Z_S D$ is equal to one of the
form $Z'_S U_D$. Let $P_C\equiv U_C U_C^\top$
and $P_D \equiv U_D U_D^\top$.
Then using $P_C(\Iden - P_C)=0$, $P_D(\Iden-P_D)=0$,
and matrix Pythagoras,
\begin{align*}
\norm{CZ_R Z_S D - G}_F^2 & +\lambda\norm{CZ_R}_F^2 + \lambda\norm{Z_SD}_F^2
	\\ & = \norm{P_C U_C Z'_R Z'_S U_D^\top P_D - G}_F^2 +\lambda\norm{U_C Z'_R}_F^2 + \lambda\norm{Z'_S U_D^\top}_F^2
	\\ & = \norm{P_C U_C Z'_R Z'_S U_D^\top P_D - P_C G P_D}_F^2
		+ \norm{(\Iden - P_C)G}_F^2
	\\ & \qquad + \norm{P_C G (\Iden - P_D)}_F^2
		+\lambda\norm{Z'_R}_F^2 + \lambda\norm{Z'_S}_F^2.
\end{align*}
So minimizing \eqref{eq lowr frob} is equivalent to minimizing
\begin{align*}
\norm{P_C U_C Z'_R Z'_S U_D^\top P_D  & - P_C G P_D}_F^2
		+\lambda\norm{Z'_R}_F^2 + \lambda\norm{Z'_S}_F^2
	\\ & = \norm{U_C Z'_R Z'_S U_D^\top - U_C U_C^\top G U_D U_D^\top}_F^2
		+\lambda\norm{Z'_R}_F^2 + \lambda\norm{Z'_S}_F^2
	\\ & =  \norm{Z'_R Z'_S - U_C^\top G U_D}_F^2
		+\lambda\norm{Z'_R}_F^2 + \lambda\norm{Z'_S}_F^2.
\end{align*}
This has the form of \eqref{eq lowr}, mapping $Y$ of \eqref{eq lowr}
to $Z'_R$, $X$ to $Z'_S$, and $A$ to $U_C^\top G U_D$,
from which a solution of the form \eqref{eq shrink} can be obtained.

To recover $Z_R$ from $Z'_R$: we have
$C=U_C \left[
	\begin{smallmatrix}
		T_C & T'_C
	\end{smallmatrix}
\right]$,
for matrices $T_C$ and $T'_C$,
where upper triangular $T_C\in\R^{r_C\times r_C}$.
We recover $Z_R$ as
$\left[\begin{smallmatrix} T_C^{-1} \hat Z'_R\\ 0_{m-r_C\times k}\end{smallmatrix}\right]$,
since then $U_C Z'_R = CZ_R$. A similar back-substitution allows recovery
of $Z_S$ from $Z'_S$.

Running times: to compute $U_C$ and $U_D$,
$O(pm' r_C + mp' r_D)$;
to compute $U_C^\top G U_D$, $O(r_D p(p' + r_C))$;
to compute and use the SVD of $U_C^\top G U_D$ to
to solve \eqref{eq lowr} via \eqref{eq shrink},
$O(r_C r_D \min\{r_C, r_D\})$;
to recover $Z_R$ and $Z_S$, $O(k(r_C^2 + r_D^2))$.
Thus, assuming $k\le \min\{p,p'\}$ and using $r_C\le\min\{p,m'\}$ and
$r_D\le\min\{m, p'\}$, the total running time is
$O(pm'r_C + p'mr_D+ pp'(r_C + r_D))$, as claimed.

%
\end{proof}

\begin{theorem}\label{thm lowr ridge}
The matrices
$\tZ_S, \tZ_R$ of Theorem~\ref{thm ZZ} can be found in
$O(\nnz(A))+\poly(\sd_\lambda(Y^*)/\eps)$ time,
in particular $O(\nnz(A)) + \tO(\eps^{-7}\sd_\lambda(Y^*)^2\ \min\{k, \eps^{-1}\sd_\lambda(Y^*)\})$ time,
such that with constant probability,
$AR\tZ_R, \tZ_S SA$ is an $\eps$-approximate minimizer to \eqref{eq lowr}, that is,
\begin{align}
\norm{(AR\tZ_R) & ( \tZ_S SA) - A}_F^2 +  \lambda\norm{AR\tZ_R}_F^2 + \lambda\norm{ \tZ_S SA}_F^2
\\ & \le (1+\eps) \min_{\substack{Y\in\R^{n\times k}\\ X\in\R^{k\times d}}}
		\norm{YX - A}_F^2 + \lambda\norm{Y}_F^2 + \lambda\norm{X}_F^2.
\end{align}
With an additional $O(n+d)\poly(\sd_\lambda(Y^*)/\eps)$ time,
and in particular
\[
\tO(\eps^{-1}k\sd_\lambda(Y^*) (n+d + \min\{n,d\} \min\{k/\sd_\lambda(Y^*), \eps^{-1}\}))
\]
time, the solution matrices $\tY \equiv AR\tZ_R, \tX \equiv \tZ_S SA$ can be computed
and output.
\end{theorem}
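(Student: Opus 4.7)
The plan is to combine Theorem~\ref{thm ZZ}, which reduces the original low-rank ridge problem to a small instance involving the three sketched matrices $C \equiv S_2AR$, $D \equiv SAR_2$, and $G \equiv S_2AR_2$, with Lemma~\ref{lem boyd}, which solves exactly this type of small two-factor regularized problem and also returns the products $CZ_R$ and $Z_SD$. The output of Lemma~\ref{lem boyd} directly yields the desired $\tZ_R$ and $\tZ_S$, and via Theorem~\ref{thm ZZ} the matrices $AR\tZ_R$ and $\tZ_S SA$ are $(1+\eps)$-approximate minimizers of \eqref{eq lowr}. The whole argument is a plug-and-play; the nontrivial work is bookkeeping the dimensions to verify the claimed running times.

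First I would collect the dimensions: from Lemma~\ref{lem ZZ}, $m=\tO(\eps^{-1}\sd_\lambda(Y^*))$ and $m'=\tO(\eps^{-1}\min\{k,\eps^{-1}\sd_\lambda(Y^*)\})$; from Theorem~\ref{thm ZZ}, $p=\tO(\eps^{-2}m')$ and $p'=\tO(\eps^{-2}m)$. Thus $C \in \R^{p \times m'}$ has $r_C \le \min\{p,m'\}=m'$, and $D \in \R^{m \times p'}$ has $r_D \le \min\{m,p'\}=m$. By Theorem~\ref{thm ZZ}, the sketched matrices $C,D,G$ can themselves be computed in $O(\nnz(A))+\poly(\sd_\lambda(Y^*)/\eps)$ time (using a sparse embedding followed by an SRHT on each side as in Lemma~\ref{lem AE}).

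Plugging into the $O(pm'r_C + p'mr_D + r_Dp(p'+r_C))$ bound of Lemma~\ref{lem boyd}, the dominant term is $r_D\cdot p\cdot p' = m\cdot \tO(\eps^{-2}m')\cdot \tO(\eps^{-2}m) = \tO(\eps^{-4}m^2 m')$. Substituting $m=\tO(\eps^{-1}\sd_\lambda(Y^*))$ and $m'=\tO(\eps^{-1}\min\{k,\eps^{-1}\sd_\lambda(Y^*)\})$ gives
\[
\tO\bigl(\eps^{-4}\cdot \eps^{-2}\sd_\lambda(Y^*)^2\cdot \eps^{-1}\min\{k,\eps^{-1}\sd_\lambda(Y^*)\}\bigr)
=\tO\bigl(\eps^{-7}\sd_\lambda(Y^*)^2\min\{k,\eps^{-1}\sd_\lambda(Y^*)\}\bigr),
\]
matching the claim. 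Adding the $O(\nnz(A))$ cost of forming the sketches yields the overall time.

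Finally, for the explicit output I would form $\tY\equiv AR\tZ_R$ by first computing $AR\in\R^{n\times m'}$ (which is already available) and then multiplying by $\tZ_R\in\R^{m'\times k}$ in time $O(nm'k)$, and analogously $\tX\equiv \tZ_S SA$ in time $O(dmk)$. Choosing the better associativity on the side with the smaller of $n,d$ gives total output time $O(k\min\{nm'+dm,nm+dm'\})$, which after substituting the values of $m,m'$ yields the stated $\tO(\eps^{-1}k\sd_\lambda(Y^*)(n+d+\min\{n,d\}\min\{k/\sd_\lambda(Y^*),\eps^{-1}\}))$. The only place that requires any thought is bounding $r_C$ and $r_D$ by $m'$ and $m$ respectively; this is immediate because $C$ has only $m'$ columns and $D$ only $m$ rows, so no affine-embedding analysis is needed here beyond what Theorem~\ref{thm ZZ} already provides.
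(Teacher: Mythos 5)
Your proposal is correct and is essentially the paper's own proof, which simply cites Theorem~\ref{thm ZZ} and Lemma~\ref{lem boyd} and notes that one may work with $A^\top$ for efficiency; your ``choose the side with the smaller of $n,d$'' step is exactly that transposition, and your dimension bookkeeping ($r_C\le m'$, $r_D\le m$, dominant term $r_Dpp'=\tO(\eps^{-4}m^2m')$) correctly reproduces the stated bounds that the paper leaves implicit.
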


\begin{proof}
Follows from Theorem~\ref{thm ZZ} and Lemma~\ref{lem boyd}, noting that for efficiency's sake we can use
the transpose of $A$ instead of $A$.
\end{proof}

%
%
%
%


\section{Regularized Canonical Correlation Analysis}
\label{sec:cca}

\cite{ABTZ14} showed how to use sketching to compute an approximate canonical correlation analysis (CCA).
In this section we consider a regularized version of CCA.

\begin{definition} Let $A\in\R^{n\times d}$ and $B\in\R^{n\times {d'}}$, and let
$q=\max(\rank(A^\top A + \lambda_1 \Iden_d), \rank(B^\top B + \lambda_2 \Iden_{d'}))$.
Let $\lambda_1 \geq 0$ and $\lambda_2 \geq 0$. The {\em $(\lambda_1, \lambda_2)$ canonical correlations }
$\sigma^{(\lambda_1, \lambda_2)}_1 \geq \dots \geq \sigma^{(\lambda_1, \lambda_2)}_q$ and
{\em $(\lambda_1, \lambda_2)$ canonical weights } $u_1,\dots,u_q\in \R^d$ and $v_1,\dots,v_q\in \R^{d'}$
are ones that maximize
$$
\tr(U^\top A^\top B V)
$$
subject to
\begin{eqnarray*}
  U^\top (A^\top A + \lambda_1 \Iden_d) U &=& I_q \\
  V^\top (B^\top B + \lambda_2 \Iden_{d'}) V &=& I_q \\
  U^\top A^\top B V &=& \diag(\sigma^{(\lambda_1, \lambda_2)}_1,\dots ,\sigma^{(\lambda_1, \lambda_2)}_q)
\end{eqnarray*}
where $U=\left[u_1,\dots,u_q \right]\in\R^{n\times q}$ and $V=\left[v_1,\dots,v_q \right]\in\R^{d'\times q}$.
\end{definition}

One classical way to solve non-regularized CCA ($\lambda_1 = \lambda_2 = 0$) is the Bj\"{o}rck-Golub algorithm~\cite{BG73}.
The regularized problem can be solved using a variant of that algorithm, as is shown in the following.

\begin{definition}
Let $A\in \R^{n \times d}$ with $n\geq d$ and let $\lambda \geq 0$. $A=QR$ is a {\em $\lambda$-QR factorization}
if $Q$ is full rank, $R$ is upper triangular and $R^\top R = A^\top A + \lambda \Iden_d$.
\end{definition}

\begin{remark}
A $\lambda$-QR factorization always exists, and $R$ will be invertible for $\lambda > 0$.
$Q$ has orthonormal columns for $\lambda = 0$.
\end{remark}

\begin{fact}
  For a $\lambda$-QR factorization $A=QR$ we have $Q^\top Q + \lambda R^{-\top} R^{-1} = \Iden_d$.
\end{fact}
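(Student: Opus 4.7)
The plan is straightforward algebra using the two defining properties of a $\lambda$-QR factorization: $A=QR$ and $R^\top R = A^\top A + \lambda \Iden_d$. First I would substitute $A = QR$ into $A^\top A$, obtaining $A^\top A = R^\top Q^\top Q R$, so that the defining identity $R^\top R = A^\top A + \lambda \Iden_d$ becomes
\[
R^\top R = R^\top Q^\top Q R + \lambda \Iden_d.
\]

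Next I would move the $\lambda \Iden_d$ term by rewriting the equation as $R^\top Q^\top Q R = R^\top R - \lambda \Iden_d$, and then conjugate by $R^{-1}$ (which exists under the stated assumption that $R$ is invertible, as guaranteed for $\lambda > 0$ in the preceding remark). Concretely, multiplying on the left by $R^{-\top}$ and on the right by $R^{-1}$ yields
\[
Q^\top Q = \Iden_d - \lambda R^{-\top} R^{-1},
\]
which is exactly the claim after rearrangement.

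There is no real obstacle here; the only subtle point to flag is the invertibility of $R$, which is needed to perform the conjugation step. That is already covered by the remark immediately preceding the fact. For the boundary case $\lambda = 0$, the conclusion reduces to $Q^\top Q = \Iden_d$, consistent with the remark that $Q$ has orthonormal columns in that case (and the second term on the left-hand side can be interpreted as vanishing, so the identity remains formally correct under the natural reading).
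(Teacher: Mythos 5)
Your proof is correct and follows essentially the same route as the paper's: substitute $A=QR$ into the defining identity $R^\top R = A^\top A + \lambda \Iden_d$ and conjugate by $R^{-1}$ on the right and $R^{-\top}$ on the left. Your added remarks on invertibility of $R$ and the $\lambda=0$ case are sensible but not needed beyond what the paper already states.
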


\begin{proof}
A direct consequence of $R^\top R = A^\top A + \lambda \Iden_d$ (multiply from the right by $R^{-1}$ and the left by $R^{-\top}$).
\end{proof}

\begin{fact}
  For a $\lambda$-QR factorization $A=QR$ we have $\sd_\lambda(A) = \normF{Q}^2$.
\end{fact}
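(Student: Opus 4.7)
The plan is to combine the preceding Fact ($Q^\top Q + \lambda R^{-\top} R^{-1} = \Iden_d$) with the SVD of $A$, via a short trace calculation. I will first treat $\lambda > 0$, where the Remark guarantees $R$ is invertible, and then comment on the $\lambda = 0$ case.

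First, take traces in the identity $Q^\top Q + \lambda R^{-\top} R^{-1} = \Iden_d$. Using the cyclic property of the trace, $\tr(R^{-\top} R^{-1}) = \tr((R^\top R)^{-1})$, and since $R^\top R = A^\top A + \lambda \Iden_d$ by definition of the $\lambda$-QR factorization, this yields
\[
\normF{Q}^2 \;=\; d - \lambda \tr\bigl((A^\top A + \lambda \Iden_d)^{-1}\bigr).
\]

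Next, write $A = U\Sigma V^\top$ as a full SVD with $V\in\R^{d\times d}$ orthogonal, so that $A^\top A + \lambda \Iden_d = V(\Sigma^\top\Sigma + \lambda \Iden_d) V^\top$, where the diagonal of $\Sigma^\top\Sigma$ holds $\sigma_1^2,\dots,\sigma_d^2$ (with $\sigma_i = 0$ for $i > \rank(A)$). Then $(A^\top A + \lambda \Iden_d)^{-1} = V(\Sigma^\top\Sigma + \lambda \Iden_d)^{-1} V^\top$, so its trace is $\sum_{i=1}^d 1/(\sigma_i^2 + \lambda)$. Substituting and simplifying,
\[
\normF{Q}^2 \;=\; \sum_{i=1}^d \left( 1 - \frac{\lambda}{\sigma_i^2 + \lambda}\right) \;=\; \sum_{i=1}^d \frac{\sigma_i^2}{\sigma_i^2 + \lambda}.
\]
Terms with $\sigma_i = 0$ vanish, so the sum equals $\sum_{i:\sigma_i > 0} 1/(1+\lambda/\sigma_i^2) = \sd_\lambda(A)$, matching the definition given earlier.

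For the boundary case $\lambda = 0$, the Remark notes that $Q$ has orthonormal columns, and the column span of $Q$ equals $\range(A)$; hence $\normF{Q}^2$ equals the number of independent columns of $Q$, namely $\rank(A) = \sd_0(A)$. This is really the only edge case to worry about: for $\lambda > 0$ the argument is just trace manipulation plus an SVD diagonalization, and the main step is recognizing that the Fact already encodes essentially the same identity one would derive directly from $\sd_\lambda(A) = \tr(A(A^\top A + \lambda\Iden_d)^{-1}A^\top)$.
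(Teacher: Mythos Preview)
Your proof is correct and follows essentially the same route as the paper's: take traces in the preceding Fact, rewrite $\tr(R^{-\top}R^{-1})$ as $\tr((A^\top A+\lambda\Iden_d)^{-1})$, diagonalize via the SVD, and simplify to $\sum_i \sigma_i^2/(\sigma_i^2+\lambda)=\sd_\lambda(A)$. The paper's proof is line-for-line the same trace calculation, just more terse.

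One small comment on your $\lambda=0$ paragraph (which the paper does not address at all): the claim that $\normF{Q}^2=\rank(A)$ does not follow from the Remark as stated. If $Q\in\R^{n\times d}$ has orthonormal columns then $\normF{Q}^2=d$, regardless of $\rank(A)$; the assertion ``the column span of $Q$ equals $\range(A)$'' is neither in the Remark nor generally true when $A$ is column-rank-deficient. In the paper's context $\lambda_1,\lambda_2>0$ throughout the CCA section, so the edge case never arises and your main argument suffices.
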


\begin{proof}
\begin{eqnarray*}
  \normF{Q}^2 = \tr(Q^\top Q) &=& \tr(\Iden_d - \lambda R^{-\top} R^{-1}) \\
   &=& d - \lambda \tr(R^{-\top} R^{-1}) \\
   &=& d - \lambda \tr((A^\top A + \lambda \Iden_d)^{-1}) \\
   &=& d - \sum_{i=1}^{d} \frac{\lambda}{\sigma^2_i + \lambda} \\
   &=& \sum_{i=1}^{d} \frac{\sigma^2_i}{\sigma^2_i + \lambda} \\
   &=& \sd_\lambda(A)\,.
\end{eqnarray*}
\end{proof}

\begin{theorem}[Regularized  Bj\"{o}rck-Golub]\label{thm reg bg}
Let $A=Q_A R_A$ be a $\lambda_1$-QR factorization of $A$, and $B=Q_B R_B$ be a $\lambda_2$-QR factorization of $B$.
Assume that $\lambda_1 > 0$ and $\lambda_2 > 0$.
The $(\lambda_1, \lambda_2)$ canonical correlations are exactly the singular values of $Q^\top_A Q_B$. Furthermore,
if $Q^\top_A Q_B = M \Sigma N^T$ is a thin SVD of $Q^\top_A Q_B$, then the columns of $R^{-1}_A M$ and $R^{-1}_B N$ are
canonical weights.
\end{theorem}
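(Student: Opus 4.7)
The plan is to perform a change of variables that reduces the regularized CCA problem to an unregularized one involving $Q_A^\top Q_B$, and then invoke the classical argument underlying the Björck-Golub algorithm.

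First, since $\lambda_1,\lambda_2>0$, the triangular factors $R_A$ and $R_B$ are invertible (as noted in the remark after the $\lambda$-QR definition). Introduce new variables $\tilde U \equiv R_A U$ and $\tilde V \equiv R_B V$; this is a bijective change of coordinates. Using $R_A^\top R_A = A^\top A + \lambda_1 I_d$ and $R_B^\top R_B = B^\top B + \lambda_2 I_{d'}$, the two quadratic constraints transform:
\[
U^\top(A^\top A + \lambda_1 I_d)U = \tilde U^\top R_A^{-\top}(R_A^\top R_A) R_A^{-1}\tilde U = \tilde U^\top \tilde U,
\]
and similarly $V^\top(B^\top B + \lambda_2 I_{d'})V = \tilde V^\top \tilde V$, so the constraints become the standard orthonormality conditions $\tilde U^\top \tilde U = I_q$ and $\tilde V^\top \tilde V = I_q$.

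Next I would rewrite the objective in these coordinates. From $A = Q_A R_A$ we get $AU = Q_A R_A R_A^{-1}\tilde U = Q_A \tilde U$, and analogously $BV = Q_B \tilde V$. Hence
\[
U^\top A^\top B V = \tilde U^\top Q_A^\top Q_B \tilde V,
\]
so the problem is exactly to maximize $\tr(\tilde U^\top Q_A^\top Q_B \tilde V)$ over $\tilde U,\tilde V$ with orthonormal columns, subject to $\tilde U^\top Q_A^\top Q_B \tilde V$ being diagonal with the claimed entries. This is precisely the optimization solved by the thin SVD of the (typically non-square) matrix $Q_A^\top Q_B$: if $Q_A^\top Q_B = M\Sigma N^\top$, the classical argument (von Neumann's trace inequality, together with the diagonality requirement that pins down the ordered singular values) shows the optimum is attained at $\tilde U = M$, $\tilde V = N$, with the $(\lambda_1,\lambda_2)$ canonical correlations equal to the singular values of $Q_A^\top Q_B$.

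Finally, inverting the change of variables yields the canonical weights $U = R_A^{-1}M$ and $V = R_B^{-1}N$. The main obstacle is cosmetic rather than substantive: one has to be slightly careful about the value of $q$ and the dimensions of the thin SVD, since $Q_A\in\R^{n\times d}$ and $Q_B\in\R^{n\times d'}$ need not have full column rank even though $R_A,R_B$ are invertible; but the rank of $Q_A^\top Q_B$ is at most $q$, and the three constraints together pick out exactly its top $q$ singular triples (padding with directions in the null space when necessary). Once this bookkeeping is handled, the reduction above delivers the theorem.
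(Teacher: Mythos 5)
Your proposal is correct and follows essentially the same route as the paper: the substitution $\tilde U = R_A U$, $\tilde V = R_B V$ turns the regularized constraints into orthonormality, the objective into $\tr(\tilde U^\top Q_A^\top Q_B \tilde V)$, and von Neumann's trace inequality then identifies the optimum with the singular triples of $Q_A^\top Q_B$. The extra care you take with the rank of $Q_A^\top Q_B$ versus $q$ is a reasonable addition the paper glosses over, but it does not change the argument.
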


\begin{proof}
The constraints on $U$ and $V$ imply that $R_A U$ and $R_B V$ are orthonormal matrices, so the problem is equivalent to
maximizing $\tr(\tilde{U}^\top Q^\top_A Q_B \tilde{V})$ subject to $\tilde{U}$ and $\tilde{V}$ being orthonormal. A well-known result by Von Neumann (see~\cite{GZ95}) now implies that the maximum is bounded by the sum of the singular values of $Q^\top_A Q_B$ and that quantity is attained by setting $\tilde{U}=M$ and $M=\tilde{V}$. Simple algebra now establishes that
$U^\top A^\top B V = \Sigma$ and that the constraints hold.
\end{proof}

We now consider how to approximate the computation using sketching. The basic idea is similar to the one used in~\cite{ABTZ14} to accelerate the computation of non-regularized CCA:
compute the regularized canonical correlations and canonical weights of the pair $(SA,SB)$ for a sufficiently large
subspace embedding matrix $S$. Similarly to~\cite{ABTZ14}, we define the notion of approximate regularized CCA, and show
that for large enough $S$ we find an approximate CCA with high probability.

\begin{definition}[Approximate $(\lambda_1,\lambda_2)$ regularized CCA)]
  For $0\leq \eta \leq 1$, an $\eta$-approximate $(\lambda_1,\lambda_2)$ regularized CCA of $(A,B)$ is a set
  of positive numbers $\hat{\sigma}_1\geq \dots \geq \hat{\sigma}_q$, and vectors $\hat{u}_1,\dots,\hat{u}_q\in \R^d$ and $\hat{v}_1,\dots,\hat{v}_q\in \R^{d'}$ such that
  \begin{enumerate}[(a)]
    \item For every $i$,
    $$
    \left| \hat{\sigma}_i - \sigma_i^{(\lambda_1, \lambda_2)}\right| \leq \eta\,.
    $$
    \item Let  $\hat{U}=\left[\hat{u}_1,\dots,\hat{u}_q \right]\in\R^{n\times q}$ and
    	$\hat{V}=\left[\hat{v}_1,\dots,\hat{v}_q \right]\in\R^{d'\times q}$. We have,
        $$
        \left| \hat{U}^\top (A^\top A + \lambda_1 \Iden_d) \hat{U} - I_q\right| \leq \eta
        $$
        and
        $$
        \left| \hat{V}^\top (B^\top B + \lambda_2 \Iden_{d'}) \hat{V}s - I_q \right| \leq \eta\,.
        $$
        In the above, the notation $\left| X \right| \leq \alpha$ should be understood as
        entry-wise inequality.

    \item For every $i$,
    $$
    \left| \hat{u}^\top_i A^\top B \hat{v}_i - \sigma_i^{(\lambda_1, \lambda_2)}\right| \leq \eta\,.
    $$
  \end{enumerate}
\end{definition}

\begin{theorem}\label{thm approx reg cca}
If $S$ is a sparse embedding matrix with $m=\Omega(\max(\sd_{\lambda_1}(A), \sd_{\lambda_2}(B))^2 / \epsilon^2)$ rows, then with high probability the $(\lambda_1,\lambda_2)$ canonical correlations and canonical weights of $(SA,SB)$ form an  $\epsilon$-approximate $(\lambda_1,\lambda_2)$ regularized CCA for $(A,B)$.
\end{theorem}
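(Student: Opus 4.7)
The plan is to work within the framework of Theorem~\ref{thm reg bg}. Let $A=Q_A R_A$ and $B=Q_B R_B$ be the $\lambda_1$- and $\lambda_2$-QR factorizations of $A$ and $B$, and let $SA = \tilde Q_A \tilde R_A$, $SB = \tilde Q_B \tilde R_B$ be the corresponding factorizations of the sketched inputs. Because $\lambda_1,\lambda_2>0$, all four $R$-factors are invertible, and from $SA = SQ_A R_A = \tilde Q_A \tilde R_A$ we read off $\tilde Q_A = S Q_A R_A \tilde R_A^{-1}$, which yields the key identity
\[
\tilde Q_A^\top \tilde Q_B \;=\; \tilde R_A^{-\top} R_A^\top \,(Q_A^\top S^\top S Q_B)\, R_B \tilde R_B^{-1}.
\]
By Theorem~\ref{thm reg bg}, the sketched canonical correlations $\hat\sigma_i$ and weights $\hat U = \tilde R_A^{-1}\tilde M$, $\hat V = \tilde R_B^{-1}\tilde N$ come from the thin SVD $\tilde Q_A^\top \tilde Q_B = \tilde M \tilde\Sigma \tilde N^\top$, while the exact ones come from the SVD of $Q_A^\top Q_B$. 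So everything reduces to controlling the above identity.

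Two consequences of the sparse-embedding approximate matrix product bound $\|W^\top S^\top S H - W^\top H\|_F \le C\|W\|_F\|H\|_F/\sqrt m$ from \cite{cw13,nn13,mm13} drive the analysis. First, using the Fact $\|Q_A\|_F^2 = \sd_{\lambda_1}(A)$ and $\|\cdot\|_2 \le \|\cdot\|_F$, the choice $W = H = Q_A$ gives $\|Q_A^\top S^\top S Q_A - Q_A^\top Q_A\|_2 \le \epsilon$ for $m = \Omega(\sd_{\lambda_1}(A)^2/\epsilon^2)$ (and similarly for $Q_B$); exactly as in the proof of Lemma~\ref{lem reg}, this is equivalent to $\hat S = \left[\begin{smallmatrix} S & 0 \\ 0 & \Iden_d \end{smallmatrix}\right]$ being an $\epsilon$-subspace embedding of $\twomat{A}{\sqrt{\lambda_1}\Iden_d}$, which unpacks to $x^\top \tilde R_A^\top \tilde R_A x = (1\pm\epsilon)\,x^\top R_A^\top R_A x$ for all $x$. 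Setting $x = R_A^{-1}y$ gives $\|\tilde R_A R_A^{-1}y\|^2 = (1\pm\epsilon)\|y\|^2$, so every singular value of $R_A \tilde R_A^{-1}$ lies in $[1-O(\epsilon),\,1+O(\epsilon)]$ (and likewise for $R_B \tilde R_B^{-1}$). Second, with $W = Q_A$, $H = Q_B$, the same bound gives $\|Q_A^\top S^\top S Q_B - Q_A^\top Q_B\|_F \le \epsilon$. Both requirements are dominated by the hypothesis $m = \Omega(\max(\sd_{\lambda_1}(A), \sd_{\lambda_2}(B))^2/\epsilon^2)$.

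Plugging these into the identity for $\tilde Q_A^\top \tilde Q_B$: by multiplicativity of singular values under pre- and post-multiplication by near-isometries, $\sigma_i(\tilde Q_A^\top \tilde Q_B) = (1\pm O(\epsilon))\,\sigma_i(Q_A^\top S^\top S Q_B)$, and Weyl's inequality paired with the second property gives $\sigma_i(Q_A^\top S^\top S Q_B) = \sigma_i(Q_A^\top Q_B) \pm \epsilon$. Since $\sigma_i(Q_A^\top Q_B) \le 1$ (from the Fact $Q^\top Q = \Iden_d - \lambda R^{-\top} R^{-1} \preceq \Iden_d$), condition (a) of the approximate-CCA definition holds with $\eta = O(\epsilon)$. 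For (b), setting $N = R_A \tilde R_A^{-1}$ the first property gives $\|N^\top N - \Iden_d\|_2 \le O(\epsilon)$, so
\[
\hat U^\top (A^\top A + \lambda_1 \Iden_d)\hat U \;=\; \tilde M^\top N^\top N \tilde M \;=\; \Iden_q \pm O(\epsilon)
\]
in spectral norm (using $\tilde M^\top \tilde M = \Iden_q$), and entrywise deviation is bounded by spectral norm; the same argument handles $B$. Condition (c) follows from $\hat U^\top A^\top B \hat V - \tilde\Sigma = \tilde M^\top \tilde R_A^{-\top} R_A^\top (Q_A^\top Q_B - Q_A^\top S^\top S Q_B) R_B \tilde R_B^{-1} \tilde N$, whose spectral norm is $O(\epsilon)$ by the two properties, combined with condition (a).

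The main obstacle is the careful passage from the symmetric operator inequality $\tilde R_A^\top \tilde R_A = (1\pm\epsilon) R_A^\top R_A$ to a multiplicative singular-value bound on the non-symmetric product $R_A \tilde R_A^{-1}$, and then composing the two distinct error sources through the identity for $\tilde Q_A^\top \tilde Q_B$ without degradation; notably, we never need $R_A \tilde R_A^{-1}$ itself close to $\Iden_d$, only its singular values close to $1$, which is exactly what the subspace embedding provides. A minor care point is converting the Frobenius/spectral norm bounds we obtain into the entrywise form used in the definition of approximate regularized CCA, which is absorbed into constants in the $\Omega(\cdot)$. To boost from constant to high probability one composes or repeats the sketch in the standard way.
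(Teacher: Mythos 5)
Your proposal is correct and follows essentially the same route as the paper's proof: the same $\lambda$-QR framework from Theorem~\ref{thm reg bg}, the same three approximate matrix-product conditions on $Q_A^\top S^\top S Q_A$, $Q_B^\top S^\top S Q_B$, and $Q_A^\top S^\top S Q_B$, and the same combination of Weyl's inequality with a multiplicative singular-value perturbation bound (your ``near-isometry'' argument is exactly the content of the cited result of Eisenstat--Ipsen with $D_L = R_A^{-\top}R_{SA}^\top$). The only difference is cosmetic: for condition (b) you exploit $\tilde M^\top\tilde M = I_q$ to get a one-line spectral-norm identity, where the paper argues entrywise via Courant--Fischer and a small bootstrapping step; both are sound.
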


\begin{proof}
We denote the approximate correlations and weights by $\hat{\sigma}_1\geq \dots \geq \hat{\sigma}_q$, $\hat{u}_1,\dots,\hat{u}_q\in \R^d$ and $\hat{v}_1,\dots,\hat{v}_q\in \R^{d'}$. Let  $\hat{U}=\left[\hat{u}_1,\dots,\hat{u}_q \right]\in\R^{n\times q}$ and $\hat{V}=\left[\hat{v}_1,\dots,\hat{v}_q \right]\in\R^{d'\times q}$.
Let $A=Q_A R_A$ be a $\lambda_1$-QR factorization of $A$, $B=Q_B R_B$ be a $\lambda_2$-QR factorization of $B$,
$SA=Q_{SA} R_{SA}$ be a $\lambda_1$-QR factorization of $SA$, and $B=Q_{SB} R_{SB}$ be a $\lambda_2$-QR factorization of $SB$. We use the notation $\sigma_i(\cdot)$ to denote the $i$th singular values of a matrix.

In the following we show that all three claims hold if the following three inequalities hold:
\begin{eqnarray*}
  \norm{Q^\top_{A} S^\top S Q_{B} - Q^\top_{A} Q_{B}}_F & \leq & \epsilon/2 \\
  \norm{Q^\top_{A} S^\top S Q_{A} - Q^\top_{A} Q_{A}}_F & \leq & \epsilon/4  \\
  \norm{Q^\top_{B} S^\top S Q_{B} - Q^\top_{B} Q_{B}}_F & \leq & \epsilon/4\,.
\end{eqnarray*}
Since for sparse embeddings it holds with high probability that
$$\norm{W^\top S^\top S H - W^\top H}_F \le C\norm{W}_F\norm{H}_F/\sqrt{m},$$
for a constant $C$, and since $\sd_{\lambda_1}(A)=\norm{Q_A}^2_F$ and  $\sd_{\lambda_2}(B)=\norm{Q_B}^2_F$,
all three will hold with high probability with $m$ that is large enough as in the theorem statement.

\paragraph{Proof of (a).} As a consequence of Theorem~\ref{thm reg bg}, we have
\begin{eqnarray*}
  \left| \hat{\sigma}_i - \sigma_i^{(\lambda_1, \lambda_2)}\right| &=& \left| \sigma_i(Q^\top_{SA} Q_{SB}) - \sigma_i(Q^\top_{A} Q_{B}) \right| \\
  &\leq& \left| \sigma_i(Q^\top_{SA} Q_{SB}) - \sigma_i(Q^\top_{A} S^\top S Q_{B}) \right| + \left| \sigma_i(Q^\top_{A} S^\top S Q_{B}) - \sigma_i(Q^\top_{A} Q_{B}) \right|
\end{eqnarray*}

It is always the case that $\left|\sigma_i(\Psi) - \sigma_i(\Phi) \right| \leq \norm{\Psi - \Phi}_2$~\cite[Corollary 7.3.5]{HJ13} so with high probability
\begin{eqnarray*}
  \left| \sigma_i(Q^\top_{A} S^\top S Q_{B}) - \sigma_i(Q^\top_{A} Q_{B}) \right| &\leq& \norm{Q^\top_{A} S^\top S Q_{B} - Q^\top_{A} Q_{B}}_2 \\
   &\leq& \norm{Q^\top_{A} S^\top S Q_{B} - Q^\top_{A} Q_{B}}_F \\
   &\leq& \epsilon/2\,.
\end{eqnarray*}

To bound $ \left| \sigma_i(Q^\top_{SA} Q_{SB}) - \sigma_i(Q^\top_{A} S^\top S Q_{B}) \right|$ we use the
fact~\cite[Theorem 3.3]{EI95} that for nonsingular $D_L$ and $D_R$ we have $\left|\sigma_i(\Psi) - \sigma_i(\Phi) \right| \leq \gamma \cdot \sigma_i(\Psi)$ for
$$
\gamma=\max(\norm{D_L D^\top_L - \Iden}_2, \norm{D^\top_R D_R - \Iden}_2)\,.
$$
Let $D_L= R^{-\top}_A R^\top_{SA}$ and $D_R=R_{SB}R^{-1}_B$. Both are nonsingular because $\lambda_1 > 0$ and $\lambda_2 > 0$. We now have
\begin{eqnarray*}
   \norm{D_L D^\top_L - \Iden}_2 &=& \norm{R^{-\top}_A R^\top_{SA} R_{SA} R^{-1}_A - \Iden}_2\\
   &=& \norm{R^{-\top}_A (A^\top S^\top S A + \lambda_1 I) R^{-1}_A  - \Iden}_2\\
   &=& \norm{Q_A^\top S^\top S Q_A + \lambda_1 R^{-\top}_A R^{-1}_A - \Iden}_2 \\
   &=& \norm{Q^\top_{A} S^\top S Q_{A} - Q^\top_{A} Q_A}_2 \\
   & \leq & \epsilon/4\,.
\end{eqnarray*}
Similarly, we bound $\norm{D^\top_R D_R - \Iden}_2\leq \epsilon/4$. We now have
\begin{eqnarray*}
\left| \sigma_i(Q^\top_{SA} Q_{SB}) - \sigma_i(Q^\top_{A} S^\top S Q_{B}) \right| & = &
\left| \sigma_i(R^{-1}_A A^\top S^\top S B R^{-1}_B) - \sigma_i(R^{-1}_{SA} A^\top S^\top S B R^{-1}_{SB}) \right| \\
& \leq & \epsilon/4 \cdot \sigma_i(R^{-1}_A A^\top S^\top S B R^{-1}_B) \\
& = & \epsilon/4 \cdot \sigma_i(Q^\top_A S^\top S Q^\top_B) \\
& \leq & \epsilon/4 \cdot \left(\sigma_i(Q^\top_A Q_B) + \left|\sigma_i(Q^\top_A Q_B) - \sigma_i(Q^\top_A S^\top S Q_B) \right|\right) \\
& \leq & \epsilon / 4 \cdot (1 + \epsilon/2) \\
& \leq & \epsilon / 2\,.
\end{eqnarray*}

\paragraph{Proof of (b).} We prove the claim for $\hat{U}$. The proof for $\hat{V}$ is analogous. We need to show that
with high probability $ \left| \hat{U}^\top (A^\top A + \lambda_1 \Iden_d) \hat{U} - I_q\right| \leq \epsilon$.
Note, that since $\hat{u}_1,\dots,\hat{u}_q$ are canonical weights of $(SA,SB)$, then we know that
$\hat{U}^\top (A^\top S^\top S A + \lambda_1 \Iden_d) \hat{U} = I_q$. So, the claim is equivalent to the claim
that for all $i,j$ we have
$$
\left| \hat{u}_i^\top (A^\top A + \lambda_1 \Iden_d) \hat{u}_j - \hat{u}_i^\top (A^\top S^\top S A + \lambda_1 \Iden_d) \hat{u}_j \right| \leq \epsilon\,.
$$
For all $i$, $j$, we have
\begin{eqnarray*}
  \left| \hat{u}_i^\top (A^\top A + \lambda_1 \Iden_d) \hat{u}_j - \hat{u}_i^\top (A^\top S^\top S A + \lambda_1 \Iden_d) \hat{u}_j \right| &=& \left| \hat{u}_i^\top R^\top_A R_A \hat{u}_j - \hat{u}_i^\top R^\top_{SA} R_{SA} \hat{u}_j \right| \\
   &=& \left| \hat{u}_i^\top \left(R^\top_A R_A -  R^\top_{SA} R_{SA} \right) \hat{u}_j \right| \\
  &=& \left| \hat{u}_i^\top R^\top_A \left( \Iden - R^{-\top}_A R^\top_{SA} R_{SA} R^{-1}_A \right) R_A \hat{u}_j \right|
\end{eqnarray*}

If $i=j$, the Courant-Fischer theorem now implies that
\begin{eqnarray*}
  \left| \hat{u}_i^\top (A^\top A + \lambda_1 \Iden_d) \hat{u}_i - 1 \right| &=&  \left| \hat{u}_i^\top R^\top_A \left( \Iden - R^{-\top}_A R^\top_{SA} R_{SA} R^{-1}_A \right) R_A \hat{u}_i \right|   \\
  &\leq & \norm{\Iden - R^{-\top}_A R^\top_{SA} R_{SA} R^{-1}_A}_2 \cdot \hat{u}_i^\top R^\top_A R_A \hat{u}_i \\
  &=& \norm{\Iden - R^{-\top}_A R^\top_{SA} R_{SA} R^{-1}_A}_2 \cdot \hat{u}_i^\top (A^\top A + \lambda_1 \Iden) \hat{u}_i \\
  &\leq& (\epsilon/4) \cdot \hat{u}_i^\top (A^\top A + \lambda_1 \Iden) \hat{u}_i\,.
\end{eqnarray*}
The last inequality is due to the fact that we already shown in the proof of (a) that $\norm{\Iden - R^{-\top}_A R^\top_{SA} R_{SA} R^{-1}_A}_2 \leq \epsilon /4 $. Therefore,
$$
\hat{u}_i^\top (A^\top A + \lambda_1 \Iden_d) \hat{u}_i \leq 1 + (\epsilon/4) \cdot \hat{u}_i^\top (A^\top A + \lambda_1 \Iden_d) \hat{u}_i
$$ so
$$
\hat{u}_i^\top (A^\top A + \lambda_1 \Iden_d) \hat{u}_i \leq \frac{1}{1-\epsilon/4}\leq 2\,.
$$ which now implies that $ \left| \hat{u}_i^\top (A^\top A + \lambda_1 \Iden_d) \hat{u}_i - 1 \right| \leq \epsilon/2$.

For $i\neq j$, the submultiplicativity property of matrix norms implies that
\begin{eqnarray*}
\left| \hat{u}_i^\top (A^\top A + \lambda_1 \Iden_d) \hat{u}_j \right| & \leq &
\norm{\Iden - R^{-\top}_A R^\top_{SA} R_{SA} R^{-1}_A}_2 \cdot \norm{R_A \hat{u}_i}_2 \cdot \norm{R_A \hat{u}_i}_2 \\
& \leq & (\epsilon/4) \cdot \sqrt{\hat{u}_i^\top (A^\top A + \lambda_1 \Iden) \hat{u}_i} \cdot \sqrt{\hat{u}_j^\top (A^\top A + \lambda_1 \Iden) \hat{u}_j} \\
& \leq & (\epsilon/4) \cdot \max(\hat{u}_i^\top (A^\top A + \lambda_1 \Iden) \hat{u}_i, \hat{u}_j^\top (A^\top A + \lambda_1 \Iden) \hat{u}_j) \\
& \leq & \epsilon / 2
\end{eqnarray*}

\paragraph{Proof of (c).} It is enough to show (after adjusting constants) that
$$
\left| \hat{u}^\top_i A^\top B \hat{v}_i - \hat{\sigma}_i \right| \leq \epsilon
$$
since we already shown that $ \left| \hat{\sigma}_i - \sigma_i^{(\lambda_1, \lambda_2)}\right| \leq \epsilon$.
We have,
\begin{eqnarray*}
  \left| \hat{u}^\top_i A^\top B \hat{v}_i - \hat{\sigma}_i \right| &=& \left| \hat{u}^\top_i A^\top B \hat{v}_i - \hat{u}^\top_i A^\top S^\top S B \hat{v}_i \right| \\
  &=& \left| \hat{u}^\top_i R^\top_A (Q^\top_A Q_B - Q^\top_A S^\top S Q_B) R_B \hat{v}_i\right| \\
  &\leq& \norm{Q^\top_A Q_B - Q^\top_A S^\top S Q_B}_2 \cdot \norm{R_A \hat{u}_i}_2 \cdot \norm{R_B \hat{v}_i}_2 \\
  &\leq& \norm{Q^\top_A Q_B - Q^\top_A S^\top S Q_B}_2 \cdot \max(\hat{u}_i^\top (A^\top A + \lambda_1 \Iden) \hat{u}_i, \hat{v}_i^\top (B^\top B + \lambda_2 \Iden) \hat{v}_i) \\
  & \leq & \epsilon/2
\end{eqnarray*}
\end{proof}

Taking an optimization point of view, the following Corollary shows that the suboptimality in the objective is not too big (the fact that the constraints are approximately held is established in the previous theorem).
\begin{corollary}
  Let $U_L$ and $V_L$ (respectively, $\hat{U}_L$ and $\hat{V}_L$) denote the first $L$ columns of $U$ and $V$ (respectively, $\hat{U}$ and $\hat{V}$. Then,
  $$
  \tr(\hat{U}_L^\top A^\top B \hat{V}_L) \leq \tr(U_L^\top A^\top B V_L) + \epsilon L\,.
  $$
\end{corollary}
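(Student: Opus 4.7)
The plan is to reduce the corollary directly to part~(c) of Theorem~\ref{thm approx reg cca}. The crucial observation is that trace of an $L \times L$ matrix is the sum of its diagonal entries, so
\[
\tr(\hat{U}_L^\top A^\top B \hat{V}_L) = \sum_{i=1}^{L} \hat{u}_i^\top A^\top B \hat{v}_i,
\]
and similarly, since the defining constraint of exact regularized CCA forces $U^\top A^\top B V$ to be the diagonal matrix $\diag(\sigma_1^{(\lambda_1,\lambda_2)}, \ldots, \sigma_q^{(\lambda_1,\lambda_2)})$, we have
\[
\tr(U_L^\top A^\top B V_L) = \sum_{i=1}^{L} \sigma_i^{(\lambda_1,\lambda_2)}.
\]

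Next, I would invoke part~(c) of Theorem~\ref{thm approx reg cca} (with its approximation parameter $\eta$ set to $\epsilon$, possibly after rescaling constants as the proof of the theorem already does), which gives, for each $i \in [L]$,
\[
\hat{u}_i^\top A^\top B \hat{v}_i \le \sigma_i^{(\lambda_1,\lambda_2)} + \epsilon.
\]
Summing this inequality over $i = 1, \ldots, L$ and substituting the two trace identities above yields
\[
\tr(\hat{U}_L^\top A^\top B \hat{V}_L) \le \tr(U_L^\top A^\top B V_L) + \epsilon L,
\]
which is exactly the claim.

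There is no real obstacle: the work has already been done inside the proof of Theorem~\ref{thm approx reg cca}, and the corollary is essentially a cosmetic rewrite of part~(c) summed over the top $L$ pairs. The only caveat worth flagging is that part~(c) bounds only the diagonal cross-correlations $\hat{u}_i^\top A^\top B \hat{v}_i$, not the full matrix $\hat{U}_L^\top A^\top B \hat{V}_L$; but since we only need the trace, the off-diagonal entries are irrelevant, so no additional argument bounding $\hat{u}_i^\top A^\top B \hat{v}_j$ for $i \neq j$ is required.
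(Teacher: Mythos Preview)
Your proposal is correct and matches the paper's intended argument: the paper states the corollary without proof, as an immediate consequence of Theorem~\ref{thm approx reg cca}, and your reduction to part~(c) summed over $i=1,\dots,L$ is exactly the natural justification. Your observation that only the diagonal entries of $\hat{U}_L^\top A^\top B \hat{V}_L$ matter for the trace, so no control on the off-diagonal $\hat{u}_i^\top A^\top B \hat{v}_j$ is needed, is precisely the point.
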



\section{General Regularization: Multiple-response Regression}\label{sec reg mr gen}

In this section we consider the problem
$$
X^*\equiv \argmin_{X\in\R^{d\times {d'}}} \norm{AX-B}_F^2 + f(X)
$$
for a real-valued function $f$ on matrices. We show that under certain
assumptions on $f$ (generalizing from $f(X)=\norm{X}_h$ for some orthogonally invariant norm $\nm{h}$),
if we have an approximation algorithm for the problem,
then via sketching the running time dependence of the algorithm on $n$ can be improved.

\begin{definition}[contractions, reduction by contractions]
A square matrix $P$ is a \emph{contraction} if its spectral norm
$\norm{P}_2\le 1$.
Say that $f()$ is \emph{left reduced by contractions}
if $f(PA) \le f(A)$ for all $A$ and contractions $P$.
Similarly define \emph{right reduced by contractions}. Say that
$f()$ is \emph{reduced by contractions} if it is both left and right reduced by contractions.
\end{definition}

\begin{definition}[(left/right) orthogonal invariance(\loi/\roi)]
A matrix measure $f()$ is
\emph{left orthogonally invariant} (or \loi\ for short)
if $f(UA)= f(A)$ for all $A$ and orthogonal $U$.
Similarly define \emph{right orthogonal invariance} (\roi).
Note that $f()$ is orthogonally invariant if it is both left and right orthogonally invariant. 
\end{definition}

When a norm $\nm{g}$ is orthogonally invariant, it can be expressed as $\norm{A}_g = g(\sigma_1, \sigma_2,\ldots,\sigma_r)$,
where the $\sigma_i$ are the singular values of $A$, and
$g()$ is a \emph{symmetric gauge function}: a function that is even in each argument, and symmetric, meaning that its value depends
only on the set of input values and not their order.

\begin{lemma}
If $P$ is a contraction, then $P$ is a convex combination of orthogonal matrices:
$P=\sum_j \alpha_j U_j$, where each $U_j$ is orthogonal,
$\sum_j \alpha_j = 1$,  and $\alpha_j\ge 0$.
\end{lemma}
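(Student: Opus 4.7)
The plan is to use the singular value decomposition to reduce to the diagonal case, and then express a diagonal contraction as an explicit convex combination of diagonal sign matrices, which are orthogonal. Since the SVD of $P$ transforms orthogonal diagonal matrices into orthogonal matrices, this will yield the desired convex combination.

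More concretely, first I would write $P = U \Sigma V^\top$, where $U$ and $V$ are orthogonal and $\Sigma = \diag(\sigma_1, \ldots, \sigma_n)$ with $0 \le \sigma_i \le 1$ (using the hypothesis $\norm{P}_2 \le 1$). For each sign pattern $s \in \{-1,+1\}^n$, let $D_s = \diag(s_1, \ldots, s_n)$, which is orthogonal, and define
\[
\alpha_s \equiv \prod_{i=1}^n \frac{1 + s_i \sigma_i}{2}.
\]
Since $\sigma_i \in [0,1]$, each factor lies in $[0,1]$, so $\alpha_s \ge 0$. The identity
\[
\sum_{s \in \{-1,+1\}^n} \alpha_s = \prod_{i=1}^n \sum_{s_i \in \{-1,+1\}} \frac{1 + s_i \sigma_i}{2} = \prod_{i=1}^n 1 = 1
\]
shows that the $\alpha_s$ form a probability distribution.

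The key computation is then to verify that $\sum_s \alpha_s D_s = \Sigma$. The $i$-th diagonal entry of the sum factors as
\[
\sum_{s} s_i \prod_j \frac{1 + s_j \sigma_j}{2}
= \Biggl(\prod_{j \neq i} \sum_{s_j} \frac{1 + s_j \sigma_j}{2}\Biggr) \sum_{s_i} \frac{s_i + s_i^2 \sigma_i}{2}
= \sigma_i,
\]
as desired. Therefore
\[
P = U \Sigma V^\top = \sum_s \alpha_s \bigl(U D_s V^\top\bigr),
\]
and since each $U D_s V^\top$ is a product of orthogonal matrices and hence orthogonal, this exhibits $P$ as a convex combination of orthogonal matrices. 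The main (very minor) obstacle is simply checking the diagonal identity above; no deeper difficulty arises since the hypothesis on the singular values gives positivity of the coefficients for free.
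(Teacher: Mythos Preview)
Your proof is correct and follows essentially the same approach as the paper: take the SVD $P=U\Sigma V^\top$, write $\Sigma$ as a convex combination of diagonal sign matrices (the vertices of the hypercube $[-1,1]^n$), and pull $U$ and $V^\top$ through. The only cosmetic difference is that the paper invokes the convex-hull description of the hypercube to get at most $n+1$ terms, whereas you give an explicit product-form weight $\alpha_s=\prod_i\frac{1+s_i\sigma_i}{2}$ over all $2^n$ vertices; both are valid and the underlying idea is identical.
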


\begin{proof} Please see \cite{horn1994topics}, exercise 3.1.5(h).
Briefly: the vector of singular values
is contained in the hypercube $[-1,1]^n$, and so is a convex combination of $n+1$ hypercube vertices; as diagonal matrices, these
are orthogonal matrices, so if $P$ has SVD $P=U\Sigma V^\top$, then $\Sigma=\sum_j \alpha_j D_j$, where each
$D_j$ is an orthogonal diagonal matrix, and so $P=U (\sum_j \alpha_j D_j) V^\top = \sum_j \alpha_j U D_j V^\top$; each summand
is an orthogonal matrix.
\end{proof}

\begin{lemma}\label{lem norm proj}\cite{domon2006}
If matrix measure $f()$ is left orthogonally invariant and subadditive, then it is left reduced by contractions,
and similarly on the right.
\end{lemma}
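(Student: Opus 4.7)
The plan is to combine the previous lemma (which writes any contraction as a convex combination of orthogonal matrices) with subadditivity and left orthogonal invariance. Concretely, let $P$ be a contraction and write $P = \sum_j \alpha_j U_j$ with $U_j$ orthogonal, $\alpha_j \ge 0$, and $\sum_j \alpha_j = 1$, as guaranteed by the preceding lemma. Then
\[
PA = \sum_j \alpha_j U_j A,
\]
and iterated application of subadditivity gives $f(PA) \le \sum_j f(\alpha_j U_j A)$.

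Next I would invoke the positive-homogeneity property that is implicit in the notion of matrix measure used here (a seminorm-like condition, which in this context follows from the norm structure the paper is working with), so that $f(\alpha_j U_j A) = \alpha_j f(U_j A)$ for $\alpha_j \in [0,1]$. Combined with left orthogonal invariance, $f(U_j A) = f(A)$, hence
\[
f(PA) \le \sum_j \alpha_j f(U_j A) = \sum_j \alpha_j f(A) = f(A),
\]
which is exactly the left reduction by contractions. The analogous statement on the right follows by applying the argument to $A^\top$ with right-invariance, or by the symmetric version of the previous convex-combination lemma.

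The main subtlety — and where I would expect the referee to push — is the passage $f(\alpha_j U_j A) \le \alpha_j f(U_j A)$. For genuine seminorms this is immediate, but for a general subadditive function it requires some form of positive scaling behavior. In Domon's setting this is either built into the definition of a matrix measure or recovered from subadditivity together with $f(0) = 0$ and continuity (e.g.\ by first establishing the inequality for dyadic rationals via repeated subadditivity, then passing to the limit). I would quote \cite{domon2006} for that technical step rather than reproving it, so the argument above is really the skeleton: the contraction is convex-decomposed into orthogonal matrices via the previous lemma, subadditivity distributes $f$ over the sum, and left orthogonal invariance reduces every term to $f(A)$ with weights summing to $1$.
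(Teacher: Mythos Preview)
Your argument is essentially identical to the paper's: decompose the contraction as a convex combination of orthogonal matrices via the preceding lemma, apply subadditivity, pull out the scalars, and use orthogonal invariance. The paper in fact writes the step $f(\alpha_j U_j A) = \alpha_j f(U_j A)$ as a bare equality without comment, so your flagging of the implicit positive-homogeneity assumption is, if anything, more careful than the original.
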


\begin{proof}
(Given here for convenience.)
Using the representation of $P$ as a convex combination from the lemma just above,
$$f(PA) = f( \sum_j \alpha_j U_j A)
	\le \sum_j f(\alpha_j U_j A)
	= \sum_j \alpha_j f(U_j A)
	= f(A),
$$
and $f()$ is left reduced by contractions, as claimed.
\end{proof}

%
%

%

\begin{definition}\label{def v-norm}
Fix $p\ge 1$. The \emph{$v$-norm} of matrix $A$ is $\norm{A}_v\equiv \left[\sum_{i\in [d]} \norm{A_{i:}}_2^p\right]^{1/p}$.
\end{definition}

This is also called the $(2,p)$-norm \cite{UHZB} or $R^1$ norm when $p=1 $\cite{dzhz06}. 

\begin{remark}\label{rem reduced norms}
Since $\nm{F}$, the spectral norm $\nm{2}$, and the trace norm $\nm{*}$ are orthogonally invariant, they are 
reduced by contractions.
Some $f()$ are reduced by contractions on one side,
without being orthogonally invariant: for example, the $v$-norm $\nm{v}$ is right orthogonally invariant,
and therefore by Lemma~\ref{lem norm proj}, right reduced by contractions, but not on the left.

The $v$-norm can also be considered for $p<1$; this is not subadditive, and so
Lemma~\ref{lem norm proj} does not apply, but even so, it is right orthogonally invariant and
right reduced by contractions, just considering the invariance or contractions row-wise.
\end{remark}

\begin{definition}[subspace embedding w.r.t. a matrix norm, poly-sized distributions]\label{def subs embed gen}
\mbox{}\\
From Definition~\ref{def subs embed},
a matrix $S\in\R^{m_S\times n}$ is a \emph{subspace $\eps$-embedding} for $A$ with respect to the Euclidean norm if
$\norm{SAx}_2 = (1\pm \eps)\norm{Ax}_2$ for all $x$. More generally, $S$ is a (left) subspace $\eps$-embedding for $A$ with
respect to a matrix measure $f()$ if $f(SAX)  = (1\pm\eps) f(AX)$ for all matrices $X$. Say that $R\in\R^{d\times m_R}$ is a right  subspace $\eps$-embedding for $A$ with respect to $f()$
if $f(YAR) = (1\pm\eps)f(YA)$ for all matrices $Y$. Say that a probability distribution
over matrices $S$ is a \emph{poly-sized sketching distribution} if there is $m_S=\poly(d/\eps)$ so that
with constant probability, $S$ is a subspace $\eps$-embedding.
Similarly define for sketching on the right, where the size condition on $m_R$ is $m_R=\poly(n/\eps)$.
\end{definition}

\begin{definition}[padding invariance]\label{def pad invar}
Say that a matrix measure $f()$ is \emph{padding invariant} if it is preserved by
padding $A$ with rows or columns of zeroes:
$f(\twomat{A}{0_{z\times d}})= f\left(\begin{smallmatrix}A & 0_{n\times z'}\end{smallmatrix}\right) = f(A)$.
\end{definition}

\begin{lemma}\label{lem padding}
Unitarily invariant norms and $v$-norms are padding invariant.
\end{lemma}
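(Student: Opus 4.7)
The plan is to handle the two families of matrix measures separately, since each reduces to a direct observation about how zero-padding affects the relevant invariant.

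For an orthogonally invariant norm $\nm{h}$, I will invoke the Von Neumann characterization noted just before Lemma~\ref{lem norm proj}: $\norm{A}_h = g(\sigma_1(A),\ldots,\sigma_r(A))$ for a symmetric gauge function $g$. The key step is that padding $A\in\R^{n\times d}$ with rows or columns of zeros does not change its nonzero singular values. Concretely, if $A=U\Sigma V^\top$ is a thin SVD with $U\in\R^{n\times r}$ and $V\in\R^{d\times r}$, then $\twomat{U}{0_{z\times r}}$ still has orthonormal columns, so $\twomat{A}{0_{z\times d}} = \twomat{U}{0_{z\times r}}\,\Sigma V^\top$ is a thin SVD of the padded matrix, with identical singular-value list. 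The same argument, applied on the right, handles padding by zero columns. A symmetric gauge function is, by its defining axioms, unchanged when its argument list is extended by a zero, and so $\norm{\cdot}_h$ is preserved.

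For the $v$-norm $\norm{A}_v=\left[\sum_i \norm{A_{i:}}_2^p\right]^{1/p}$, padding invariance is essentially a tautology. Appending zero rows contributes summands of value zero, and appending zero columns leaves $\norm{A_{i:}}_2$ unchanged for every existing row $i$; either padding therefore leaves $\norm{A}_v$ unchanged. This argument is insensitive to $p$, so it covers the $p<1$ case highlighted in Remark~\ref{rem reduced norms} as well.

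The whole argument is bookkeeping rather than mathematics, and I do not foresee any technical obstacle. The only point that warrants explicit mention is the zero-extension property of symmetric gauge functions; I would either cite this from a standard reference such as Horn and Johnson's \emph{Topics in Matrix Analysis} or note that it follows in one line from subadditivity together with $g(0,\ldots,0)=0$ and monotonicity in absolute values.
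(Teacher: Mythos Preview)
Your proposal is correct and follows essentially the same route as the paper: for $v$-norms it is immediate from the definition, and for orthogonally invariant norms both you and the paper argue that zero-padding leaves the list of nonzero singular values unchanged via the obvious block SVD. Your version is slightly more careful in explicitly invoking the zero-extension property of symmetric gauge functions, a point the paper leaves implicit when it says the norm depends ``on the singular values only.''
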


\begin{proof}
For $v$-norms, this is direct from the definition. For unitarily invariant norms, this follows from their dependence on the singular values only,
and that the singular values of a matrix don't change with padding: if $A=U\Sigma V^\top$,
then the SVD of $\twomat{A}{0}$ is $\twomat{U}{0}\Sigma V^\top$, and correspondingly for column padding.
\end{proof}

\begin{definition}[\piloi, \piroi]
Say that a matrix measure is \piloi\ if it is padding invariant and left orthogonally invariant, and \piroi\ if it is
padding invariant and right orthogonally invariant.
\end{definition}

\begin{definition}[embedding inheritance]
Say that a matrix measure $f()$ \emph{inherits} a subspace $\eps$-embedding from the Euclidean norm (on the left)
if the condition that $S\in\R^{m\times n}$ is a subspace $\eps$-embedding for $A$
with respect to the Euclidean norm implies that $S$ is a subspace $O(\eps)$-embedding for $f()$.
Define inheritance on the right similarly.
\end{definition}

\begin{lemma}\label{lem UI->embed}
If matrix measure $f()$ is \piloi\ then it inherits a left subspace $\eps$-embedding
from the Euclidean norm, and similarly on the right.
\end{lemma}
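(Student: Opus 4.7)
The plan is to exploit a thin SVD of $A$ together with the polar decomposition of $SU$, reducing the comparison of $f(SAX)$ and $f(AX)$ to comparing $f(HM)$ with $f(M)$ for a single PSD factor $H$ whose spectrum lies in $[1-\eps,1+\eps]$ and a shared right factor $M$.

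First I would write $A=U\Sigma V^\top$ as a thin SVD with $U\in\R^{n\times r}$. Since $x\mapsto \Sigma V^\top x$ is surjective onto $\R^r$, the Euclidean subspace embedding property $\norm{SAx}_2 = (1\pm\eps)\norm{Ax}_2$ translates to $\norm{SUy}_2 = (1\pm\eps)\norm{y}_2$ for every $y\in\R^r$. Next, take the polar decomposition $SU = QH$, where $Q\in\R^{m\times r}$ has orthonormal columns and $H\in\R^{r\times r}$ is symmetric PSD with spectrum in $[1-\eps,1+\eps]$. Setting $M \equiv \Sigma V^\top X$, this gives $AX = UM$ and $SAX = QHM$.

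I would then strip off the orthonormal factors $U$ and $Q$ using the \piloi\ properties. By appending an orthonormal basis for the orthogonal complement, extend $U$ to an $n\times n$ orthogonal matrix $\hat U$ so that $UM = \hat U\,\twomat{M}{0}$; left orthogonal invariance gives $f(UM) = f(\twomat{M}{0})$, and padding invariance reduces this to $f(M)$. The identical argument applied to $Q$ gives $f(SAX) = f(HM)$, so the lemma reduces to showing $f(HM) = (1\pm O(\eps))\,f(M)$.

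For this final step, note that $H/(1+\eps)$ has spectral norm at most $1$ and is therefore a contraction, so Lemma~\ref{lem norm proj} (invoking subadditivity alongside \loi) gives $f(HM/(1+\eps))\le f(M)$; positive homogeneity of $f$ rearranges this to $f(HM)\le(1+\eps)f(M)$. For the matching lower bound, $H$ is invertible with $\norm{H^{-1}}_2 \le 1/(1-\eps)$, so $(1-\eps)H^{-1}$ is a contraction; applying Lemma~\ref{lem norm proj} to $M = H^{-1}(HM)$ yields $f((1-\eps)M)\le f(HM)$, i.e., $f(HM)\ge(1-\eps)f(M)$ after homogeneity. The right-sided statement is proved symmetrically via a polar decomposition on the right. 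The main obstacle is that this argument quietly relies on subadditivity and positive homogeneity of $f$, which are not explicitly listed in the \piloi\ hypothesis; these properties appear to be part of the background notion of \emph{matrix measure}, and without them one would need some other scale-aware regularity on $f$ to absorb the $1\pm\eps$ factors cleanly.
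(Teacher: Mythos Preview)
Your argument is correct and close in spirit to the paper's, but you route it through a polar decomposition rather than the paper's more direct manipulation. The paper first reduces to $A$ having orthonormal columns (as you do via the thin SVD), shows $f(AY)=f(Y)$ by the same extend-to-orthogonal/padding trick, and then for the upper bound pads $SA$ to a square matrix and uses that $SA/(1+\eps)$ is a contraction directly, invoking Lemma~\ref{lem norm proj}; for the lower bound it uses that $(A^\top S^\top S A)^{-1}A^\top S^\top$ has spectral norm $\le 1+O(\eps)$ and applies the contraction argument in the other direction. Your polar factorization $SU=QH$ is a clean alternative: it strips the orthonormal part $Q$ once and for all and reduces both inequalities to properties of the single PSD factor $H$ with spectrum in $[1-\eps,1+\eps]$, which makes the two bounds pleasingly symmetric (contraction by $H/(1+\eps)$ and by $(1-\eps)H^{-1}$). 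Either way the same key lemma is doing the work.

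Your caveat about subadditivity and positive homogeneity is well placed, and it is not a defect peculiar to your proof: the paper's own argument invokes Lemma~\ref{lem norm proj} (which needs subadditivity) and silently uses $f(\alpha X)=\alpha f(X)$ in the step $\frac{1}{1+\eps}f(SAY)=f(\frac{1}{1+\eps}[SA\ 0]\twomat{Y}{0})$. Indeed, the downstream uses of this lemma (e.g.\ the parenthetical in Theorem~\ref{thm gen regul regr}) explicitly add ``subadditive'' to the hypotheses. So you have correctly identified an unstated assumption shared by both proofs rather than a gap unique to yours.
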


\begin{proof}
Since the columns of $AY$ are members of the columnspace of $A$, they can be expressed in terms
of a basis for that columnspace; that is there is $U$ with orthonormal columns so that for any $AY$
there is some $Z$ so that $AY=UZ$.
\Ken{using obliviousness here...} So we will assume that $A$ has orthonormal columns.
Note that from padding invariance, if $n>d$ we can expand $A$ with orthonormal columns $\bar{A}$
so that $\twomatr{A}{\bar{A}}$ is an orthogonal matrix,
and pad $Y$ with zero rows, so that
\[
f(AY) = f(\twomatr{A}{\bar{A}}\twomat{Y}{0})= f(\twomat{Y}{0}) = f(Y)
\]

We need to show that for $S$ a subspace $\eps$-embedding for $A$,
it holds that $(1-O(\eps)) f(AY)\le f(SAY)\le (1+O(\eps)) f(AY)$ for all $Y$.
For the upper bound on $f(SAY)$,
since $\norm{SAx}_2\le (1+\eps)\norm{Ax}_2 = (1+\eps)\norm{x}_2$,
we know that $\norm{SA}_2\le 1+\eps$, so that $\frac{1}{1+\eps}SA$ is a (nonsquare) contraction.
Moreover, if we pad with zeros to make a square matrix, we do not not increase the spectral norm.
Since $f()$ is assumed padding invariant, if $SA$ is padded with zero columns, we can
also pad $Y$ with rows of zeros. Suppose $m>d$, so we pad $SA$ with $m-d$ zero columns,
and $Y$ with $m-d$ zero rows. 
So $\frac{1}{1+\eps}\twomatr{SA}{0}$ is a square contraction,
and from the left orthogonal invariance of $f()$ and Lemma~\ref{lem norm proj},
we have
\[
\frac{1}{1+\eps} f(SAY) = f(\frac{1}{1+\eps} \twomatr{SA}{0}\twomat{Y}{0}) \le f(\twomat{Y}{0}) = f(Y),
\]
and as noted above, $f(Y) = f(AY)$, so
that $f(SAY) \le (1+\eps)f(AY)$, as desired.

For the lower bound $f(SAY) \ge (1-\eps) f(AY)$:
since $\norm{SAx}_2 \ge (1-\eps)\norm{Ax}_2 = (1-\eps) \norm{x}_2$ for all $x$,
$\inf_{x} \norm{A^\top S^\top SAx}_2
	= \inf_{x} \norm{SAx}_2^2/\norm{x}^2
	\ge (1-\eps)^2$,
so that $\norm{(A^\top S^\top SA)^{-1}}_2\le 1/(1-\eps)^2$,
and $\norm{(A^\top S^\top SA)^{-1}A^\top S^\top}_2 \le (1+\eps)/(1-\eps)^2 \le 1+O(\eps)$.
Thus
\[
f(AY) = f(Y)
	= f((A^\top S^\top SA)^{-1} A^\top S^\top SA Y)
	\le (1+O(\eps)) f(SAY),
\]
and $f(SAY) \ge (1-O(\eps)) f(AY)$,
as claimed.
\end{proof}

\begin{remark}\label{rem v-norm embed}
Note that a sketching matrix that is a subspace $\eps$-embedding on the right
for the Euclidean norm is also a subspace embedding on the right for $\nm{v}$, even when $p<1$,
just applying the Euclidean embedding row-wise.
\end{remark}


%
%

\begin{lemma}\label{lem rowspace SB}
Let $f()$ be a real-valued function on matrices that is right orthogonally invariant, right reduced by contractions,
and inherits a sketching distribution from the Euclidean norm.
(If $f()$ is \piroi\ and subadditive, these conditions hold by Lemmas~\ref{lem norm proj} and \ref{lem UI->embed}.)
Let $B\in\R^{n\times {d'}}$.
Let
\begin{equation}\label{eq ABgf exact}
X^*\equiv \argmin_{X\in\R^{d\times {d'}}} \norm{AX-B}_F^2 + f(X),
\end{equation}
and $\Delta_* \equiv \norm{AX^*-B}_F^2 + f(X^*)$.
Let $S\in\R^{m_S\times n}$ for parameter $m_S$ be an affine $\eps$-embedding for $A,B$
with respect to $\nm{F}$. Then
\[
Z^*\equiv\argmin_Z \norm{AZSB - B}_F^2 + f(ZSB)
\]
has
\[
\norm{(AZ^*SB -B)}_F^2 + f(Z^*SB)
	\le (1+\eps)\Delta_*,
\]
\end{lemma}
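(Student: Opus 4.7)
The plan is to exhibit a specific $\hat Z$ such that $\hat Z\,SB$ is already a $(1+\eps)$-approximate minimizer of the unconstrained problem \eqref{eq ABgf exact}; optimality of $Z^*$ among row-span-constrained matrices will then yield the lemma. The natural candidate is obtained by first solving the sketched but otherwise unrestricted regression, and then projecting its solution onto $\rowspan(SB)$ on the right.

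Concretely, let $\tilde X \equiv \argmin_X \norm{S(AX-B)}_F^2 + f(X)$. Since $X^*$ is feasible in the sketched problem, the affine embedding applied to $AX^*-B$ gives $\norm{S(A\tilde X - B)}_F^2 + f(\tilde X)\le (1+\eps)\Delta_*$. Let $P\equiv (SB)^+(SB)$ be the orthogonal projector onto $\rowspan(SB)$, and set $\hat X\equiv \tilde X P$. Two things then happen in parallel. First, $f(\hat X) = f(\tilde X P)\le f(\tilde X)$ because $P$ is a contraction and $f$ is right reduced by contractions (either by hypothesis or, in the \piroi/subadditive case, via Lemmas~\ref{lem norm proj} and~\ref{lem UI->embed}). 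Second, the pseudo-inverse identity $SB\cdot P = SB$ gives
\[
S(A\hat X - B) = SA\tilde X P - SB\cdot P = \bigl(S(A\tilde X - B)\bigr) P,
\]
and because $P$ is an orthogonal projector its action on the right contracts each row's Euclidean norm, so $\norm{S(A\hat X - B)}_F\le \norm{S(A\tilde X - B)}_F$. Combining, $\norm{S(A\hat X - B)}_F^2 + f(\hat X)\le (1+\eps)\Delta_*$. The other direction of the affine embedding then yields $\norm{A\hat X - B}_F^2\le (1-\eps)^{-1}\norm{S(A\hat X - B)}_F^2$, and dropping the nonnegative $\eps f(\hat X)/(1-\eps)$ term gives $\norm{A\hat X - B}_F^2 + f(\hat X)\le (1+O(\eps))\Delta_*$.

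Since the rows of $\hat X = \tilde X P$ lie in $\rowspan(SB)$, we can write $\hat X = \hat Z\,SB$ with, e.g., $\hat Z \equiv \tilde X (SB)^+$. The optimality of $Z^*$ in the lemma's constrained problem then gives $\norm{AZ^*SB - B}_F^2 + f(Z^*SB)\le \norm{A\hat X-B}_F^2 + f(\hat X)\le (1+O(\eps))\Delta_*$, and rescaling $\eps$ by a constant absorbs the $O(\cdot)$. The main obstacle is the simultaneous harmlessness of the right-projection in the two terms of the sketched objective: the right-reduced-by-contractions hypothesis kills the change in $f$, while the identity $SB\cdot P = SB$ kills the change in the sketched residual, so the particular projection onto $\rowspan(SB)$ (as opposed to, say, $\rowspan(B)$, for which no analogous identity is available) is essential.
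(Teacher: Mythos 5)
Your proposal is correct and follows essentially the same route as the paper's proof: solve the sketched problem, right-multiply its solution by the orthogonal projector onto $\rowspan(SB)$ (which leaves $SB$ fixed and can only shrink both the sketched residual and $f$ by the contraction hypothesis), and then transfer back via the affine embedding. The only difference is that you make the final sketched-to-unsketched step and the resulting $(1+O(\eps))$ bookkeeping explicit, where the paper compresses this into a ``without loss of generality'' remark.
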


\begin{proof}
Let $X^*_S\equiv \argmin_{X\in\R^{d\times {d'}}} \norm{S(AX-B)}_F^2 + f(X)$.
If $S$ is an affine embedding for $A,B$, then $X^*_S$ is a good approximate solution
to \eqref{eq ABgf exact}, that is,
$\norm{AX^*_S - B}_F^2 + f(X^*_S) \le (1+\eps)\Delta_*$.
Let $P_{SB}$ be the orthogonal projection onto $\rowspan(SB)$; note that $P_{SB}$ is a contraction.
Then by hypothesis,
\[
\norm{(SAX^*_S -SB)P_{SB}}_F^2 + f(X^*_S P_{SB})
	\le \norm{SAX^*_S -SB}_F^2 + f(X^*_S),
\]
using also that the Frobenius norm is reduced by contraction, as noted in Remark~\ref{rem reduced norms}.
That is, $X^*_S P_{SB}$ has
cost no higher than that of $X^*_S$, or put another way,
without loss of generality, $X^*_S$ has rows in $\rowspan(SB)$.
Since $X^*P_{SB}$ can be expressed as $ZSB$ for some
$Z$, the lemma follows.
\end{proof}

The following is the main theorem of this section.
\begin{theorem}\label{thm gen regul regr}
Let $f()$ be a real-valued function on matrices that is right orthogonally invariant, right reduced by contractions,
and inherits a sketching distribution from the Euclidean norm on the right.
(If $f()$ is \piroi\ and subadditive, these conditions hold by Lemmas~\ref{lem norm proj} and \ref{lem UI->embed}.)
Let $B\in\R^{n\times {d'}}$.
Let $X^*$ and $\Delta_*$ as in Lemma~\ref{lem rowspace SB}.
Suppose that for $r\equiv\rank A$,
there is an algorithm that for general $n,d,{d'},r$ and $\eps > 0$,
finds $\tX$ with $\norm{A\tX-B}_F^2 + f(\tX) \le (1+\eps)\Delta_*$
in time $\tau(d,n,{d'},r,\eps)$.  Then there is an algorithm that with constant
probability finds such a $\tX$, taking time
\[
O(\nnz(A) + \nnz(B) + (n+d+{d'})\poly(r/\eps)) + \tau(d, \poly(r/\eps),\poly(r/\eps), r,\eps).
\]
\end{theorem}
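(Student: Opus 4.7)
Perform two rounds of sketching with an intermediate reparametrization that recasts the problem into the form required by the hypothesized algorithm.

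\textbf{Right sketch.} Draw a sparse-plus-SRHT affine $\eps$-embedding $S\in\R^{m_S\times n}$ for $(A,B)$ via Lemma~\ref{lem AE}, with $m_S=\poly(r/\eps)$, and form $\hat B\equiv SB$ in time $O(\nnz(B))+d'\poly(r/\eps)$. Lemma~\ref{lem rowspace SB} then guarantees that a $(1+\eps)$-approximate minimizer of
\[
\min_{Z\in\R^{d\times m_S}}\ \norm{AZ\hat B-B}_F^2+f(Z\hat B)
\]
yields a $(1+\eps)$-approximate solution to the original problem.

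\textbf{Reparametrization.} Take a thin SVD $\hat B=V_{\hat B}\Sigma_{\hat B}U_{\hat B}^\top$ with $U_{\hat B}\in\R^{d'\times r'}$ and $r'=\rank(\hat B)\le m_S$, and extend $U_{\hat B}$ to a square orthogonal $\bar U\equiv[U_{\hat B},U_{\hat B}^\perp]\in\R^{d'\times d'}$. With $\tilde Z\equiv ZV_{\hat B}\Sigma_{\hat B}\in\R^{d\times r'}$ we have $Z\hat B=\tilde Z U_{\hat B}^\top=[\tilde Z,0_{d\times(d'-r')}]\bar U^\top$. Right orthogonal invariance of $f$ yields $f(Z\hat B)=f([\tilde Z,0])$, and padding invariance (implied by the \piroi{} hypothesis) further gives $f([\tilde Z,0])=f(\tilde Z)$. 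Meanwhile orthogonal invariance of $\nm{F}$ gives
\[
\norm{AZ\hat B-B}_F^2=\norm{A[\tilde Z,0]-B\bar U}_F^2=\norm{A\tilde Z-BU_{\hat B}}_F^2+\norm{BU_{\hat B}^\perp}_F^2,
\]
with the additive constant $\norm{BU_{\hat B}^\perp}_F^2\ge 0$. Hence any $(1+\eps)$-approximate minimizer of
\[
\min_{\tilde Z\in\R^{d\times r'}}\ \norm{A\tilde Z-BU_{\hat B}}_F^2+f(\tilde Z)
\]
lifts to a $(1+\eps)$-approximate minimizer of the preceding problem, and the right-hand side now has only $r'\le\poly(r/\eps)$ columns.

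\textbf{Left sketch and invocation.} Apply a second sparse-plus-SRHT affine $\eps$-embedding $T\in\R^{m_T\times n}$ for $(A,BU_{\hat B})$ from Lemma~\ref{lem AE}, with $m_T=\poly(r/\eps)$, producing
\[
\min_{\tilde Z\in\R^{d\times r'}}\ \norm{(TA)\tilde Z-(TB)U_{\hat B}}_F^2+f(\tilde Z).
\]
Crucially $BU_{\hat B}$ is never formed explicitly: compute $TA$ and $TB$ in $O(\nnz(A)+\nnz(B))+(d+d')\poly(r/\eps)$ time, then the small product $(TB)U_{\hat B}$ in $d'\poly(r/\eps)$ time. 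A standard affine-embedding argument shows the objective is preserved within $(1\pm\eps)$ for every $\tilde Z$, so $(1+\eps)$-approximate minimizers transfer up to a constant rescaling of $\eps$. Finally invoke the hypothesized algorithm on this reduced problem (with $m_T$ rows, $d$ columns in $A'$, $r'$ columns in $B'$, and rank at most $r$), at cost $\tau(d,\poly(r/\eps),\poly(r/\eps),r,\eps)$, and return $\tX\equiv\tilde Z U_{\hat B}^\top$ in factored form.

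The step I expect to demand the most care is the reparametrization: its correctness rests on combining right orthogonal invariance with padding invariance of $f$ and on the nonnegativity of the Pythagorean additive constant. Everything else amounts to composing the affine-embedding guarantees of Lemma~\ref{lem AE} and bookkeeping running times.
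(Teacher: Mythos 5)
Your proof is correct, and it takes a genuinely different route through the middle of the argument. Both proofs begin identically, invoking Lemma~\ref{lem AE} and Lemma~\ref{lem rowspace SB} to restrict the rows of $X$ to $\rowspan(SB)$. From there the paper reduces the column dimension $d'$ by a second \emph{random} sketch $\hR$ on the right --- which requires $f$ to inherit a subspace embedding on the right (via Lemma~\ref{lem UI->embed}) and $\hR^\top$ to be an affine embedding for $(SB)^\top, B^\top$ --- and only afterwards orthogonalizes $SB\hR$ via a QR factorization to land in the form $\min_{Z_1}\norm{\hS A Z_1 - \hS B\hR Q}_F^2 + f(Z_1)$. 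You instead observe that $SB$ is already a $\poly(r/\eps)\times d'$ matrix, so its exact thin SVD costs only $d'\poly(r/\eps)$, and the resulting change of basis $Z\hat B = \tilde Z U_{\hat B}^\top$ accomplishes the column reduction deterministically, with no extra $(1+\eps)$ factor and without using the embedding-inheritance hypothesis at that stage; the Pythagorean split $\norm{A\tilde Z - BU_{\hat B}}_F^2 + \norm{BU_{\hat B}^\perp}_F^2$ with a nonnegative constant correctly preserves relative approximation. In effect you merge the paper's $\hR$-sketch and its final $Q$-orthogonalization into one exact step; what the paper's version buys is uniformity with the two-sided low-rank argument of Theorem~\ref{thm lowr gen sk}, where the analogous matrix is not small on either side and a random sketch is genuinely needed. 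One shared caveat: both your reparametrization ($f([\tilde Z\ 0]\bar U^\top) = f(\tilde Z)$) and the paper's ($f(Z_1 Q^\top) = f(Z_1)$) use padding invariance in addition to right orthogonal invariance, even though padding invariance appears only in the parenthetical ``\piroi'' remark rather than among the three explicitly listed hypotheses; this is an imprecision of the theorem statement itself, not a gap in your argument.
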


A norm that is \piroi\ 
satisfies the conditions of the theorem, using Lemmas~\ref{lem norm proj}
and \ref{lem UI->embed}. The $v$-norm for $p<1$ also satisfies the conditions
of the theorem, as noted in Remarks~\ref{rem reduced norms} and \ref{rem v-norm embed}.
\Ken{Don't have good examples of right UI but not left UI and not v-norm}

Although earlier results for constrained least squares (e.g. \cite{cw13}) can be applied to obtain
approximation algorithms for regularized multiple-response least squares, via the solution of
$\min_{X\in\R^{d\times d'}} \norm{AX-B}_F^2$, subject to $f(X)\le C$ for a chosen constant $C$,
such a reduction
yields a slower algorithm if properties of $f(X)$ are not exploited, as here.

\begin{proof}
Let $S\in\R^{m_S\times n}$ be an affine embedding as in Lemma~\ref{lem rowspace SB};
that lemma implies
\[
Z^*\equiv\argmin_Z \norm{AZSB - B}_F^2 + f(ZSB)
\]
has
\[
\norm{(AZ^*SB -B)}_F^2 + f(Z^*SB)
	\le (1+\eps)\Delta_*.
\]
Now suppose $\hR\in\R^{{d'}\times m_R}$ comes from a sketching distribution yielding
a right subspace $\eps$-embedding with respect to the Euclidean norm for $SB$,
so that by Lemma~\ref{lem UI->embed} and hypothesis, $\hR$ is a subspace embedding
on the right for $SB$ with respect to $f()$. Suppose also that 
$\hR^\top$ is an affine embedding for $(SB)^\top, B^\top$ with respect to the Frobenius norm.
For example a sparse embedding with $m_R=O(\rank(SB)^2/\eps^2)$ satisfies these
conditions with constant probability.

Suppose $\hS$ is an affine embedding for $A,B\hR$. Then
\begin{equation}\label{eq mr gen sk}
\tZ \equiv \argmin_Z \norm{\hS AZSB\hR - \hS B\hR}_F^2 + f(ZSB\hR)
\end{equation}
has
\[
\norm{(A\tZ SB -B)}_F^2 + f(\tZ SB)
	\le (1+\eps)^3\Delta_*,
\]
so that $\tX \equiv \tZ SB$ satisfies the conditions of the theorem,
up to a constant factor in $\eps$.

We need to put \eqref{eq mr gen sk} into the form of \eqref{eq ABgf exact}.
Let $D\equiv SB \hR$, and let $Q$ have
$m_Q\equiv \rank(D)$ orthogonal columns and $m_R$ rows,
such that for upper triangular $T\in\R^{m_Q\times m_Q}$ and $T'\in\R^{m_Q\times (m_R- m_Q)}$,
$D^\top = Q[T\,\, T']$.
Then any
$ZSB \hR\in  \rowspan(SB\hR)$ can be written as $Z_1Q^\top$,
for some $Z_1\in \R^{d\times m_Q}$. (We can recover $Z$ as in Lemma~\ref{lem boyd},
with a back-solve on $Z_1$ using $T$.)

Letting $P_Q \equiv QQ^\top$, and using $P_Q(\Iden - P_Q)=0$
and matrix Pythagoras,
\eqref{eq mr gen sk} can be solved by minimizing
\begin{align*}
\norm{\hS AZ_1 Q^\top- \hS B\hR}_F^2 + f(Z_1Q^\top)
	    & = \norm{\hS AZ_1 Q^\top P_Q -  \hS B\hR P_Q  + \hS B\hR(P_Q - \Iden) }_F^2 + f(Z_1 Q^\top)
	\\  & = \norm{\hS AZ_1 Q^\top P_Q  -  \hS B\hR P_Q }_F^2 + \norm{(P_Q-\Iden)\hS B\hR}_F^2 + f(Z_1),
\end{align*}
with respect to $Z_1$, using also padding invariance and orthogonal invariance of $f()$.
We could equivalently minimize
\[
	\norm{\hS AZ_1 Q^\top   -  \hS B\hR QQ^\top }_F^2  + f(Z_1)
	= \norm{\hS AZ_1   -  \hS B\hR Q}_F^2  + f(Z_1),
\]
which has the form of \eqref{eq ABgf exact}.

It remains to determine the sketching dimensions for $S$, $\hS$, and $\hR$.
We need $S\in\R^{m_S\times n}$  and $\hS\in\R^{m_{\hS}\times n}$ to be affine embeddings
for $A$, $B$ and for $A$, $B\hR$ with respect to the Frobenius norm.
Sparse embeddings (Def.~\ref{def subs embed}, Lemma~\ref{lem AE})
have this property, with constant probability for $m_S, m_{\hS}=O(r^2/\eps^2)$, where again $r\equiv\rank(A)$.
By hypothesis, we have a distribution over $\hR$ with $m_{\hR}  = \poly(m_S/\eps) = \poly(r/\eps)$ with the needed properties.
Thus the algorithm of the theorem statement would be called with $\tau(d, m_{\hS}, m_{\hR}, r,\eps)$, with
the appropriate parameters in $\poly(r/\eps)$, as claimed.
\end{proof}

\section{General Regularization: Low-rank Approximation}\label{sec low-rank gen}

For an integer $k$ we consider the problem 
\begin{equation}\label{eq lowr gen}
\min_{\substack{Y\in\R^{n\times k}\\ X\in\R^{k\times d}}}
		\norm{YX - A}_F^2 + f(Y,X),
\end{equation}
where $f(\cdot,\cdot)$ is a real-valued function that is
\piloi\ in the left argument, \piroi\ in the right argument,
and left and right reduced by contraction in its left and right arguments, respectively.

For example $\hat f(\norm{Y}_\ell, \norm{X}_r)$ for \piloi\ $\nm{\ell}$ and \piroi\ $\nm{r}$ would satisfy these conditions,
as would $\norm{YX}_g$ for orthogonally invariant norm $\nm{g}$. The function $\hat f$ could be zero for arguments whose
maximum is less than some $\mu$, and infinity otherwise.

\subsection{Via the SVD}\label{subsec low-rank gen svd}

First, a solution method relying on the singular value decomposition for a slightly more general problem
than \eqref{eq lowr gen}.

\begin{theorem}\label{thm reduce kxk}
Let $k$ be a positive integer,
$f_1:\R\mapsto \R$ increasing,
and $f: \R^{n\times k}\times \R^{k\times d} \mapsto \R$,
where $f$ is \piloi\ and left reduced by contractions in its left argument,
and \piroi\ and right reduced by contractions in  in its right argument.
Let $A$ have full SVD $A=U\Sigma V^\top$, $\Sigma_k\in\R^{k\times k}$
the diagonal matrix of top $k$ singular values of $A$.
Let matrices $W^*,Z^*\in\R^{k\times k}$ solve
\begin{equation}
\min_{\substack{W\in\R^{k\times k}\\ Z\in\R^{k\times k} \\ WZ\  \mathrm {diagonal}}}
		f_1(\norm{WZ - \Sigma_k}_{(p)}) + f(W,Z),
\end{equation}
and suppose there is a procedure taking $\tau(k)$ time to find $W^*$ and $Z^*$.
Then the solution to
\begin{equation}\label{eq lowr generer}
\min_{\substack{Y\in\R^{n\times k}\\ X\in\R^{k\times d}}}
		f_1(\norm{YX - A}_{(p)}) + f(Y,X)
\end{equation}
is $Y^*=U\twomat{W^*}{0_{(n-k)\times k}}$ and $X^* = \twomatr{Z^*}{0_{k\times (d-k)}}V^\top$.
Thus for general $A$, \eqref{eq lowr generer} can be solved
in time $O(nd\min\{n,d\}) + \tau(k)$.
\end{theorem}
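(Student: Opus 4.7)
The plan is to reduce to the diagonal case via the SVD of $A$, then exploit the orthogonal and padding invariances of $f$ to show that the optimum has both factors supported on a leading $k \times k$ block whose product is diagonal.

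First I would substitute $\hat Y = U^\top Y$ and $\hat X = XV$; since $U, V$ are orthogonal this is a bijection. Unitary invariance of Schatten norms yields $\norm{YX - A}_{(p)} = \norm{\hat Y \hat X - \Sigma}_{(p)}$, and \loi\ of the left argument of $f$ combined with \roi\ of the right argument give $f(Y, X) = f(U\hat Y, \hat X V^\top) = f(\hat Y, \hat X)$. So the problem becomes $\min_{\hat Y, \hat X} f_1(\norm{\hat Y \hat X - \Sigma}_{(p)}) + f(\hat Y, \hat X)$ over $\hat Y \in \R^{n\times k}$ and $\hat X \in \R^{k\times d}$.

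Next, given an arbitrary pair $(\hat Y, \hat X)$, I would take a compact SVD $\hat Y \hat X = \tilde U \tilde\Lambda \tilde V^\top$, padded so that $\tilde U \in \R^{n\times k}$ and $\tilde V \in \R^{d\times k}$ have orthonormal columns and $\tilde\Lambda \in \R^{k\times k}$ is diagonal with entries sorted in decreasing order. In the generic full-rank case $\range(\hat Y) = \range(\tilde U)$, so $\hat Y = \tilde U W$ with $W \equiv \tilde U^\top \hat Y$; symmetrically $\hat X = Z\tilde V^\top$ with $Z \equiv \hat X \tilde V$, and $WZ = \tilde\Lambda$ is diagonal. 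Extending $\tilde U, \tilde V$ to orthogonal matrices $\bar U \in \R^{n\times n}, \bar V \in \R^{d\times d}$, we have $\bar U^\top \hat Y = \twomat{W}{0}$ and $\hat X\bar V = \twomatr{Z}{0}$, so by \loi, \roi, and padding invariance of $f$, $f(\hat Y, \hat X) = f(\twomat{W}{0}, \twomatr{Z}{0}) = f(W, Z)$. For the fidelity term I would invoke Mirsky's inequality for unitarily invariant norms, giving
\[
\norm{\hat Y \hat X - \Sigma}_{(p)}^p \ge \sum_{i \le k} |\tilde\Lambda_{ii} - \sigma_i|^p + \sum_{i > k} \sigma_i^p.
\]
Because $WZ = \tilde\Lambda$ is diagonal and sorted compatibly with $\Sigma_k$ (both decreasing), the matrix $\twomat{W}{0}\twomatr{Z}{0} - \Sigma$ is block-diagonal with blocks $WZ - \Sigma_k$ and $-\tilde\Sigma$, so its Schatten-$p$ norm to the $p$-th power equals $\norm{WZ - \Sigma_k}_{(p)}^p + \norm{\tilde\Sigma}_{(p)}^p$, matching the Mirsky lower bound. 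By monotonicity of $f_1$, the big objective at $(\twomat{W}{0}, \twomatr{Z}{0})$ is no larger than at $(\hat Y, \hat X)$. Since $\norm{\tilde\Sigma}_{(p)}^p$ is constant in $W, Z$, for $f_1(x) = x^p$ (the case covering Corollary~\ref{cor simple sol}) the restricted minimization over diagonal-product $(W, Z)$ coincides with the stated small problem; undoing the SVD substitution recovers $Y^* = U\twomat{W^*}{0}$, $X^* = \twomatr{Z^*}{0}V^\top$. The running time is $O(nd\min\{n,d\})$ for the SVD of $A$, plus $\tau(k)$ for the small problem.

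The hard part will be the alignment step: Mirsky's lower bound is attained only when the singular values of $\hat Y \hat X$ match those of $\Sigma$ in the same (decreasing) ordering, so I must be careful to construct $W, Z$ such that $WZ = \tilde\Lambda$ is sorted, and to rely on additivity of Schatten-$p^p$ on block diagonals to certify tightness. Secondary technicalities are handling rank-deficient factors via padding or a limiting argument to guarantee $\range(\hat Y) \subseteq \range(\tilde U)$ even when $\hat Y \hat X$ has rank strictly less than $k$, and observing that for $f_1$ not of the form $x^p$ one would need a slightly modified subproblem (absorbing the additive $\norm{\tilde\Sigma}_{(p)}^p$ term inside $f_1$) for the argmins to coincide exactly.
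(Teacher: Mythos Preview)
Your approach is essentially the paper's: reduce to diagonal $\Sigma$ via the SVD of $A$, take an SVD of the product $YX$ (equivalently $\hat Y\hat X$), use Mirsky's inequality (the paper cites it as Lemma~\ref{lem Wie}) to show a diagonal product does at least as well in the Schatten term, and use the invariances of $f$ to pass to $k\times k$ factors.

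Two points where the paper's execution differs from yours and is cleaner:
\begin{itemize}
\item The rank-deficient case. Your claim $\bar U^\top \hat Y=\twomat{W}{0}$ needs $\range(\hat Y)\subseteq\range(\tilde U)=\range(\hat Y\hat X)$, which fails whenever $\hat X$ drops rank. A limiting argument is unnecessary here: the right fix is exactly the ``reduced by contractions'' hypothesis you have not yet used. Replace $\hat Y$ by $\tilde U\tilde U^\top\hat Y$ and $\hat X$ by $\hat X\tilde V\tilde V^\top$; these are contractions, so $f$ can only decrease, and the product $\hat Y\hat X$ is preserved (since $\tilde U\tilde U^\top$ is the projector onto $\range(\hat Y\hat X)$). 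After this replacement your block identity $\bar U^\top\hat Y=\twomat{W}{0}$ holds unconditionally. The paper does the same thing, phrased as projecting $W,Z$ by a diagonal $0/1$ matrix $P_D$ onto the support of $D$.
\item Alignment with the top block of $\Sigma$. You arrange this by taking $\tilde\Lambda$ sorted; the paper instead argues that permuting rows of $W$ and columns of $Z$ (orthogonal maps, so $f$ is unchanged) to place the nonzeros of $D=WZ$ against $\sigma_1,\ldots,\sigma_k$ can only decrease $\norm{D-\Sigma}_{(p)}$. Either route works.
\end{itemize}

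Your final caveat about general $f_1$ is well taken: after the reduction one is minimizing $f_1\bigl((\norm{WZ-\Sigma_k}_{(p)}^p+\norm{\Sigma-\Sigma_k}_{(p)}^p)^{1/p}\bigr)+f(W,Z)$, and the argmin agrees with that of $f_1(\norm{WZ-\Sigma_k}_{(p)})+f(W,Z)$ only when $f_1(x)=x^p$ (or $\rank A\le k$). The paper's proof glosses over the same point; all of its applications take $f_1(x)=x^2$ with $p=2$, where the issue disappears.
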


We will need a lemma.

\begin{lemma}[\cite{chatterjee2014matrix}, Thm 8.1]\label{lem Wie}
Let $A,B\in\R^{n\times d}$, $C\equiv A-B$, and vectors of singular values (in nonincreasing order)
$\sigma_A$, $\sigma_B$, $\sigma_C$. For any $p\in [1,\infty]$,
$\norm{\sigma_A - \sigma_B}_p \le \norm{\sigma_C}_p$.
\end{lemma}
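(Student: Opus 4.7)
The plan is to reduce this singular-value perturbation bound (Mirsky's theorem) to the analogous statement for eigenvalues of Hermitian matrices, namely the Lidskii--Wielandt--Mirsky inequality, and then invoke that directly.

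First I would reduce to the square case by padding. Singular values are invariant under appending zero rows or columns (as already noted in the paper via Lemma~\ref{lem padding}), so after padding I can assume $A,B,C\in\R^{N\times N}$ with $N=\max(n,d)$, with the same differences preserved. Next, pass to the Jordan--Wielandt (Hermitian) dilations
\[
\tilde A \equiv \begin{pmatrix} 0 & A \\ A^\top & 0 \end{pmatrix},\qquad
\tilde B \equiv \begin{pmatrix} 0 & B \\ B^\top & 0 \end{pmatrix},\qquad
\tilde C \equiv \begin{pmatrix} 0 & C \\ C^\top & 0 \end{pmatrix}.
\]
Then $\tilde A,\tilde B,\tilde C$ are symmetric, the dilation is linear so $\tilde A-\tilde B=\tilde C$, and the eigenvalues of $\tilde M$ (for $M\in\{A,B,C\}$) are exactly the $2N$ numbers $\pm\sigma_i(M)$, $i\in[N]$.

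Now I would apply the Hermitian Lidskii--Mirsky--Wielandt inequality: for Hermitian $H,K$ and any symmetric gauge function $\Phi$ on vectors of length equal to the dimension (in particular any $\ell_p$ norm for $p\in[1,\infty]$),
\[
\Phi\!\left(\lambda^{\downarrow}(H)-\lambda^{\downarrow}(K)\right)\ \le\ \Phi\!\left(\lambda(H-K)\right),
\]
where $\lambda^\downarrow$ denotes eigenvalues listed in nonincreasing order. This is a consequence of the majorization $\lambda^\downarrow(H)-\lambda^\downarrow(K)\prec\lambda(H-K)$ due to Lidskii, and can be proved via the Ky Fan $k$-norm characterization. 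Applying it with $H=\tilde A$, $K=\tilde B$: the sorted eigenvalue vector of $\tilde A$ is $(\sigma_{A,1},\ldots,\sigma_{A,N},-\sigma_{A,N},\ldots,-\sigma_{A,1})$, and similarly for $\tilde B$. Taking the $\ell_p$ norm of the coordinatewise difference, the top half contributes $\sum_i |\sigma_{A,i}-\sigma_{B,i}|^p$ and the bottom half contributes the same quantity, yielding $2^{1/p}\|\sigma_A-\sigma_B\|_p$ on the left. On the right, $\|\lambda(\tilde C)\|_p = 2^{1/p}\|\sigma_C\|_p$. The factors of $2^{1/p}$ cancel, giving the claim.

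The main obstacle is the bookkeeping around the pairing inside the Lidskii inequality: one must check that sorting the eigenvalues of $\tilde A$ and $\tilde B$ lines up $+\sigma_{A,i}$ with $+\sigma_{B,i}$ (and $-\sigma_{A,i}$ with $-\sigma_{B,i}$), rather than some cross-pairing that would only give a weaker bound. Because both dilations have the special block anti-diagonal form, the descending sort does produce exactly this pairing, so the collapse to $\|\sigma_A-\sigma_B\|_p$ on the left side is clean. The endpoint $p=\infty$ can be handled either by taking limits or by noting the Lidskii inequality holds for the spectral (Ky~Fan~$1$) norm directly. For a fully self-contained treatment one could instead cite Mirsky's 1960 theorem on singular values of a perturbation, or Theorem~8.1 of \cite{chatterjee2014matrix}, from which this lemma is stated as a specialization.
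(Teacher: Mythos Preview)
The paper does not give its own proof of this lemma; it is simply quoted as Theorem~8.1 of \cite{chatterjee2014matrix} and used as a black box in the proof of Theorem~\ref{thm reduce kxk}. So there is nothing to compare against on the paper's side.

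Your argument is the standard and correct derivation of Mirsky's singular-value perturbation theorem from the Hermitian Lidskii--Wielandt majorization via the Jordan--Wielandt dilation. The padding step is fine, the linearity of the dilation gives $\tilde A-\tilde B=\tilde C$, and your bookkeeping on the sorted eigenvalue vectors is right: since all singular values are nonnegative, the descending sort of $\{\pm\sigma_{M,i}\}$ is exactly $(\sigma_{M,1},\ldots,\sigma_{M,N},-\sigma_{M,N},\ldots,-\sigma_{M,1})$, so the coordinatewise difference in the top block is $\sigma_{A,i}-\sigma_{B,i}$ and in the bottom block is $-(\sigma_{A,i}-\sigma_{B,i})$ (in reversed order), and the $\ell_p$ norms on both sides pick up the same factor $2^{1/p}$. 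Ties and zero singular values cause no trouble since equal entries can be paired arbitrarily without changing the norm. The $p=\infty$ case is immediate from Weyl's inequality applied to the dilations, or by the limit argument you mention. Nothing is missing.
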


Note that $\norm{\sigma_A}_p$ is the Schatten $p$-norm $\norm{A}_{(p)}$.


\begin{proof}[Proof of Thm \ref{thm reduce kxk}]

Suppose $A$ has full SVD $A=U\Sigma V^\top$, and $U^\top YX V$ has full SVD $RDS^\top$,
and let $W\equiv R^\top U^\top Y$ and $Z\equiv XVS$, so that $WZ = D$.
Then the invariance properties of $\nm{(p)}$ and $f(\cdot,\cdot)$ imply
\begin{align*}
f_1(\norm{YX - A}_{(p)}) + f(Y,X)
	   & = f_1(\norm{U R W Z S^\top V^\top - U \Sigma V^\top}_{(p)}) + f(U R W,  Z S^\top V^\top)
	\\ & = f_1(\norm{R W Z S^\top - \Sigma}_{(p)}) + f(W, Z) 
	\\ & = f_1(\norm{RDS^\top - \Sigma}_{(p)}) + f(W,Z).
\end{align*}
So the objective function is no larger at $W, Z$ than at $Y,X$ if
$\norm{WZ - \Sigma}_{(p)} = \norm{D-\Sigma}_{(p)}\le \norm{RDS^\top - \Sigma}_{(p)}$.
We apply Lemma~\ref{lem Wie}, with $A$ of the lemma mapped to $RDS^\top$ and $B$ to $\Sigma$,
and use the the relation of the Schatten norm to the vector $p$-norm.
The bound follows, 
and we can assume that $WZ$ is a diagonal matrix $D$.

Since $D$ has rank at most $k$, it has at most $k$ nonzero entries; we will assume
$\rank(D)=k$, but similar arguments go through for $\rank(D)<k$.
Let $P_D$ have ones where $D$ is nonzero,
and zeros otherwise. Then $P_DW$ is the projection of $W$ onto the rowspace of $D$, and $Z P_D$ is the 
projection of $Z$ onto $D$'s columnspace. Since $f(\cdot,\cdot)$ is appropriately reduced by
contractions, and $P_DWZ P_D = D$,
we can assume that all but at most $k$ rows of $W$ and columns of $Z$ are zero. Removing these
zero rows and columns, we have $k\times k$ matrices $D$, $W$, $Z$, and $\Sigma$,
and $W$ and $Z$ are invertible. \Ken{yes, invertible?}
(Here we use padding invariance, but only to extend $f$ to smaller matrices.)

Since the rows of $W$ can be swapped by multiplying by an orthogonal matrix on the left,
and the columns of $Z$ via an orthogonal matrix on the right, the nonzero entries of $D=WZ$
can be moved to correspond to the $k$ largest diagonal entries of $\Sigma$ without changing $f(W,Z)$,
and such moves can only decrease $\norm{D-\Sigma}_{(p)}$.
\end{proof}

We sharpen this result for the case that the regularization term comes from
orthogonally invariant norms.

\begin{theorem}\label{thm reduce kxk diag}
Consider \eqref{eq lowr generer} when $f(\cdot,\cdot)$ has the form $\hat f(\norm{Y}_\ell, \norm{X}_r)$, where 
$\nm{\ell}$ and $\nm{r}$ are orthogonally invariant, and $\hat f :\R\times \R \mapsto \R$ increasing in each argument.
Suppose in that setting there is a procedure that solves \eqref{eq lowr generer} when $A$, $Y$, and $X$ are diagonal matrices, taking time
$\tau(r)$ for a function $\tau(\cdot)$, with $r\equiv\rank(A)$.
Then for general $A$, \eqref{eq lowr generer} can be solved by finding the 
SVD of $A$, and applying the given procedure to $k\times k$ diagonal matrices, taking altogether time 
$O(nd\min\{n,d\}) + \tau(k)$.
\end{theorem}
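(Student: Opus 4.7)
The plan is to extend the reduction of Theorem \ref{thm reduce kxk} one step further, exploiting that $f(Y,X) = \hat f(\norm{Y}_\ell, \norm{X}_r)$ where $\nm{\ell}$ and $\nm{r}$ are orthogonally invariant. Since orthogonally invariant norms are padding invariant (Lemma~\ref{lem padding}), both left and right orthogonally invariant, and reduced by contractions (Lemma~\ref{lem norm proj}), and $\hat f$ is monotone in each argument, the hypotheses of Theorem~\ref{thm reduce kxk} hold with $f_1 = \hat f_1$ the identity times the scalar term. Invoking that theorem via the SVD $A = U\Sigma V^\top$ reduces the problem to
\[
\min_{\substack{W, Z \in \R^{k\times k}\\ WZ \text{ diagonal}}} f_1(\norm{WZ - \Sigma_k}_{(p)}) + \hat f(\norm{W}_\ell, \norm{Z}_r).
\]

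The key new step is to show this minimum is attained with $W$ and $Z$ themselves diagonal. Given any feasible $W, Z$ with $WZ = D = \mathrm{diag}(d_1, \ldots, d_k)$, I would construct diagonal $\tilde W, \tilde Z$ with $\tilde W \tilde Z = D$ by setting $\tilde W_{ii} = \sigma_i(W)$ and $\tilde Z_{ii} = d_i/\sigma_i(W)$, where $\sigma_i(W)$ denote the singular values of $W$ in non-increasing order. (Zero singular values of $W$ force the corresponding $d_i$ to vanish, so the construction is well defined under the convention $0/0 = 0$.) By construction $\norm{\tilde W}_\ell = \norm{W}_\ell$, since $\tilde W$ shares the singular values of $W$ and $\nm{\ell}$ is orthogonally invariant. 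By Horn's log-majorization inequality $\prod_{i \le j} \sigma_i(WZ) \le \prod_{i \le j} \sigma_i(W)\sigma_i(Z)$ for every $j$, so $(\sigma_i(\tilde Z))_i = (|d_i|/\sigma_i(W))_i$ is log-majorized by $\sigma(Z)$; log-majorization implies weak majorization, and orthogonally invariant norms are monotone under weak majorization of singular values (being symmetric gauge functions), so $\norm{\tilde Z}_r \le \norm{Z}_r$. Since $\hat f$ is increasing in each argument, $\hat f(\norm{\tilde W}_\ell, \norm{\tilde Z}_r) \le \hat f(\norm{W}_\ell, \norm{Z}_r)$, while the Schatten term is unchanged since $\tilde W \tilde Z = D = WZ$. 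Thus the optimum in the $k \times k$ problem may be taken over diagonal $W, Z$.

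Finally, the assumed procedure solves this $k \times k$ problem with $A, Y, X$ all diagonal in time $\tau(k)$; the SVD of $A$ costs $O(nd\min\{n,d\})$. The solution lifts, as in Theorem~\ref{thm reduce kxk}, to $Y^* = U\twomat{W^*}{0_{(n-k)\times k}}$ and $X^* = \twomatr{Z^*}{0_{k\times (d-k)}}V^\top$, giving total time $O(nd\min\{n,d\}) + \tau(k)$.

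The main obstacle is the reduction to diagonal $W, Z$; the cleanest route is through Horn's log-majorization inequality combined with the characterization of orthogonally invariant norms as symmetric gauge functions of singular values. The diagonal choice $\tilde W_{ii} = \sigma_i(W), \tilde Z_{ii} = d_i/\sigma_i(W)$ essentially saturates the log-majorization inequality coordinatewise, so it cannot increase either norm, and the orthogonal invariance hypothesis (stronger than just being \piloi/\piroi) is exactly what is needed to equate the norm values with those of the diagonalized factors.
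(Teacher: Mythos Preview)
Your framing via Theorem~\ref{thm reduce kxk} and the final lifting are fine; the gap is in the diagonalization step. The construction $\tilde W_{ii}=\sigma_i(W)$, $\tilde Z_{ii}=d_i/\sigma_i(W)$ does \emph{not} in general satisfy $\norm{\tilde Z}_r\le\norm{Z}_r$. Take $k=2$, $D=\diag(1,4)$, and
\[
W=\begin{pmatrix}0&-1\\2&0\end{pmatrix},\qquad Z=W^{-1}D=\begin{pmatrix}0&2\\-1&0\end{pmatrix},
\]
so $\sigma(W)=(2,1)$ and $\sigma(Z)=(2,1)$. Your $\tilde W=\diag(2,1)$, $\tilde Z=\diag(1/2,4)$ has $\sigma(\tilde Z)=(4,1/2)$, hence $\norm{\tilde Z}_2=4>2=\norm{Z}_2$ and $\norm{\tilde Z}_F^2=65/4>5=\norm{Z}_F^2$. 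The Horn inequality you invoke controls $\prod_{i\le j}\sigma_i(D)/\sigma_i(W)$, where $\sigma_i(D)$ are the \emph{sorted} $|d_i|$; these are not the diagonal entries $|d_i|/\sigma_i(W)$ of your $\tilde Z$ unless you have aligned the orderings of $D$ and $\Sigma_W$, and even after sorting, the ratios $\sigma_i(D)/\sigma_i(W)$ need not be decreasing, so Horn does not yield log-majorization of $\sigma(\tilde Z)$ by $\sigma(Z)$.

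What is actually needed is a \emph{permutation} of the diagonal of $\Sigma_W$ chosen to match $D$: pair the $i$th largest entry of $\Sigma_W^{-1}$ with the $i$th smallest entry of $|D|$. The paper isolates exactly this as Lemma~\ref{lem R inc}, proved via an inequality of Wang--Xi for $\sigma(ERD)$ with $E,D$ diagonal and $R$ orthogonal, together with Ky Fan dominance. With that permutation $\pi$ one gets diagonal replacements $\pi(\Sigma_W)$ and $\pi(\Sigma_W^{-1})D$ whose product is still $D$, with $\norm{\pi(\Sigma_W)}_\ell=\norm{W}_\ell$ and $\norm{\pi(\Sigma_W^{-1})D}_r\le\norm{\Sigma_W^{-1}U_W^\top D}_r=\norm{Z}_r$. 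Your overall plan is right, but the pairing must be chosen adaptively in $D$, and Horn alone is not the tool that delivers it.
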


We will need a lemma.

\begin{lemma}\label{lem R inc}
If $E,D,R\in\R^{n\times n}$ with $D$ and $E$ diagonal, and $R$ orthogonal, for any orthogonally invariant norm $\nm{g}$, there is a permutation $\pi$ on $[n]$ so tha
 $\norm{\pi(E) D}_g \le \norm{E R D}_g$, where $\pi(E)_{i,i} \equiv E_{\pi(i),\pi(i)}$.
\end{lemma}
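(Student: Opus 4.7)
The approach is to reduce the claim to a weak submajorization of singular values and then invoke classical Horn--Gelfand--Naimark inequalities.

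First I will use orthogonal invariance to normalize. Conjugating $ERD$ by permutation matrices on either side (which are orthogonal and preserve $\|\cdot\|_g$) keeps $E$ and $D$ diagonal while only redefining the orthogonal factor; sign flips on the diagonals of $E$ and $D$ can likewise be absorbed into $R$. Hence WLOG $e_i \equiv E_{ii}\ge 0$ and $d_i \equiv D_{ii}\ge 0$ are sorted in decreasing order, with $R$ still orthogonal. I then take $\pi$ to be the reverse permutation $\pi(i)=n-i+1$, so that $\pi(E)D = \mathrm{diag}(e_{n-i+1}d_i)$, whose singular values are exactly these nonnegative numbers.

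Let $w$ denote the sorted-decreasing rearrangement of $(e_{n-i+1}d_i)_{i\in[n]}$. By the Ky Fan dominance theorem for orthogonally invariant norms, the conclusion $\|\pi(E)D\|_g \le \|ERD\|_g$ for every orthogonally invariant $g$ is equivalent to the weak submajorization
\[
\sum_{i=1}^k w_i \;\le\; \sum_{i=1}^k \sigma_i(ERD) \quad\text{for every } k \in [n].
\]
To establish this I will invoke the Gelfand--Naimark log-majorization inequality: with $A \equiv ER$ (whose singular values are the $e_i$, since $R$ is orthogonal) and $B \equiv D$ (singular values $d_i$), one has $\prod_{j=1}^k e_j d_{n-j+1} \le \prod_{j=1}^k \sigma_j(ERD)$ for every $k$. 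Combined with the broader family of Horn multiplicative inequalities over arbitrary increasing $k$-subsets, or alternatively with a Hardy--Littlewood--P\'olya rearrangement observation that the reverse pairing is majorization-minimal among all pairings of two sorted sequences, this strengthens to weak log-majorization of $w$ by $\sigma(ERD)$. Weyl's majorant theorem then converts log-majorization of nonnegative sequences into the additive weak majorization required above.

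The main obstacle will be passing from the Gelfand--Naimark inequality, which controls $\prod_{j\le k}$ of the reverse-paired products \emph{in their natural index order}, to the sorted version which controls $\prod_{j\le k} w_j$, where $w$ picks out the \emph{top-$k$} subset. I plan to bridge this gap either by invoking the subset form of the Horn inequalities directly, or by noting that the reverse pairing minimizes every Ky Fan $k$-norm among all pairings (a rearrangement inequality), while the Frobenius case $k=n$ is separately assured by applying Birkhoff--von Neumann to the doubly stochastic matrix $|R|^2$, which exhibits $\|ERD\|_F^2$ as a convex combination of $\sum_i e_i^2 d_{\pi_\ell(i)}^2$ over permutations $\pi_\ell$.
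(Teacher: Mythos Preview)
Your overall architecture matches the paper's exactly: choose the reverse pairing, establish weak (additive) submajorization of the singular values of $\pi(E)D$ by those of $ERD$, and conclude via Ky Fan dominance. The difference is in how the submajorization is obtained. The paper does not pass through log-majorization at all; it cites directly an additive subset inequality of Wang (1992): for any index set $\mathcal{S}\subset[n]$ of size $k$,
\[
\sum_{i=1}^k \sigma_i(ERD)\;\ge\;\sum_{i\in\mathcal{S}} \sigma_i(E)\,\sigma_{n-i+1}(D),
\]
and then simply takes $\mathcal{S}$ to be the indices of the $k$ largest products $e_i d_{n-i+1}$. This is precisely the ``subset form'' inequality you allude to as your first bridge, and it closes the gap in one line without Gelfand--Naimark or Weyl's majorant theorem.

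Your detour through multiplicative (log-)majorization is where the proposal becomes shaky. The standard Gelfand--Naimark bound controls $\prod_{j\le k} e_j d_{n-j+1}$ in the \emph{natural} index order, and you correctly flag that upgrading this to control $\prod_{j\le k} w_j$ (the top-$k$ products) is the real obstacle. Of your proposed fixes, only the first (a subset-form inequality) actually works, and at that point you have essentially reproduced the paper's argument. The second fix---that the reverse pairing is majorization-minimal among all permutation pairings---even if true, only compares $\pi(E)D$ to other \emph{diagonal} products $\pi'(E)D$, not to $ERD$ for general orthogonal $R$, so it does not connect to $\sigma(ERD)$. The Birkhoff--von~Neumann observation handles only the Frobenius ($k=n$) case and gives nothing for intermediate $k$. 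So: drop the log-majorization layer and invoke the additive subset inequality directly, and you have the paper's proof.
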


\begin{proof}
The permutation $\pi$ we choose is the one that puts the $i$'th largest entry of $|E|$ with the $i$'th smallest entry of $|D|$.
Since the singular values of $E$ and $D$ are the nonzero entries of $|E|$ and $|D|$, this means that the singular values of $\pi(E)D$
have the form $\sigma_i(E)\sigma_{n-i+1}(D)$, where $\sigma_i(\cdot)$ denotes the $i$'th largest singular value.
We use an inequality of \cite{wang1992some}, page 117, which implies that for any $k\in [n]$
and ${\cal S}\subset [n]$ of size $k$,
$\sum_{i\in [k]} \sigma_i(ERD) \ge \sum_{i\in\cal S} \sigma_i(E)\sigma_{n-i+1}(D)$. Since $\cal S$ can be the set of
indices of the $k$ largest entries of $|\pi(E)|*|D|$, which are the $k$ largest singular values of $\pi(E)D$,
this implies that for all $k$, the sum of the $k$ largest singular values of $ERD$ is larger than the corresponding sum
for $\pi(E)D$.  Therefore by the Ky Fan dominance theorem \cite{fan1951maximum}, the lemma follows.
\end{proof}

\begin{proof}[Proof of Thm~\ref{thm reduce kxk diag}]
Following up on the proof of Theorem~\ref{thm reduce kxk},
it suffices to show that when $\nm{\ell}$ and $\nm{r}$ are orthogonally invariant,
it can be assumed that $W$ and $Z$ are diagonal matrices.

Let $W$ have the SVD $W=U_{W} \Sigma_{W} V_{W}^\top$. Then
$Z = W^{-1}D = V_{W} \Sigma_{W}^{-1} U_{W}^\top D$,
so that $\hat f(\norm{W}_\ell, \norm{Z}_r) = \hat f(\norm{\Sigma_{W}}_\ell, \norm{\Sigma_{W}^{-1} U_{W}^\top D}_r)$, using orthogonal
invariance. We now apply Lemma~\ref{lem R inc}, with $E$ of the lemma mapping to $\Sigma_{W}^{-1}$, $R$ to
$U_{W}^\top$, and $D$ to $D$. This yields a permutation $\pi$ on the entries of $\Sigma_{W}^{-1}$ so that
$\norm{\pi(\Sigma_{W}^{-1}) D}_r \le \norm{\Sigma_{W}^{-1} U_{W}^\top D}_r$, so that
the diagonal matrices $\pi(\Sigma_{W})$ and $\pi(\Sigma_{hW}^{-1})D$ have product $D$
and objective function value no larger than $W$ and $Z$; that is, without loss of generality,
$W$ and $Z$ are diagonal. Thus minimizing after obtaining the singular values $\Sigma$ of $A$,
the solution of $\norm{WZ - \Sigma}_F^2 + \hat f(\norm{W}_\ell, \norm{Z}_r)$ with $W$ and $Z$ diagonal
is sufficient to solve \eqref{eq lowr generer}.
\end{proof}

\begin{definition}[clipping to nonnegative $(\cdot)_+$]
For real number $a$, let $(a)_+$ denote $a$, if $a\ge 0$, and zero otherwise. For matrix $A$,
let $(A)_+$ denote coordinatewise application.
\end{definition}

\begin{corollary}\label{cor simple sol}
If the objective function in \eqref{eq lowr generer} is $\norm{YX-A}_F^2 + 2\lambda\norm{YX}_{(1)}$
or $\norm{YX-A}_F^2 + \lambda(\norm{Y}_F^2 +\norm{X}_F^2)$, then the diagonal matrices
$W^*$ and $Z^*$ from Theorem~\ref{thm reduce kxk diag} yielding the solution
are $W^*=Z^*=\sqrt{(\Sigma_k - \lambda\Iden_k)_+}$, where $\Sigma_k$ is the $k\times k$ diagonal matrix
of top $k$ singular values of $A$ \cite{UHZB}. 

If the objective function is $\norm{YX-A}_{(p)}+ \lambda\norm{YX}_{(1)}$ for $p\in [1,\infty]$,
then $W^*=Z^*=\sqrt{(\Sigma_k - \alpha\Iden_k)_+}$, for an appropriate value $\alpha$.

If the objective function is $\norm{YX-A}_F^2 + \lambda\norm{YX}_F^2$, then
$W^*=Z^*=\sqrt{\Sigma_k/(1+\lambda)}$.

\Ken{ etc. I think these are right. too bad they're boring.}
\end{corollary}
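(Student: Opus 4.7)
The plan is to apply Theorem~\ref{thm reduce kxk} (or its specialization Theorem~\ref{thm reduce kxk diag} when the regularizer has the factored form $\hat f(\norm{Y}_\ell,\norm{X}_r)$) in each case, reducing the problem to optimizing over $k\times k$ matrices with $A$ replaced by $\Sigma_k=\diag(\sigma_1,\ldots,\sigma_k)$ and with $D \equiv WZ$ diagonal. In all three regimes the resulting scalar problem across the diagonal entries is either already separable or can be reduced to a separable one, after which we pick the symmetric square-root factorization $w_i=z_i=\sqrt{d_i^*}$ (valid because the optimal $d_i^*$ turns out to be nonnegative).

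For the two objectives whose answer is $\sqrt{(\Sigma_k-\lambda\Iden_k)_+}$: in the nuclear-norm version $\norm{YX-A}_F^2+2\lambda\norm{YX}_{(1)}$ the regularizer depends only on $D$, so Theorem~\ref{thm reduce kxk} and the standard soft-thresholding of $\sum_i(d_i-\sigma_i)^2+2\lambda|d_i|$ give $d_i^*=(\sigma_i-\lambda)_+$; any factorization of $D$ then realizes the minimum and we pick the square root. For $\norm{YX-A}_F^2+\lambda(\norm{Y}_F^2+\norm{X}_F^2)$, Theorem~\ref{thm reduce kxk diag} forces $W,Z$ individually diagonal, and the arithmetic--geometric inequality $w_i^2+z_i^2\ge 2|w_iz_i|$ (equality iff $|w_i|=|z_i|$), together with the observation that $\sigma_i\ge 0$ makes the choice $w_iz_i>0$ better than $w_iz_i<0$, shows that $w_i=z_i$ is optimal; under $d_i=w_i^2$ the problem collapses to the same soft-thresholding. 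The pure-Frobenius case $\norm{YX-A}_F^2+\lambda\norm{YX}_F^2$ is a one-line calculus: $\tfrac{d}{dd_i}\bigl[(d_i-\sigma_i)^2+\lambda d_i^2\bigr]=0$ gives $d_i^*=\sigma_i/(1+\lambda)$, whence $w_i=z_i=\sqrt{\sigma_i/(1+\lambda)}$.

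The main obstacle is the Schatten-$p$ case $\norm{YX-A}_{(p)}+\lambda\norm{YX}_{(1)}$, because the unsquared residual norm couples the coordinates. After reducing to diagonal $D$ via Theorem~\ref{thm reduce kxk} and using Lemma~\ref{lem R inc} plus a monotonicity argument to conclude that $0\le d_i\le\sigma_i$ in the same order as the $\sigma_i$, the problem becomes $\min_d (\sum_i(\sigma_i-d_i)^p)^{1/p}+\lambda\sum_i d_i$. Setting $t_i\equiv\sigma_i-d_i\in[0,\sigma_i]$ and computing first-order conditions for $p>1$ at an interior $t_i$ gives $t_i^{p-1}/\norm{t}_p^{p-1}=\lambda$, forcing every interior $t_i$ to share a common value $\alpha$; combined with the box constraints this yields $t_i^*=\min(\alpha,\sigma_i)$ and hence $d_i^*=(\sigma_i-\alpha)_+$, with $\alpha$ determined implicitly by the one-variable equation $\alpha^{p-1}=\lambda\norm{t^*}_p^{p-1}$. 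The endpoint cases $p=1$ and $p=\infty$ are handled by direct inspection (the objective is piecewise linear in $t$ and the solution still has the form $d_i^*=(\sigma_i-\alpha)_+$ for an appropriate $\alpha\in\{0,\infty\}$ or an intermediate threshold, respectively). Taking $w_i=z_i=\sqrt{d_i^*}$ yields $W^*=Z^*=\sqrt{(\Sigma_k-\alpha\Iden_k)_+}$ as claimed.
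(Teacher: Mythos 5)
The paper gives no proof of this corollary (it is literally ``Omitted''), so there is nothing to compare against; your argument supplies exactly the intended route, reducing via Theorems~\ref{thm reduce kxk} and~\ref{thm reduce kxk diag} to separable scalar problems over the diagonal entries and solving each by soft-thresholding, AM--GM, calculus, or the KKT/water-filling analysis for the Schatten-$p$ case. All four derivations check out (including the observation that $0\le d_i\le\sigma_i$ in the Schatten-$p$ case and the endpoint cases $p\in\{1,\infty\}$), so your proposal is a correct and complete proof of the corollary.
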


\begin{proof} Omitted.
\end{proof}

\subsection{Reduction to a small problem via sketching}

\begin{theorem}\label{thm lowr gen sk}
Suppose there is a procedure that solves \eqref{eq lowr gen} when $A$, $Y$, and $X$ are $k\times k$ matrices,
and $A$ is diagonal, and $YX$ is constrained to be diagonal, taking time
$\tau(k)$ for a function $\tau(\cdot)$.
Let $f$ also inherit a sketching distribution on the left in its left argument, and on the right in its right argument.
Then for general $A$, there is an algorithm that finds $\eps$-approximate solution $(\tY, \tX)$ in time
\[
O(\nnz(A)) + \tO(n+d)\poly(k/\eps)  + \tau(k).
\] 
\end{theorem}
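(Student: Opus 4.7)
The plan mirrors the ridge case of Lemma~\ref{lem ZZ} and Theorem~\ref{thm ZZ}, replacing each appeal to the Frobenius regularizer by its general analog via the invariance, contraction, and inheritance properties of $f$. There are four phases: (i) a double row-space/column-space sketch that constrains an approximately optimal $(Y,X)$ to the forms $Y = AR\,Z_R$, $X = Z_S\,SA$; (ii) affine embeddings combined with sketching inheritance for $f$ to shrink every ambient dimension to $\poly(k/\eps)$; (iii) a QR change of variables plus matrix Pythagoras to expose a plain low-rank approximation instance; and (iv) an invocation of Theorem~\ref{thm reduce kxk} to reduce to the $k \times k$ diagonal problem solved by the hypothesized procedure in time $\tau(k)$.

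For phase~(i), fix an optimum $(Y^*,X^*)$ of \eqref{eq lowr gen}. The section $g(X) \equiv f(Y^*,X)$ inherits from $f$ the piroi property, right reduction by contractions, and right sketching inheritance, so Lemma~\ref{lem rowspace SB} (with its $A \mapsto Y^*$, $B \mapsto A$, $f \mapsto g$) yields, for any oblivious sparse embedding $S$ with $\poly(k/\eps)$ rows, a near-optimal $X$ of the form $Z_S\,SA$. Freezing this form of $X$ and optimizing over $Y$, the transpose version of the same lemma applied to $h(Y) \equiv f(Y, Z_S^* SA)$ (now using the piloi and left-inheritance properties) produces a near-optimal $Y = AR\,Z_R$ for a right sparse embedding $R$ with $\poly(k/\eps)$ columns. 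Since $S$ and $R$ are oblivious, each succeeds with constant probability independent of $(Y^*,X^*)$, reducing the problem at a cost of $(1+\eps)^{O(1)}$ to
\[
\min_{Z_R,Z_S}\;\norm{AR\,Z_R\,Z_S\,SA - A}_F^2 + f(AR\,Z_R,\; Z_S\,SA).
\]

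For phase~(ii), take a composed sparse-embedding/SRHT pair: an $S_2$ of $\poly(k/\eps)$ rows and an $R_2$ of $\poly(k/\eps)$ columns such that $S_2$ is simultaneously an affine $\eps$-embedding for $AR$ against $A$ and a Euclidean subspace $\eps$-embedding for $AR$, and likewise $R_2^\top$ against $(SA)^\top$ and $A^\top$. By the two-sided affine embedding argument of Theorem~\ref{thm ZZ}, the Frobenius term is preserved as $\norm{S_2(AR\,Z_R\,Z_S\,SA - A)R_2}_F^2$ up to a $(1\pm\eps)^2$ factor; by the sketching inheritance of $f$ in each argument (Lemma~\ref{lem UI->embed} on the left and its right analog), $f(S_2 AR\,Z_R,\;Z_S\,SA\,R_2) = (1\pm\eps)^2\,f(AR\,Z_R,\;Z_S\,SA)$. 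Setting $C \equiv S_2 AR$, $D \equiv SA\,R_2$, $G \equiv S_2 A\,R_2$, all of $\poly(k/\eps)$ size and jointly computable (together with $SA$ and $AR$) in $O(\nnz(A)) + \tO(n+d)\poly(k/\eps)$ time via composition of sparse embeddings and SRHT, the problem becomes
\[
\min_{Z_R,Z_S}\;\norm{C\,Z_R\,Z_S\,D - G}_F^2 + f(C\,Z_R,\; Z_S\,D).
\]

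For phases~(iii)--(iv), take thin QR factorizations $C = U_C R_C$ and $D^\top = U_D R_D$ with $U_C,U_D$ having orthonormal columns, set $\tY \equiv R_C Z_R$ and $\tX \equiv Z_S R_D^\top$, and use padding-and-orthogonal invariance of $f$ (extending $U_C,U_D$ to square orthogonal matrices) to obtain $f(C Z_R,\,Z_S D) = f(\tY,\tX)$. Matrix Pythagoras, exactly as in the proof of Lemma~\ref{lem boyd}, reduces the Frobenius term to $\norm{\tY\tX - U_C^\top G U_D}_F^2$ plus a constant, yielding an instance of \eqref{eq lowr generer} with target $\hat G \equiv U_C^\top G U_D$, $f_1(\cdot)=(\cdot)^2$, and $p=2$. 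Theorem~\ref{thm reduce kxk} now applies: a $\poly(k/\eps)$-time SVD of $\hat G$ passes to the $k\times k$ diagonal data $\Sigma_k$, the hypothesized procedure returns $(W^*,Z^*)$ in time $\tau(k)$, and back-substitution through $U_{\hat G},V_{\hat G},R_C,R_D,R,S$ recovers $(\tY,\tX) = (AR Z_R,\; Z_S SA)$ in an additional $\tO(n+d)\poly(k/\eps)$ time. The main obstacle throughout is that $f(Y,X)$ couples the two arguments, so the clean ``sketch the right dependence first, then the left'' reduction of the ridge case must be replayed one argument at a time, invoking at each step only the invariance and inheritance properties of the fixed section of $f$; once this coupling is unravelled, the remaining steps are a faithful generalization of the Frobenius-regularizer argument underlying Theorem~\ref{thm lowr ridge}.
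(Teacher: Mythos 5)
Your proposal follows essentially the same route as the paper's proof: a two-sided application of Lemma~\ref{lem rowspace SB} to constrain $X$ to $\rowspan(SA)$ and $Y$ to $\colspace(AR)$, a second round of affine embeddings together with the sketching-inheritance hypothesis for $f$ to shrink all dimensions to $\poly(k/\eps)$, an orthogonal change of basis plus matrix Pythagoras to expose a small instance of \eqref{eq lowr gen}, and finally the SVD reduction of Theorem~\ref{thm reduce kxk} to reach the $k\times k$ diagonal oracle. The only differences are presentational (you make the final invocation of Theorem~\ref{thm reduce kxk} explicit, and your description of which target matrices the second-round affine embeddings must be taken against is slightly loose, e.g.\ $S_2$ should be an affine embedding for the pair $(AR, A R_2)$ rather than $(AR, A)$), but these do not affect correctness.
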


\begin{proof}
We follow a sequence of reductions similar to those for Theorem~\ref{thm gen regul regr}, but on both sides.

Let $(Y^*, X^*)$ be an optimal solution pair:
\begin{equation}\label{eq lowr gener}
Y^*,X^*\equiv \argmin_{\substack{Y\in\R^{n\times k}\\ X\in\R^{k\times d}}} \norm{YX-A}_F^2 + f(Y,X),
\end{equation}
and $\Delta_* \equiv \norm{Y^*X^*-A}_F^2 + f(Y^*, X^*)$.

Let $S\in\R^{m_S\times n}$ be an affine $\eps$-embedding for $Y^*, A$ with respect to $\nm{F}$.
From Lemma~\ref{lem rowspace SB}, 
\[
Z^*\equiv\argmin_{Z\in\R^{k\times m_S}} \norm{Y^*ZSA - A}_F^2 + f(Y^*, ZSA)
\]
has
\[
\norm{(Y^*Z^*SA - A)}_F^2 + f(Y^*, Z^*SA)
	\le (1+\eps)\Delta_*.
\]
Now suppose $R\in\R^{d\times m_R}$ is a right affine $\eps$-embedding for $Z^*SA, A$ with respect
to $\nm{F}$. Then again by Lemma~\ref{lem rowspace SB}, applied on the right,
\[
W^* \equiv \argmin_{W\in\R^{m_R\times k}} \norm{ARW Z^* SA - A}_F^2 + f(ARW, Z^*SA)
\]
has
\[
\norm{(AR W^* Z^*SA - A)}_F^2 + f(ARW^*, Z^*SA)
	\le (1+\eps)^2\Delta_*.
\]
It doesn't hurt to find the best $W^*, Z^*$ simultaneously, so redefining them to be
\begin{equation}\label{eq bi-solve}
W^*, Z^* \equiv \argmin_{\substack{W\in\R^{m_R\times k}\\ Z\in\R^{k\times m_S}} }
	\norm{ARWZSA - A}_F^2 + f(ARW, ZSA)
\end{equation}
satisfies the same approximation property.

Suppose $\hR\in \R^{d\times m_{\hR}}$ comes from a sketching distribution yielding a right subspace $\eps$-embedding
with respect to the Euclidean norm for $SA$, so that by assumption, $\hR$ is a subspace $\eps$-embedding
on the right for $SA$ with respect to the right argument of $f(\cdot,\cdot)$.
Suppose also that $\hR^\top$ is an affine embedding for $(Z^*SA)^\top, A^\top$ with respect to the Frobenius norm.
Suppose $\hS$ is similarly a left subspace $\eps$-embedding for $AR$ with respect to the left argument
of $f(\cdot,\cdot)$,
and an affine
embedding on the left for $AR\tW, A\hR$ with respect to the Frobenius norm,
where $\tW$ is the solution to
$\min_{W\in\R^{m_R\times k}} \norm{ARW Z^* SA\hR - A\hR}_F^2 + f(ARW, Z^*SA\hR)$.
Then
\begin{equation}\label{eq lowr reduc}
\tW, \tZ \equiv \argmin_{\substack{W\in\R^{m_R\times k}\\ Z\in\R^{k\times m_S}} }
	\norm{\hS AR WZ SA\hR - \hS A\hR}_F^2 + f(\hS ARW, ZSA\hR)
\end{equation}
form a $(1+O(\eps))$-approximate solution to \eqref{eq bi-solve}, and therefore yield a $(1+O(\eps))$-approximate
solution to \eqref{eq lowr gener}.

We need to put the above into the form of \eqref{eq lowr gen}. Suppose $Q_\ell$ is an orthogonal basis
for $\colspace(\hS AR)$, and $Q_r^\top$ an orthogonal basis for $\rowspan(SA\hR)$.
Then any matrix of the form $\hS ARW$ can be written as $Q_\ell W_1$ for some $W_1\in\R^{\rank(SA\hR)\times k}$,
and similarly any matrix of the form $Z SA\hR$ can be written as $Z_1 Q_r^\top$ for
some $Z_1$. Thus solving \eqref{eq lowr reduc} is equivalent to solving
\[
\tW_1, \tZ_1 \equiv \argmin_{W_1, Z_1 }
	\norm{Q_\ell W_1 Z_1 Q_r^\top  - \hS A\hR}_F^2 + f(Q_\ell W_1 , Z_1 Q_r^\top).
\]
(We can recover $\tW$ and $\tZ$ from $\tW_1$ and $\tZ_1$ via back-solves with the 
triangular portions of change-of-basis matrices, and padding by zeros, as in Lemma~\ref{lem boyd}
and Theorem~\ref{thm gen regul regr}.)
Using the properties of $f(,)$ we have $ f(Q_\ell W_1 , Z_1 Q_r^\top) = f(W_1, Z_1)$.
Let $P_\ell\equiv Q_\ell Q_\ell^\top$, and $P_r \equiv Q_r Q_r^\top$.
Using $P_\ell(\Iden - P_\ell) = 0$ and $P_r(\Iden - P_r) = 0$ and matrix Pythagoras, we have
\begin{align*}
\norm{Q_\ell W_1 Z_1 Q_r^\top  - \hS A\hR}_F^2 & + f(Q_\ell W_1 , Z_1 Q_r^\top)
	\\ & = \norm{P_\ell Q_\ell W_1 Z_1 Q_r^\top P_r  - \hS A\hR}_F^2 +  f(W_1, Z_1)
	\\ & = \norm{P_\ell Q_\ell W_1 Z_1 Q_r^\top P_r  - P_\ell \hS A\hR P_r}_F^2 
	\\ & \qquad + \norm{(\Iden - P_\ell)  \hS A\hR}_F^2
		+ \norm{ P_\ell  \hS A\hR (\Iden - P_r) }_F^2
		+ f(W_1, Z_1)
\end{align*}
So we could equivalently minimize
\begin{align*}
\norm{P_\ell Q_\ell W_1 Z_1 Q_r^\top P_r  & - P_\ell \hS A\hR P_r}_F^2 
		+ f(W_1, Z_1)
	\\ &  = \norm{Q_\ell W_1 Z_1 Q_r^\top - Q_\ell Q_\ell^\top \hS A\hR Q_r Q_r^\top}_F^2 + f(W_1, Z_1)
	\\ & =  \norm{W_1 Z_1 - Q_\ell^\top \hS A\hR Q_r}_F^2 + f(W_1, Z_1),
\end{align*}
which has the form of \eqref{eq lowr gen}.

%
%
%
%

It remains to determine the sizes of $S$, $R$, $\hR$, and $\hS$, and the cost of their applications.
We use the staged construction
of Lemma~\ref{lem AE}, so each of these matrices is the product of a sparse embedding and an SHRT.
We have $m_R$ and $m_S$ both $\tO(k/\eps^2)$, and $m_{\hR}=m_{\hS}=\tO(k/\eps^4)$,
noting that we need $\hS$ to be a subspace $\eps$-embedding for $AR$, of rank $\tO(k/\eps^2)$,
and similarly for $\hR$. Moreover,
to compute $\hS AR$, $S A\hR$, and $\hS A \hR$, we can first apply the sparse embeddings 
on either side, and then the SHRT components, so that the cost of computing these
sketches is $O(\nnz(A)) + \tO(k^2/\eps^6)$. Since the remaining operations involve matrices
with $\tO(k/\eps^4)$ rows and columns, the total work, up to computing
$AR\tW$ and $\tZ SA$, is $O(\nnz(A)) + \tO(\poly(k/\eps)) + \tau(k)$. The work to compute those 
products is $O(n+d)\poly(k/\eps)$, as claimed.
\end{proof}


\section{Estimation of statistical dimension}\label{sec sd est}

\Ken{"In Progress"; as written is $\nnz(A)\log n$, but should be able to do better with one big sketch}

\begin{theorem}
If the statistical dimension $\sd_\lambda(A)$ is at most
\[
M\equiv \min\{n,d, \lfloor (n+d)^{1/3}/\poly(\log (n+d))\rfloor\},
\]
it can be estimated to within a constant factor in $O(\nnz(A))$ time,
with constant probability.
\end{theorem}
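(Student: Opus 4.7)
The plan is to base the algorithm on the identity $\sd_\lambda(A) = \|U_1\|_F^2$ (Lemma~\ref{lem U1 size}), where $U_1 U_1^\top = A(A^\top A+\lambda\Iden_d)^{-1}A^\top$, and to reduce estimation of this trace to a single approximate ridge-regression solve via a Hutchinson estimator. First I would handle the ``large $\lambda$'' regime of Def.~\ref{def large lam} directly: estimate $\sigma_1(A)^2$ to constant factor in $O(\nnz(A))$ time by a standard power-iteration sketch, and if $\sigma_1(A)^2\le\lambda/2$, then every $\sigma_i^2\le\lambda/2$, so $\sd_\lambda(A)\in[(2/3)\|A\|_F^2/\lambda,\,\|A\|_F^2/\lambda]$, a constant-factor estimate read off from $\|A\|_F^2$. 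Outside this regime, Lemma~\ref{lem lam large} gives $\sd_\lambda(A)=\Omega(1)$, so Hutchinson-style estimators with $O(1)$ samples will suffice.

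In the main regime, I would use the Hutchinson identity: for $h\sim N(0,\Iden_n)$, $\mathbb{E}[h^\top U_1 U_1^\top h]=\sd_\lambda(A)$, and since $\sigma_i(U_1)\le 1$, the variance is $2\sum_i\sigma_i(U_1)^4\le 2\,\sd_\lambda(A)$, so $O(1)$ samples suffice for a constant-factor estimate with constant probability. Via the closed-form value of ridge regression, $h^\top U_1 U_1^\top h=\|h\|^2-\min_{x\in\R^d}(\|Ax-h\|^2+\lambda\|x\|^2)$, the task reduces to $O(1)$ ridge-regression solves with random targets. I would invoke Theorem~\ref{thm reg stacked} with sketch dimension $\tO(M)$ (using the upper bound $\sd_\lambda(A)\le M$), then dispatch the reduced $\tO(M)\times d$ problem by the large-$d$ reduction of \S\ref{subsec large d} in $\tO(M^3)$ additional time. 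Since $M^3\le (n+d)/\poly(\log(n+d))$ and we may assume $\nnz(A)\ge\max(n,d)$ (after deleting zero rows or columns, which do not affect $\sd_\lambda$), the overhead is absorbed into $O(\nnz(A))$---except for one $\log(n+d)$ factor from the ridge-regression precision $\eps'=\Theta(\sd_\lambda(A)/n)$ required so that a $(1+\eps')$-approximate objective translates to additive error $O(\sd_\lambda(A))$ in the Hutchinson estimator.

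The hard part will be removing this residual $\log(n+d)$ factor, exactly as signaled by the author's ``in progress'' remark (``as written is $\nnz(A)\log n$, but should be able to do better with one big sketch''). The route I would pursue is to fold the $m=O(1)$ Hutchinson test vectors into a single sparse embedding $S\in\R^{m\times n}$: set $y_i=S_{i,:}^\top$ and observe that $\|SU_1\|_F^2=\sum_i y_i^\top U_1 U_1^\top y_i$ is still an unbiased estimator with variance $O(\sd_\lambda(A)^2/m)$ by an approximate matrix product analysis for $U_1$ (whose stable rank is $\sr(U_1)=\sd_\lambda(A)/\|U_1\|_2^2=O(\sd_\lambda(A))$ in the non-large-$\lambda$ regime, by Lemma~\ref{lem lam large}); the same sketch of $A$ then serves all $m$ reduced regressions. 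Two technical steps remain: (i) a sharp variance bound for this CountSketch-based Hutchinson variant, piggybacking on \cite{CNW}, and (ii) an error analysis showing that constant relative precision in each reduced solve translates to constant relative precision in the overall estimate, bypassing the $\|S\|_F^2-\sum\mathrm{OPT}_i$ cancellation that otherwise forces $1/\poly(n+d)$ solver precision.
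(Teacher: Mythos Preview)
Your approach is genuinely different from the paper's, but it has a fundamental gap, not merely a missing logarithmic factor. The Hutchinson identity you rely on,
\[
h^\top U_1 U_1^\top h \;=\; \|h\|^2 - \Delta_*(h), \qquad \Delta_*(h)\equiv\min_x \|Ax-h\|^2+\lambda\|x\|^2,
\]
subtracts two quantities each of size $\Theta(n)$ (indeed $\E[\Delta_*(h)] = n - \sd_\lambda(A)$ for $h\sim N(0,\Iden_n)$) to recover one of size $\sd_\lambda(A)\le M\ll n$. A $(1+\eps)$-approximation to $\Delta_*(h)$ therefore contributes additive error $\eps\,\Delta_*(h)=\Theta(\eps n)$ to your estimator, so you need $\eps = O(\sd_\lambda(A)/n)$. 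Plugging this into Theorem~\ref{thm reg stacked} gives sketch dimension $\tO(\sd_\lambda(A)/\eps)=\tO(n)$, which is no reduction at all. Your second paragraph instead sets sketch dimension $\tO(M)$, which corresponds to \emph{constant} $\eps$; the sketched problem is then only a constant-factor approximation of the original ridge objective, and solving it to arbitrary further precision cannot recover the lost accuracy. The claim that the precision requirement costs ``one $\log(n+d)$ factor'' is simply a miscalculation: it costs a factor of $n/\sd_\lambda(A)$. The same obstruction blocks the ``one big sketch'' variant in your last paragraph, since for a sparse embedding $\|S\|_F^2=n$ exactly and you again face the $n-\Theta(n)$ cancellation; your step~(ii) is not a technicality but the entire difficulty, and you propose no mechanism to overcome it. (Computing $y_i^\top A\tilde x_i$ directly does not help either: the error $|y_i^\top A(\tilde x_i - x^*_i)|\le \|y_i\|\,\|A(\tilde x_i-x^*_i)\|\le \sqrt{\eps}\,\|y_i\|^2$ is still $\Theta(\sqrt{\eps}\,n)$.)

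The paper avoids cancellation altogether by taking a completely different route. It invokes the singular-value estimation machinery of \cite{CEMMP} (their Lemma~18, building on \cite{AN}) to obtain, for each $z\le 6M$, a constant-relative-error estimate $\hat\gamma_z$ of the tail $\|A_{-z}\|_F^2$ in $O(\nnz(A))+\tO(z^3)$ time. It then uses the elementary sandwich
\[
\tfrac12\min\{z,\|A_{-z}\|_F^2/\lambda\} \;\le\; \sd_\lambda(A) \;\le\; z+\|A_{-z}\|_F^2/\lambda,
\]
valid for every $z$, together with a doubling search $z=2^j$ to find the smallest $z'\le 6M$ with $z'\ge\hat\gamma_{z'}/\lambda$; combining the lower bound at $z'/2$ with the upper bound at $z'$ shows that $z'+\hat\gamma_{z'}/\lambda$ is a constant-factor estimate of $\sd_\lambda(A)$. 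There are no ridge-regression solves and no trace estimation.
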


\begin{proof}
From Lemma~18 of \cite{CEMMP}, generalizing the machinery of \cite{AN}, the first $z$ squared singular values of $A$ can be estimated up to additive
$\frac{\eps}{z}\norm{A_{-z}}_F^2$ in time $O(\nnz(A)) + \tO(z^3/\poly(\eps))$, where $A_{-z} \equiv A - A_z$ denotes the residual error of the best rank-$z$ approximation $A_z$ to $A$. Therefore $\norm{A_z}_F^2$ can be estimated up to additive $\eps\norm{A_{-z}}_F^2$,
and the same for $\norm{A_{-z}}_F^2$. This implies that for small enough constant $\eps$, $\norm{A_{-z}}_F^2$ can be estimated up to
constant relative error, using the same procedure.

Thus in $O(\nnz(A))$ time, the first $6M$ singular values of $A$ can be estimated up to additive
$\frac{1}{6M}\norm{A_{-6M}}_F^2$ error, and there is an estimator $\hat\gamma_z$ 
of $\norm{A_{-z}}_F^2$ up to relative error $1/3$, for $z\in[6M]$.

Since $1/(1+\lambda/\sigma_i^2) \le \min\{1, \sigma_i^2/\lambda\}$, for any $z$ the summands of $\sd_\lambda(A)$ for $i\le z$ are at most 1,
while those for $i > z$ are at most $\sigma_i^2/\lambda$, and so $\sd_\lambda(A) \le z + \norm{A_{-z}}_F^2/\lambda$.

When $\sigma_z^2 \le \lambda$, the summands of $\sd_\lambda(A)$ for
$i\ge z$ are at least $\frac12\frac{\sigma_i^2}{\lambda}$, and so $\sd_\lambda(A)\ge \frac12\norm{A_{-z}}_F^2/\lambda$.
When $\sigma_z^2 \ge \lambda$,
the summands of $\sd_\lambda(A)$ for $i\le z$ are at least $1/2$. Therefore 
$\sd_\lambda(A)\ge \frac12\min\{z, \norm{A_{-z}}_F^2/\lambda\} $.

Under the constant-probability assumption that $\hat\gamma_z=(1\pm 1/3)\norm{A_{-z}}_F^2$, we have
\begin{equation}\label{eq sd bounds}
\frac38\min\{z, \hat\gamma_z/\lambda\} \le \sd_\lambda(A) \le \frac32(z + \hat\gamma_z/\lambda).
\end{equation}

Let $z'$ be the smallest $z$ of the form $2^j$ for $j=0,1,2,\ldots$, with $z'\le 6M$, such that $z' \ge \hat\gamma_{z'}/\lambda$.
Since $M\ge \sd_\lambda(A) \ge \frac38 z$ for $z\le \hat\gamma_z/\lambda$, there must be such a $z'$.
Then by considering the lower bound of \eqref{eq sd bounds} for $z'$ and for $z'/2$, we have
$\sd_\lambda(A) \ge \frac38\max\{z'/2, \hat\gamma_{z'}/\lambda\}\ge \frac1{16}(z' + \hat\gamma_{z'}/\lambda)$,
which combined with the upper bound of \eqref{eq sd bounds} implies that
$z' + \hat\gamma_{z'}/\lambda$ is an estimator of $\sd_\lambda(A)$ up to a constant factor.
\end{proof}

\ifJOURNAL
\bibliographystyle{ACM-Reference-Format-Journals}
\fi

\ifJOURNAL
\received{November 2013}{0}{0}

\else\ifSUB
\section{Omitted Proofs}
\subsection{Proof of Lemma~\ref{lem reg}}\label{subsec lemregproof}
\lemregproof
\subsection{Proof of Lemma~\ref{lem U1 size}}\label{subsec lemUonesizeproof}
\lemUonesizeproof
\subsection{Proof of Lemma~\ref{lem lam large}}\label{subsec lemlamlargeproof}
\lemlamlargeproof
\subsection{Proof of Corollary~\ref{cor size of S}}\label{subsec corsizeofSproof}
\corsizeofSproof
\subsection{Proof of Theorem~\ref{thm reg stacked}}\label{subsec thmregstacked}
\thmregstackedproof
\subsection{Proof of Theorem~\ref{thm:twoProp}}\label{subsec thmtwoPropproof}
\thmtwoPropproof

\fi

\section*{Acknowledgments}

The authors acknowledge the support from the XDATA program of the Defense
Advanced Research Projects Agency (DARPA), administered through Air
Force Research Laboratory contract FA8750-12-C-0323.

\bibliographystyle{alpha}
\bibliography{p}

\newcommand{\etalchar}[1]{$^{#1}$}
\begin{thebibliography}{CDlTCB13}

\bibitem[ABTZ14]{ABTZ14}
Haim Avron, Christos Boutsidis, Sivan Toledo, and Anastasios Zouzias.
\newblock Efficient dimensionality reduction for canonical correlation
  analysis.
\newblock {\em SIAM Journal on Scientific Computing}, 36(5):S111--S131, 2014.

\bibitem[AC06]{AC06}
Nir Ailon and Bernard Chazelle.
\newblock Approximate nearest neighbors and the fast johnson-lindenstrauss
  transform.
\newblock In {\em ACM Symposium on Theory of Computing (STOC)}, 2006.

\bibitem[ACW16]{ACW16}
Haim Avron, Kenneth~L. Clarkson, and David~P. Woodruff.
\newblock Faster kernel ridge regression using sketching and preconditioning.
\newblock {\em CoRR}, abs/1611.03220, 2016.

\bibitem[AN13]{AN}
Alexandr Andoni and Huy~L. Nguyen.
\newblock Eigenvalues of a matrix in the streaming model.
\newblock In {\em Proceedings of the Twenty-Fourth Annual ACM-SIAM Symposium on
  Discrete Algorithms}, pages 1729--1737. Society for Industrial and Applied
  Mathematics, 2013.

\bibitem[BDN15]{bdn15}
Jean Bourgain, Sjoerd Dirksen, and Jelani Nelson.
\newblock Toward a unified theory of sparse dimensionality reduction in
  euclidean space.
\newblock In {\em Proceedings of the Forty-Seventh Annual {ACM} on Symposium on
  Theory of Computing, {STOC} 2015, Portland, OR, USA, June 14-17, 2015}, pages
  499--508, 2015.

\bibitem[BG73]{BG73}
A.~Bj{\"o}rck and G.H. Golub.
\newblock Numerical methods for computing angles between linear subspaces.
\newblock {\em Mathematics of Computation}, 27(123):579--594, 1973.

\bibitem[BG12]{bg12}
C.~{Boutsidis} and A.~{Gittens}.
\newblock {Improved matrix algorithms via the Subsampled Randomized Hadamard
  Transform}.
\newblock {\em ArXiv e-prints}, March 2012.

\bibitem[CDlTCB13]{cabral2013unifying}
Ricardo Cabral, Fernando De~la Torre, Jo{\~a}o~P Costeira, and Alexandre
  Bernardino.
\newblock Unifying nuclear norm and bilinear factorization approaches for
  low-rank matrix decomposition.
\newblock In {\em Computer Vision (ICCV), 2013 IEEE International Conference
  on}, pages 2488--2495. IEEE, 2013.

\bibitem[CEM{\etalchar{+}}14]{CEMMP}
M.~B. {Cohen}, S.~{Elder}, C.~{Musco}, C.~{Musco}, and M.~{Persu}.
\newblock {Dimensionality Reduction for k-Means Clustering and Low Rank
  Approximation}.
\newblock {\em ArXiv e-prints}, October 2014.

\bibitem[Cha14]{chatterjee2014matrix}
Sourav Chatterjee.
\newblock Matrix estimation by universal singular value thresholding.
\newblock {\em The Annals of Statistics}, 43(1):177--214, 2014.

\bibitem[CLL{\etalchar{+}}15]{CLLKZ}
Shouyuan Chen, Yang Liu, Michael Lyu, Irwin King, and Shengyu Zhang.
\newblock Fast relative-error approximation algorithm for ridge regression.
\newblock In {\em 31st Conference on Uncertainty in Artificial Intelligence},
  2015.

\bibitem[CNW15]{CNW}
Michael~B. Cohen, Jelani Nelson, and David~P. Woodruff.
\newblock Optimal approximate matrix product in terms of stable rank.
\newblock {\em CoRR}, abs/1507.02268, 2015.

\bibitem[Coh16]{c16}
Michael~B. Cohen.
\newblock Nearly tight oblivious subspace embeddings by trace inequalities.
\newblock In {\em Proceedings of the Twenty-Seventh Annual {ACM-SIAM} Symposium
  on Discrete Algorithms, {SODA} 2016, Arlington, VA, USA, January 10-12,
  2016}, pages 278--287, 2016.

\bibitem[CW13]{cw13}
Kenneth~L Clarkson and David~P Woodruff.
\newblock Low rank approximation and regression in input sparsity time.
\newblock In {\em STOC}, 2013.
\newblock Full version at {\tt http://arxiv.org/abs/1207.6365}.

\bibitem[DMM06]{DMM06a}
P.~Drineas, M.W. Mahoney, and S.~Muthukrishnan.
\newblock Sampling algorithms for $\ell_2$ regression and applications.
\newblock In {\em Proceedings of the 17th Annual ACM-SIAM Symposium on Discrete
  Algorithms}, pages 1127--1136, 2006.

\bibitem[DMMS07]{DMMS07}
P.~Drineas, M.W. Mahoney, S.~Muthukrishnan, and T.~Sarlos.
\newblock Faster least squares approximation, Technical Report,
  arXiv:0710.1435, 2007.

\bibitem[DMMW12]{mdmw12}
Petros Drineas, Michael~W. Mahoney, Malik Magdon{-}Ismail, and David~P.
  Woodruff.
\newblock Fast approximation of matrix coherence and statistical leverage.
\newblock In {\em Proceedings of the 29th International Conference on Machine
  Learning, {ICML} 2012, Edinburgh, Scotland, UK, June 26 - July 1, 2012},
  2012.

\bibitem[DST06]{domon2006}
Masumi Domon, Takashi Sano, and Tomohide Toba.
\newblock Left unitarily invariant norms on matrices.
\newblock {\em Nihonkai Math. J.}, 17(1):69--75, 2006.

\bibitem[DZHZ06]{dzhz06}
Chris H.~Q. Ding, Ding Zhou, Xiaofeng He, and Hongyuan Zha.
\newblock \emph{R}\({}_{\mbox{1}}\)-{PCA}: rotational invariant
  \emph{L}\({}_{\mbox{1}}\)-norm principal component analysis for robust
  subspace factorization.
\newblock In {\em Machine Learning, Proceedings of the Twenty-Third
  International Conference {(ICML} 2006), Pittsburgh, Pennsylvania, USA, June
  25-29, 2006}, pages 281--288, 2006.

\bibitem[EAM14]{AM14}
Ahmed El~Alaoui and Michael~W Mahoney.
\newblock Fast randomized kernel methods with statistical guarantees.
\newblock {\em stat}, 1050:2, 2014.

\bibitem[EI95]{EI95}
S.~Eisenstat and I.~Ipsen.
\newblock Relative perturbation techniques for singular value problems.
\newblock {\em SIAM Journal on Numerical Analysis}, 32:1972--1988, 1995.

\bibitem[Fan51]{fan1951maximum}
Ky~Fan.
\newblock Maximum properties and inequalities for the eigenvalues of completely
  continuous operators.
\newblock {\em Proceedings of the National Academy of Sciences of the United
  States of America}, 37(11):760, 1951.

\bibitem[FGKS14]{FGKS}
R.~{Frostig}, R.~{Ge}, S.~M. {Kakade}, and A.~{Sidford}.
\newblock {Competing with the Empirical Risk Minimizer in a Single Pass}.
\newblock {\em ArXiv e-prints}, December 2014.
\newblock Appeared in COLT 2015.

\bibitem[FGKS15]{FGKS_ICML}
R.~Frostig, R.~Ge, S.~M. Kakade, and A.~Sidford.
\newblock Un-regularizing: approximate proximal point and faster stochastic
  algorithms for empirical risk minimization.
\newblock In {\em International Conference on Machine Learning (ICML)}, 2015.

\bibitem[GZ95]{GZ95}
Gene~H. Golub and Hongyuan Zha.
\newblock The canonical correlations of matrix pairs and their numerical
  computation.
\newblock In Adam Bojanczyk and George Cybenko, editors, {\em Linear Algebra
  for Signal Processing}, volume~69 of {\em The IMA Volumes in Mathematics and
  its Applications}, pages 27--49. Springer New York, 1995.

\bibitem[HJ94]{horn1994topics}
R.A. Horn and C.R. Johnson.
\newblock {\em Topics in Matrix Analysis}.
\newblock Cambridge University Press, 1994.

\bibitem[HJ13]{HJ13}
R.~A. Horn and C.~R. Johnson.
\newblock {\em Matrix Analysis (Second Edition)}.
\newblock Cambridge University Press, 2013.

\bibitem[HTF13]{HTF}
Trevor Hastie, Robert Tibshirani, and Jerome Friedman.
\newblock {\em The Elements of Statistical Learning}.
\newblock 2013.

\bibitem[LDFU13]{LDFU}
Yichao Lu, Paramveer Dhillon, Dean~P Foster, and Lyle Ungar.
\newblock Faster ridge regression via the subsampled randomized hadamard
  transform.
\newblock In {\em Advances in Neural Information Processing Systems}, pages
  369--377, 2013.

\bibitem[Mah11]{Mbook}
Michael~W. Mahoney.
\newblock Randomized algorithms for matrices and data.
\newblock {\em Found. Trends Mach. Learn.}, 3(2):123--224, February 2011.

\bibitem[MM13]{mm13}
Xiangrui Meng and Michael~W. Mahoney.
\newblock Low-distortion subspace embeddings in input-sparsity time and
  applications to robust linear regression.
\newblock In {\em STOC}, pages 91--100, 2013.

\bibitem[NN13]{nn13}
Jelani Nelson and Huy~L. Nguyen.
\newblock {OSNAP}: Faster numerical linear algebra algorithms via sparser
  subspace embeddings.
\newblock In {\em FOCS}, pages 117--126, 2013.

\bibitem[PW14]{PW14}
Mert Pilanci and Martin~J. Wainwright.
\newblock Randomized sketches of convex programs with sharp guarantees.
\newblock {\em CoRR}, abs/1404.7203, 2014.

\bibitem[Sar06]{S06}
T.~Sarl\'os.
\newblock Improved approximation algorithms for large matrices via random
  projections.
\newblock In {\em IEEE Symposium on Foundations of Computer Science (FOCS)},
  2006.

\bibitem[SS05]{srebro2005rank}
Nathan Srebro and Adi Shraibman.
\newblock Rank, trace-norm and max-norm.
\newblock In {\em Learning Theory}, pages 545--560. Springer, 2005.

\bibitem[Tro11]{Tro11}
Joel Tropp.
\newblock Improved analysis of the subsampled randomized hadamard transform.
\newblock {\em Adv. Adapt. Data Anal., special issue, ``Sparse Representation
  of Data and Images}, 2011.

\bibitem[UHZB14]{UHZB}
M.~{Udell}, C.~{Horn}, R.~{Zadeh}, and S.~{Boyd}.
\newblock {\href{http://arxiv.org/abs/1410.0342}{Generalized Low Rank Models}}.
\newblock {\em ArXiv e-prints}, October 2014.

\bibitem[Woo14]{Wbook}
David~P. Woodruff.
\newblock Sketching as a tool for numerical linear algebra.
\newblock {\em Foundations and Trends® in Theoretical Computer Science},
  10(1–2):1--157, 2014.

\bibitem[WZ92]{wang1992some}
Boying Wang and Fuzhen Zhang.
\newblock Some inequalities for the eigenvalues of the product of positive
  semidefinite hermitian matrices.
\newblock {\em Linear algebra and its applications}, 160:113--118, 1992.

\bibitem[YLU13]{ldfu13}
Dean~Foster Yichao~Lu, Paramveer~Dhillon and Lyle Ungar.
\newblock Faster ridge regression via the subsampled randomized hadamard
  transform.
\newblock In {\em Proceedings of the Neural Information Processing Systems
  (NIPS) Conference}, 2013.

\bibitem[YMM16]{YMM16}
Jiyan Yang, Xiangrui Meng, and M.W. Mahoney.
\newblock Implementing randomized matrix algorithms in parallel and distributed
  environments.
\newblock {\em Proceedings of the IEEE}, 104(1):58--92, Jan 2016.

\bibitem[YPW15]{YPW}
Y.~{Yang}, M.~{Pilanci}, and M.~J. {Wainwright}.
\newblock {Randomized sketches for kernels: Fast and optimal non-parametric
  regression}.
\newblock {\em ArXiv e-prints}, January 2015.

\end{thebibliography}

\end{document}